\title{Tight Revenue Gaps among Simple Mechanisms\footnote{A preliminary version of this paper appeared in SODA 2019.}}
\author{
Yaonan Jin\thanks{Columbia University; work done in part as a student at IEDA, HKUST. {\tt jin.yaonan@columbia.edu}.}
\and Pinyan Lu\thanks{ITCS, Shanghai University of Finance and Economics. {\tt
lu.pinyan@mail.shufe.edu.cn}.}
\and
Zhihao Gavin Tang\thanks{ITCS, Shanghai University of Finance and Economics. {\tt tang.zhihao@mail.shufe.edu.cn}.}
\and
Tao Xiao\thanks{Department of Computer Science, Shanghai Jiao Tong University. {\tt xt\_1992@sjtu.edu.cn}.}
}
\date{}
\newcolumntype{M}[1]{>{\centering\arraybackslash}m{#1}}
\newcolumntype{N}{@{}m{0pt}@{}}
\newcommand{\Xcomment}[1]{{}}
\def\E{\mathbf E}
\def\Prob{\mathbf{Pr}}
\newcommand{\dd}{\mathrm{d}}    
\newcommand{\lhs}{\mathrm{LHS}} 
\newcommand{\rhs}{\mathrm{RHS}} 
\newcommand{\given}{\;\mid\;}
\newcommand{\Int}[2]{\bigintsss_{\;\;#1}^{#2}}
\newcommand{\argmin}{\mathop{\rm argmin}}
\newcommand{\argmax}{\mathop{\rm argmax}}
\newcommand{\eqdef}{\stackrel{\textrm{\tiny def}}{=}}
\newcommand{\eps}{\varepsilon}
\newcommand{\RR}{\mathbb{R}}
\newcommand{\RRP}{\RR_{\geq 0}}
\newcommand{\NN}{\mathbb{N}}
\newcommand{\NNP}{\NN_{\geq 1}}
\newcommand{\opt}{{\sf OPT}}
\newcommand{\spm}{{\sf SPM}}
\newcommand{\opm}{{\sf OPM}}
\newcommand{\ar}{{\sf AR}}
\newcommand{\ap}{{\sf AP}}
\newcommand{\reg}{\textsc{Reg}}
\newcommand{\tri}{\textsc{Tri}}
\newcommand{\C}{{\cal C^*}}
\newcommand{\R}{{\cal R}}
\newcommand{\Q}{{\cal Q}}
\newcommand{\V}{{\cal V}}
\newcommand{\FF}{\overline{F}}
\newcommand{\rr}{\overline{r}}
\newcommand{\vv}{\overline{v}}
\newcommand{\qq}{\overline{q}}
\newcommand{\first}{D_1}
\newcommand{\second}{D_2}
\newcommand{\bid}{b}
\newcommand{\bids}{{\mathbf \bid}}
\newcommand{\distr}{F}
\newcommand{\distrs}{{\mathbf \distr}}
\newcommand{\distrmi}{{\mathbf \distrmi}_{-i}}
\newcommand{\distri}[1][i]{{\distri_{#1}}}
\newcommand{\price}{p}
\newcommand{\prices}{{\mathbf \price}}
\def\namedlabel#1#2{\begingroup
    #2%
    \def\@currentlabel{#2}%
    \phantomsection\label{#1}\endgroup
}
\theoremstyle{plain}
\newtheorem{theorem}{Theorem}
\newtheorem{lemma}{Lemma}
\newtheorem{fact}{Fact}
\theoremstyle{definition}
\newtheorem{example}{Example}
\newtheorem{remark}{Remark}
\crefname{theorem}{Theorem}{Theorems}
\crefname{lemma}{Lemma}{Lemmas}
\crefname{corollary}{Corollary}{Corollaries}
\crefname{fact}{Fact}{Facts}
\crefname{definition}{Definition}{Definitions}
\crefname{example}{Example}{Examples}
\crefname{remark}{Remark}{Remarks}
\begin{document}
\maketitle

We consider a fundamental problem in microeconomics: selling a single item to a number of potential buyers, whose values are drawn from known independent and regular (not necessarily identical) distributions. There are four widely-used and widely-studied mechanisms in the literature: {\sf Myerson Auction}~({\sf OPT}), {\sf Sequential Posted-Pricing}~({\sf SPM}), {\sf Second-Price Auction with Anonymous Reserve}~({\sf AR}), and {\sf Anonymous Pricing}~({\sf AP}).

{\sf OPT} is revenue-optimal but complicated, which also experiences several issues in practice such as fairness; {\sf AP} is the simplest mechanism, but also generates the lowest revenue among these four mechanisms; {\sf SPM} and {\sf AR} are of intermediate complexity and revenue. We explore revenue gaps among these mechanisms, each of which is defined as the largest ratio between revenues from a pair of mechanisms. We establish two tight bounds and one improved bound:
\begin{enumerate}
  \item {\sf SPM} vs.\ {\sf AP}: this ratio studies the power of discrimination in pricing schemes. We obtain the tight ratio of $\C \approx 2.62$, closing the gap between $\big[\frac{e}{e - 1}, e\big]$ left before.
  \item {\sf AR} vs.\ {\sf AP}: this ratio measures the relative power of auction scheme vs.\ pricing scheme, when no discrimination is allowed. We attain the tight ratio of $\frac{\pi^2}{6} \approx 1.64$, closing the previously known bounds $\big[\frac{e}{e - 1}, e\big]$.
  \item {\sf OPT} vs.\ {\sf AR}: this ratio quantifies the power of discrimination in auction schemes, and is previously known to be somewhere between $\big[2, e\big]$. The lower-bound of $2$ was conjectured to be tight by Hartline and Roughgarden (2009) and Alaei et al.\ (2015). We acquire a better lower-bound of $2.15$, and thus disprove this conjecture.
\end{enumerate}

\setcounter{page}{0}
\thispagestyle{empty}
\newpage

\tableofcontents

\newpage

\section{Introduction}
\label{sec:intro}
How to maximize the expected revenue of a seller, who wants to sell an indivisible item to a number of buyers, is a central problem in microeconomics. The simplest mechanism is {\sf Anonymous Pricing} (denoted by $\ap$). Such a mechanism simply posts a price of $p \in \RRP$ to all buyers, and the item is sold out iff at least one buyer has a value no less than this price. If the seller knows the value distributions of the buyers, he can leverage a proper price to maximize the revenue (among this family of mechanisms). Although widely-used, this is not the revenue-optimal selling method; the optimal mechanism is the prominent {\sf Myerson Auction} \citep[denoted by $\opt$; see][]{M81}. In comparison, $\opt$ is far more complex than $\ap$, due to two reasons:
\begin{enumerate}[label = (\alph*), font = {\bfseries}]
\item It discriminates different buyers with different value distributions. Conceivably, this may incur some fairness issues, and is not feasible in some markets.
\item It is an {\em auction scheme} instead of a {\em pricing scheme}, and thus requires more seller-to-buyer communication. This may also raise privacy concerns to the buyers, since they need to report their private values, rather than make take-it-or-leave-it decisions.
\end{enumerate}
These complications and other undesirable issues hinder the prevalence of {\sf Myerson Auction}. To address these issues, two mechanisms with intermediate complexities (compared to $\opt$ and $\ap$) are widely studied in the literature, and are widely adopted in practice:
\begin{enumerate*}[label = (\alph*), font = {\bfseries}]
\item to avoid price discrimination, the seller can use {\sf Second-Price Auction with Anonymous Reserve} \citep[denoted by $\ar$; see][]{HR09}; and
\item to reduce communication, the seller can employ {\sf Sequential Posted-Pricing} \citep[denoted by $\spm$; see][]{CHMS10,CMS15}.
\end{enumerate*}
We defer the formal definitions of all mechanisms to \Cref{subsec:prelim:mech}.

These four mechanisms together form the lattice structure in \Cref{fig:result}, in terms of both revenue-domination and complexity. It is well known that there is a revenue gap between any pair of mechanisms. The reader may query that how large these gaps can be.

Indeed, quantitative analysis of these gaps is also a striking theme in algorithmic economics. To this end, the notion of approximation ratio (originated from the TCS community) turns out to be a powerful language. There is a rich literature on studying revenue gaps/approximation ratios among various mechanisms \citep[e.g.\ see][]{BK94, GHW01, BHW02, GHKKKM05, KP13, CGL14, CGL15, FILS15, DFK16, AHNPY15, CFHOV17}.

\begin{figure}[!htbp]
\centering
\begin{tikzpicture}[scale = 2.25]
\node(n1) at (2.5, 0.8) [draw, thick] {$\opt$: discriminate auction};
\node(n2) at (0, 0) [draw, thick] {$\spm$: discriminate pricing};
\node(n3) at (5, 0) [draw, thick] {$\ar$: anonymous auction};
\node(n4) at (2.5, -0.8) [draw, thick] {$\ap$: anonymous pricing};
\draw[thick, ->, >=triangle 45] (n1.west) to node[anchor = -15] {$\big[1.34, 1.49\big]$} (n2.north);
\draw[thick, ->, >=triangle 45] (n1.east) to node[anchor = -165] {$[2, e]$} (n3.north);
\draw[thick, ->, >=triangle 45] (n1.east) to node[anchor = 15] {$[{\bf 2.15}, e]$} (n3.north);
\draw[thick, ->, >=triangle 45] (n1.south) to node[left] {$[2.23, e]$} (n4.north);
\draw[thick, ->, >=triangle 45] (n1.south) to node[right] {$[{\bf 2.62}, e]$} (n4.north);
\draw[thick, ->, >=triangle 45] (n2.south) to node[anchor = -165] {${\bf 2.62}$} (n4.west);
\draw[thick, ->, >=triangle 45] (n2.south) to node[anchor = 15] {$\big[\frac{e}{e - 1}, e\big]$} (n4.west);
\draw[thick, ->, >=triangle 45] (n3.south) to node[anchor = -15] {${\bf \pi^2/6}$} (n4.east);
\draw[thick, ->, >=triangle 45] (n3.south) to node[anchor = 165] {$\big[\frac{e}{e - 1}, e\big]$} (n4.east);
\end{tikzpicture}
\caption{Revenue gaps among basic mechanisms in the asymmetric regular setting. Our new results are marked in {\bf bold}: the $\C \approx 2.62$ bound is by solving the $\spm$ vs.\ $\ap$ problem (see \Cref{sec:spm_ap}); the $\pi^2 / 6 \approx 1.64$ bound is by solving the $\ar$ vs.\ $\ap$ problem (see \Cref{sec:ar_ap}); and the $2.15$ bound is by constructing a better lower-bound instance for the $\opt$ vs.\ $\ar$ problem (see \Cref{sec:opt_ar}). For the other previously known results, we give a survey in \Cref{app:summary}.}
\label{fig:result}
\end{figure}
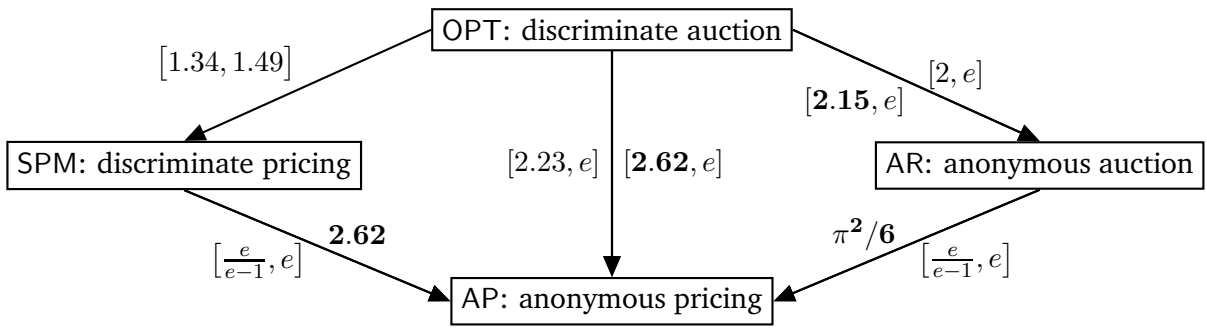

\subsection{Our Results}
In the natural setting with asymmetric\footnote{Throughout the paper, asymmetric distributions refer to the setting when different buyers can have distinct value distributions, as opposed to identical distributions.} and regular distributions, no tight revenue gap between any pair of the four mechanisms was previously known. In this work, we get two tight bounds and an improved bound, where the improved lower bound between $\opt$ and $\ar$ disproves a conjecture asked by \citet{HR09,H13,AHNPY15}.

\vspace{.1in}
\noindent
{\bf SPM vs.\ AP.}
This comparison measures the power of discrimination in pricing schemes. We establish the tight ratio of constant $\C \approx 2.62$. Prior to this work, the tight ratios in the other three settings were known to be
\begin{enumerate*}[label = (\alph*), font = {\bfseries}]
\item $n$ in the asymmetric general setting \citep[see][]{AHNPY15};
\item $\frac{e}{e - 1} \approx 1.58$ in the i.i.d.\ regular setting; and
\item $2$ in the i.i.d.\ general setting \citep[e.g.\ see][]{H13,DFK16}.\footnote{The referenced book and paper do not state their results explicitly in the language of $\spm$ vs.\ $\ap$. However, these $\spm$ vs.\ $\ap$ results are easy corollaries of the referenced results.}
\end{enumerate*}
Actually, we can also get the last two ratios by combining the results in \citet{M81,KS78,HK82}, which was first observed by \cite{HKS07}.

\vspace{.1in}
\noindent
{\bf AR vs.\ AP.}
This comparison studies the relative power between auction schemes and pricing schemes, when no discrimination is allowed. We first
\begin{enumerate*}[label = (\alph*), font = {\bfseries}]
\item prove an upper bound of $\frac{\pi^2}{6} \approx 1.64$ in the asymmetric general setting, and then
\item respectively construct matching lower-bound instances in the asymmetric regular setting and the i.i.d.\ general setting. Prior to this work,
\item in the i.i.d.\ regular setting, where $\ar$ is identical to $\opt$, an upper-bound of $\frac{e}{e - 1} \approx 1.58$ was obtained by \cite{CHMS10}, and afterward was shown to be tight by \cite{H13}.
\end{enumerate*}

\vspace{.1in}
\noindent
{\bf OPT vs.\ AR.}
This comparison studies the power of discrimination in auction schemes. Previously, the tight ratios were known in all settings \citep[see][]{M81,H13,AHNPY15} except for the asymmetric regular setting. \cite{HR09} first tackled the problem in this setting: they
\begin{enumerate*}[label = (\alph*), font = {\bfseries}]
\item proved an upper-bound of $4$ \citep[later improved to $e \approx 2.72$ by][]{AHNPY15}, and
\item provided  a $2$-approximation lower-bound instance.
\end{enumerate*}
Although this lower bound of $2$ has never been broken (for a decade), and is widely believed to be the tight ratio, we will demonstrate a sharper $2.15$-approximation instance in \Cref{sec:opt_ar}.

Interestingly,
\begin{enumerate*}[label = (\alph*), font = {\bfseries}]
\item the instance of \cite{HR09} consists of two buyers; yet
\item the two sharper instances in this work respectively involve three and four buyers.
\end{enumerate*}
Given this and other observations (see \Cref{sec:opt_ar}), we conjecture that the tight ratio is reached by an instance with infinite number of buyers.

\vspace{.1in}
\noindent
{\bf Extensions.}
{The above three results also improve other related bounds by implication. For example, due to \cite{AHNPY15}, the tight ratio of the $\opt$ vs.\ $\ap$ problem is somewhere between $[2.23, e]$. This interval now shrinks to $[\C, e]$, by taking into account our tight result of $\C \approx 2.62$ for the $\spm$ vs.\ $\ap$ problem.}

We settle both of the $\spm$ vs.\ $\ap$ problem and the $\ar$ vs.\ $\ap$ problem by formulating a revenue gap as the objective function of a mathematical program. This methodology was initiated by \cite{CGL14} and \cite{AHNPY15}. Employing a similar approach, \cite{BMTT17} recently obtained a tight {\em price of anarchy} for multi-unit auction. Our work further supports the power of this framework in proving tight bounds. En route, we develop an abundance of tools to handle these mathematical programs, which may find extra applications in the future.

For many revenue gaps well understood in the literature \citep[e.g.\ see][]{BK94,HR09,CHMS10,KW12}, a corresponding worst-case instance consists of merely two or several buyers. By contrast, any worst-case instance of our tight results includes infinitely many buyers; the ideas behind these instances may be advantageous to lower-bound analysis of other related problems.

\subsection{Subsequent Work}
In a conference version of this paper \citep{JLTX19}, it is conjectured that the revenue gap between $\opt$ and $\ap$ equals $\C \approx 2.62$, due to the following two observations.
\begin{enumerate}[label = (\alph*), font = {\bfseries}]
	\item From our lower-bound instance (i.e.\ \Cref{exp:spm_ap}) of the $\spm$ vs.\ $\ap$ problem, $\opt$ does extract the same revenue as $\spm$ (see \Cref{lem:spm_tri_best,rmk:lem:spm_tri_best}).
	\item $\ap$ admits the same revenue gap against either $\opt$ or $\spm$, in each of the asymmetric general, i.i.d.\ general, and i.i.d.\ regular settings (see \Cref{tbl:summary2,tbl:summary3}).
\end{enumerate}
This conjecture is confirmed by \cite{JLQTX2019}. Together with the tight revenue gaps developed in this paper, this result suggests the following economic interpretations. To extract more revenue, using price discrimination may be more powerful than conducting anonymous-reserve type auctions -- not only because ``$\spm$ vs.\ $\ap$'' has a greater revenue gap than ``$\ar$ vs.\ $\ap$'', but also because for the worst-case instance of ``$\opt$ v.s.\ $\ap$'', the optimal {\sf Myerson Auction} can be implemented as an $\spm$ mechanism.

%
%
%

\subsection{Further Related Work}

This work fits in the ``{\em simple versus optimal}'' paradigm proposed by \cite{HR09}. For a full survey on current progress and future direction in this research agenda, the reader can refer to the book ``Mechanism Design and Approximation'' by \cite{H13}.

\cite{AHNPY15} developed the mathematical-program-based approach in this context, aiming to tackle the $\opt$ vs.\ $\ap$ problem. However, because it is hard to directly quantify the $\opt$ revenue, the authors instead considered an upper-bound revenue formula called {\sf Ex-Ante Relaxation}\footnote{{\sf Ex-Ante Relaxation} is in spirit a ``fake'' mechanism, yet is useful to analyze approximation guarantees of simple mechanisms. To introduce this technique, \cite{CHMS10} actually employed the ideas involved in $\spm$. Later, this technique was further developed in \citet{Y11,A14,CM16}.} \citep[see][]{CHMS10}. They formulated the {\sf Ex-Ante Relaxation} vs.\ $\ap$ problem as a simplified mathematical program, and the resulting tight ratio of $e \approx 2.72$ gives an upper bound of the original $\opt$ vs.\ $\ap$ problem.

\vspace{.1in}
\noindent
{\bf Sequential Posted-Pricing.}
In the literature, researchers also studied the revenue gap between $\opt$ and $\spm$. This comparison admits the same ratio for general distributions as for regular distributions, due to a standard technique called {\em ironing} \citep[see][]{M81}.
In the i.i.d.\ setting, \cite{CFHOV17} established the tight ratio of \footnote{
		\label{footnote:alpha}
		More precisely, constant $\alpha \approx 1.34$ is the unique solution to equation $\int_{0}^{1} \big(x - x \cdot \ln x - 1 + 1 / \alpha\big)^{-1} \cdot \dd x = 1$.} $\alpha \approx 1.34$. Notably, the corresponding worst-case instance consists of infinitely many buyers.
In the asymmetric setting, \cite{CHMS10} first got an upper bound of $\frac{e}{e - 1} \approx 1.58$; \cite{Y11} and \cite{EHKS18} showed that this bound holds in broader settings. Recently, better upper bounds were acquired in \citet{ACK18,BGPPS18,CSZ19}, and the state-of-the-art result is constant\footnote{
		\label{footnote:beta}
		More precisely, constant $\beta = \big(\frac{28}{27} - \frac{1}{e}\big)^{-1} \approx 1.49$.} $\beta \approx 1.50$.
On the other hand, the best known lower bound is actually the tight ratio of $\alpha \approx 1.34$ in i.i.d.\ settings.
For more details about this comparison, the reader can refer to \citep[][Section~1]{CSZ19}.

\vspace{.1in}
\noindent
{\bf Prophet Inequalities.}
\cite{HKS07} first observed connections between the $\opt$ vs.\ $\spm$ problem and the notion of {\em prophet inequality} \citep[e.g.\ see][]{KS77,KS78} in stopping theory. Due to the numerous applications of those inequalities to algorithm design and mechanism design, the last decade has seen extensive progress on them \citep[e.g.\ see also][]{BIK07,KW12,R16,RS17,AEEHK17,EHLM17,EHKS18,focs/DuettingFKL17, ec/CorreaDFS19, ec/DuttingK19, ec/AnariNSS19}. For more literature, the reader can refer to the survey by \cite{L17} and the references therein.

\vspace{.1in}
\noindent
{\bf Multi-Item Mechanism Design.}
In multi-item environments, optimal mechanisms might be more complicated and weird, and there is a rich literature on studying how well simple mechanisms approximate the optima. We can categorize previous work based on valuation functions\footnote{There is a hierarchy: $\mbox{unit-demand \,\&\, additive} \subsetneq \mbox{constraint additive \,\&\, submodular} \subsetneq \mbox{XOS} \subsetneq \mbox{subadditive}$.} of the buyers:
\begin{enumerate*}[label = (\alph*), font = {\bfseries}]
\item in unit-demand settings, see \citet{CHK07,CHMS10,CMS15,CD15,KW12,CDW16};
\item in additive settings, see \citet{HN12,LY13,BILW14,Y15,CDW16,EFFTW17:b}
\item in constraint additive or submodular settings, see \citet{CM16,CZ17}; and
\item in XOS or subadditive settings, see \citet{FGL15,RW15,AS17,CZ17}.
\end{enumerate*}

\vspace{.1in}
\noindent
{\bf Beyond Independent Distributions.}
All the above mentioned works assume that there is no correlation among buyers' value distributions. Without this assumption:
\begin{enumerate*}[label = (\alph*), font = {\bfseries}]
\item {\sf Myerson Auction} is not the optimal single-item mechanism \citep[e.g.\ see][]{cremer1988full,DFK15,PP15};
\item to provide good revenue guarantees, simple mechanisms \citep[e.g.\ $k$-look-ahead auction, see][]{R01,CHLW11} come to the rescue once again.
\end{enumerate*}

\section{Notation and Preliminaries}
\label{sec:prelim}
{\bf Notations.}
Denote by $\RRP$ (resp. $\NNP$) the set of non-negative real numbers (resp. positive integers). Given two positive integers $n$ and $m$, where $n \geq m$, denote by $[n]$ the set $\{1, 2, \cdots, n\}$, and by $[m: n]$ the set $\{m, m + 1, \cdots, n\}$. Function $(\cdot)_+$ maps any real number $z \in \RR$ to $\max\{0, z\}$.

\subsection{Distribution Families}
\label{subsec:prelim:distr}

We always focus on the single-item Bayesian mechanism design environment, where $n$ buyers independently draw values $\bids = \{b_i\}_{i = 1}^{n} \in \RRP^n$ from publicly known distributions $\distrs = \{F_i\}_{i = 1}^{n}$. Most of our results are established under the standard {\em regularity} assumption on distributions $\{F_i\}_{i = 1}^{n}$. Besides, the family of triangular distributions will be useful for our lower-bound analysis. We introduce both concepts below, and then elaborate on the mechanisms to be studied.

\vspace{.1in}
\noindent
{\bf Regular Distribution and Revenue-Quantile Curve.}
For any CDF $F$ and the corresponding PDF $f$:
\begin{enumerate*}[label = (\alph*), font = {\bfseries}]
\item the {\em virtual value function} is defined as $\varphi(p) \eqdef p - \frac{1 - F(p)}{f(p)}$; and
\item the {\em revenue-quantile curve} is defined as $r(q) \eqdef q \cdot F^{-1}(1 - q)$.
\end{enumerate*}
By definition, distribution $F$ is regular (i.e.\ $F \in \reg$) iff virtual value function $\varphi$ is non-decreasing, or equivalently, and iff revenue-quantile curve $r$ is a concave function. We interchange these two definitions whenever either is more convenient for our use.

\vspace{.1in}
\noindent
{\bf Triangular Distributions.}
This family of distributions was introduced by \cite{AHNPY15}, named based on the shapes of their revenue-quantile curves (as \Cref{fig:tri} shows). With parameters $v_i \in \RRP$ (i.e.\ the monopoly price) and $q_i \in [0, 1]$ (i.e.\ the monopoly quantile), a triangular distribution $\tri(v_i, q_i)$ has a CDF of $F_i(p) = \frac{(1 - q_i) \cdot p}{(1 - q_i) \cdot p + v_i q_i}$ for any $p \in [0, v_i)$, and $F_i(p) = 1$ for any $p \in [v_i, \infty)$.

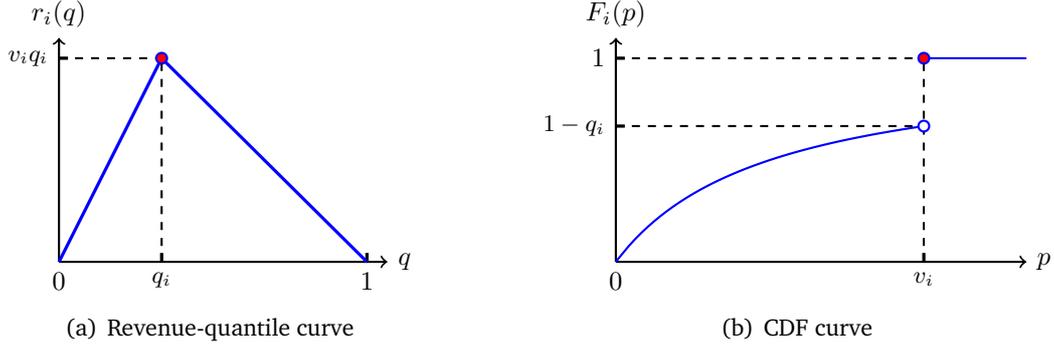
\begin{figure}[!htbp]
\centering
\subfigure[Revenue-quantile curve]{
\begin{tikzpicture}[thick, smooth, domain = 0: 1, scale = 2.7]
\draw[->, thick] (0, 0) -- (1.6, 0);
\draw[->, thick] (0, 0) -- (0, 1.1);
\node[above] at (0, 1.1) {\small $r_i(q)$};
\node[right] at (1.6, 0) {\small $q$};
\draw[dashed] (0.5, 0) -- (0.5, 1);
\draw[dashed] (0, 1) -- (0.5, 1);
\draw[very thick] (1.5, 0pt) -- (1.5, 1.25pt);
\node[below] at (1.5, 0) {\small $1$};
\draw[very thick] (0.5, 0pt) -- (0.5, 1.25pt);
\node[below] at (0.5, 0) {\footnotesize $q_i$};
\draw[very thick] (0pt, 1) -- (1.25pt, 1);
\node[left] at (0, 1) {\footnotesize $v_i q_i$};
\node[below] at (0, 0) {\small $0$};

\draw[color = blue, very thick] (0, 0) -- (0.5, 1) -- (1.5, 0);
\draw [color = blue, fill = red] (0.5, 1) circle(0.75pt);
\end{tikzpicture}
\label{fig:tri_revenue_quntile}
}
\quad\quad\quad
\subfigure[CDF curve]{
\begin{tikzpicture}[thick, smooth, domain = 0: 1.5, scale = 2.7]
\draw[->] (0, 0) -- (2, 0);
\draw[->] (0, 0) -- (0, 1.1);
\node[above] at (0, 1.1) {\small $F_i(p)$};
\node[right] at (2, 0) {\small $p$};
\node[below] at (1.5, 0) {\footnotesize $v_i$};
\node[left] at (0, 1) {\small $1$};
\node[below] at (0, 0) {\small $0$};
\node[left] at (0, 0.6667) {\footnotesize $1 - q_i$};
\draw[style = dashed] (0, 1) -- (1.5, 1);
\draw[style = dashed] (1.5, 0) -- (1.5, 1);

\draw[color = blue] plot (\x, {\x / (\x + 0.75)});
\draw[style = dashed] (0, 0.6667) -- (1.5, 0.6667);
\draw[blue, fill = white] (1.5, 0.6667) circle (0.75pt);
\draw[color = blue] (1.5, 1) -- (2, 1);
\draw [color = blue, fill = red] (1.5, 1) circle(0.75pt);

\draw[very thick] (0pt, 1) -- (1.25pt, 1);
\draw[very thick] (0pt, 0.6667) -- (1.25pt, 0.6667);
\draw[very thick] (1.5, 0pt) -- (1.5, 1.25pt);
\end{tikzpicture}
\label{fig:tri_CDF}
}
\caption{Demonstration for triangular distribution $\tri(v_i, q_i)$.}
\label{fig:tri}
\end{figure}

Particularly, when $N \to \infty$, distribution $\tri(N,\frac{1}{N})$ has a limitation CDF of $F(p) = \frac{p}{p + 1}$, for all $p \in \RRP$. Denote this special limitation distribution by $\tri(\infty)$, which actually is involved in the worst-case instance of the $\spm$ vs.\ $\ap$ problem (see \Cref{sec:spm_ap}), as well as in the improved lower-bound instances of the $\opt$ vs.\ $\ar$ problem (see \Cref{sec:opt_ar}).

\subsection{Mechanisms}
\label{subsec:prelim:mech}

\vspace{.1in}
\noindent
{\bf Anonymous Pricing ($\ap$).}
The seller posts a price of $p \in \RRP$ to all buyers; the item is sold out iff at least one buyer values the item no less than this price. For brevity,
\begin{enumerate*}[label = (\alph*), font = {\bfseries}]
\item we denote by $\ap(p, \distrs) \eqdef p \cdot \big(1 - \prod_{i = 1}^{n} F_i(p)\big)$ the resulting revenue, and
\item by $\ap(\distrs) \eqdef \max_{p \in \RRP} \big\{\ap(p, \distrs)\big\}$ the best revenue among all choices of posted price $p \in \RRP$.
\end{enumerate*}
If there is no ambiguity from the context, we would drop the term $\{F_i\}_{i = 1}^{n}$. The same convention applies to the next three mechanisms as well.

\vspace{.1in}
\noindent
{\bf Sequential Posted-Pricing ($\spm$).}
Such a mechanism is defined by prices $\prices = \{p_i\}_{i = 1}^n \in \RRP^n$ and a permutation $\sigma \in \Pi$: $[n] \to [n]$ over buyers. That is, the seller sequentially posts price $p_{\sigma^{-1}(i)}$ to each $i$-th coming buyer; the first coming buyer with value $b_{\sigma^{-1}(i)} \ge p_{\sigma^{-1}(i)}$ wins the item.
\begin{enumerate*}[label = (\alph*), font = {\bfseries}]
\item We denote by $\spm(\sigma, \prices, \distrs)$ the resulting revenue. Respecting a specific order $\sigma \in \Pi$,
\item let $\spm(\sigma, \distrs) \eqdef \max_{\prices \in \RRP^n} \big\{\spm(\sigma, \prices, \distrs)\big\}$ be the revenue from the optimal pricing strategy.
\item We assume that the seller can choose the prices and the order, which leads to a revenue of $\spm(\distrs) \eqdef \max_{\sigma \in \Pi} \big\{\spm(\sigma, \distrs)\big\}$.
\end{enumerate*}

\vspace{.1in}
\noindent
{\bf Second-Price Auction with Anonymous Reserve ($\ar$).}
The seller sets a reserve price of $p \in \RRP$ to all buyers, and there are three possible outcomes:
\begin{enumerate*}[label = (\alph*), font = {\bfseries}]
\item if no buyer has value $b_i \geq p$, then the auction would be aborted;
\item if exactly one buyer has value $b_i \geq p$, then the item would be sold to him at price $p$;
\item otherwise, the item would be sold to the highest buyer $i^* \eqdef \argmax_{i \in [n]} \big\{b_i\big\}$ at the second highest value $\max_{i \neq i^*} \big\{b_i\big\}$, i.e.\ the well-known {\sf Second-Price Auction}.
\end{enumerate*}

Let $\ar(p, \distrs)$ be the expected revenue from the above scenario, and then define the optimum (among all choices of reserve price $p \in \RRP$) as $\ar(\distrs) \eqdef \max_{p \in \RRP} \big\{\ar(p, \distrs)\big\}$. We postpone the explicit revenue formulas to \Cref{sec:ar_ap}. However, it is easy to see that $\ar(p, \distrs) \geq \ap(p, \distrs)$ for all $p \in \RRP$, and thus $\ar(\distrs) \geq \ap(\distrs)$.

\vspace{.1in}
\noindent
{\bf Myerson Auction ($\opt$).}
Recall virtual value function $\varphi(p) = p - \frac{1 - F(p)}{f(p)}$. {\sf Myerson Auction} runs as follows: upon receiving values $\bids = \{b_i\}_{i = 1}^n$ from the buyers, the seller allocates the item to the buyer with highest virtual value $\max_{i \in [n]} \big\{\varphi_i(b_i)\big\}$ (required to be {\em non-negative}), and charges this buyer a threshold/minimum price for him to keep winning.

The next structural lemma (proved in \Cref{app:prelim}) will be useful for settling the $\spm$ vs.\ $\ap$ problem, and for constructing improved lower-bound instances of the $\opt$ vs.\ $\ar$ problem. The subsequent remark explains the intuition behind this lemma.

\begin{lemma}
\label{lem:spm_tri_best}
For any triangular instance $\{\tri(v_i, q_i)\}_{i = 1}^n$ that $v_1 \geq v_2 \geq \cdots \geq v_n$, it follows that:
\begin{enumerate}[font = {\em\bfseries}]
\item $\opt = \spm = \sum_{i = 1}^n v_i q_i \cdot \prod_{j = 1}^{i - 1} (1 - q_j)$.
\item An optimal $\spm$ lets the buyers come in lexicographic order, and posts price $p_i = v_i$ to each buyer $i \in [n]$.
\end{enumerate}
\end{lemma}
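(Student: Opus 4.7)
The plan is to compute the Myerson auction revenue on a triangular instance explicitly via the virtual-value picture, observe that it equals the revenue of a specific $\spm$ strategy, and then use the general sandwich $\spm \leq \opt$ to conclude optimality of that $\spm$.

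First I would compute the virtual-value function of $\tri(v_i, q_i)$. From the CDF $F_i(p) = \frac{(1-q_i)p}{(1-q_i)p + v_i q_i}$ on $[0, v_i)$ one checks, by a short calculation, that $\frac{1 - F_i(p)}{f_i(p)} = p + \frac{v_i q_i}{1 - q_i}$, so $\varphi_i(p) = -\frac{v_i q_i}{1 - q_i} < 0$ for every $p \in [0, v_i)$; meanwhile there is a point mass of size $q_i$ at $p = v_i$, which, after Myerson's ironing at a single atom, contributes virtual value $v_i$ itself. Hence each buyer $i$ has a strictly negative virtual value unless his realised value is exactly $v_i$ (which occurs with probability $q_i$), in which case the virtual value equals $v_i$.

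Second I would describe $\opt$ on this instance. Myerson allocates to the buyer with the largest non-negative virtual value. Under the assumption $v_1 \geq v_2 \geq \cdots \geq v_n$, the item is awarded to the lowest-indexed buyer who hits his atom, and the threshold payment charged to that winner is $v_i$ (the price at which his virtual value becomes the maximum). The probability that buyer $i$ wins is therefore $q_i \cdot \prod_{j=1}^{i-1}(1 - q_j)$, and summing over $i$ yields
\[
\opt \;=\; \sum_{i=1}^n v_i q_i \prod_{j=1}^{i-1} (1 - q_j).
\]
Ties among the $v_i$ are harmless: any tie-breaking rule gives the same total revenue because every winner pays exactly his $v_i$.

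Third I would exhibit the matching $\spm$. Visit the buyers in the order $1, 2, \dots, n$ and offer buyer $i$ the take-it-or-leave-it price $p_i = v_i$. Since $F_i(v_i^-) = 1 - q_i$, buyer $i$ accepts precisely with probability $q_i$ (by hitting his atom), paying $v_i$. Hence buyer $i$ purchases iff all previous buyers rejected and he hits his atom, which happens with probability $q_i \prod_{j<i}(1 - q_j)$. The expected revenue of this $\spm$ equals the displayed sum above, i.e.\ exactly $\opt$.

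Finally, since $\spm(\distrs) \leq \opt(\distrs)$ always and we have produced an $\spm$ attaining $\opt$, both quantities coincide, and the described order/prices form an optimal $\spm$. The only subtle step is the proper treatment of the atom at $v_i$ when writing down the virtual value and verifying that the Myerson threshold is exactly $v_i$; once this is settled, the rest is a one-line product telescoping.
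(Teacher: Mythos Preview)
Your proposal is correct and follows essentially the same approach as the paper: both compute the virtual value of $\tri(v_i,q_i)$ to be $-\tfrac{v_iq_i}{1-q_i}$ below the atom and $v_i$ at the atom, use this to evaluate $\opt$, exhibit the lexicographic $\spm$ with prices $p_i=v_i$ achieving the same revenue, and conclude via $\spm\le\opt$. The only cosmetic difference is that the paper computes $\opt$ as $\int_0^\infty\bigl(1-\Phi(x)\bigr)\,\dd x$ for the product virtual-value CDF $\Phi$, whereas you compute it directly from winning probabilities and threshold payments; both yield the same telescoping sum.
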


\begin{remark}
\label{rmk:lem:spm_tri_best}
Any triangular distribution $\tri(v_i, q_i)$ is supported on interval $p \in [0,v_i]$, and only the maximum possible value of $b_i = v_i$ corresponds to a non-negative virtual value.

Thus, the {\sf Sequential Posted-Pricing} proposed in Part 2 of \Cref{lem:spm_tri_best} is equivalent to {\sf Myerson Auction}:
\begin{enumerate*}[label = (\alph*), font = {\bfseries}]
\item the seller first sorts all buyers in the decreasing order of $\{v_i\}_{i = 1}^n$; and then
\item lets each buyer $i \in [n]$ sequentially make a take-it-or-leave-it decision at the posted price of $p_i = v_i$; thus,
\item this {\sf Sequential Posted-Pricing} extracts the maximum possible virtual welfare.
\end{enumerate*}
\end{remark}

\section{Sequential Posted-Pricing vs. Anonymous Pricing}
\label{sec:spm_ap}
We first study the revenue gap between $\spm$ and $\ap$, in the setting with (possibly) asymmetric and regular distributions $\distrs = \{F_i\}_{i = 1}^n \in \reg^n$. We interpret this question as the next mathematical program, and safely drop constraint~\eqref{cstr:spm_ap0} on interval $p \in [0, 1]$ as it trivially holds.

\vspace{.1in}
\noindent\fcolorbox{black}{lightgray!50}{\begin{minipage}{0.977\textwidth}
\vspace{-.1in}
\begin{align}
\label{prog:spm_ap0}\tag{P1}
& \max_{\distrs \in \reg^n} && \spm = \max_{\sigma \in \Pi, \prices \in \RRP^n} \big\{\spm(\sigma, \prices)\big\} \\
\label{cstr:spm_ap0}\tag{C1}
& \mbox{subject to:} && \mbox{$\ap(p) = p \cdot \big(1 - \prod_{i = 1}^n F_i(p)\big) \leq 1$}, && \forall p \in (1, \infty)
\end{align}
\end{minipage}}

\vspace{.1in}
By getting the optimal solution to this mathematical program, we derive the next theorem. In particular, \Cref{subapp:spm_ap_calculations} includes the numeric calculations for the constant $\C \approx 2.6202$.

\begin{theorem}
\label{thm:spm_ap}
In asymmetric regular setting, the supremum ratio of $\spm$ to $\ap$ is equal to
\[
\label{eq:spm_ap}
\mbox{$\C \eqdef 2 + \Int{1}{\infty} \big(1 - e^{-\Q(x)}\big) \cdot \dd x \approx 2.6202$},
\]
where function $\Q(p) \eqdef -\ln(1 - p^{-2}) - \frac{1}{2} \cdot \sum_{k = 1}^{\infty} k^{-2} \cdot p^{-2k}$.
\end{theorem}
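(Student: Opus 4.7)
The plan is to establish matching upper and lower bounds on the supremum in~\eqref{prog:spm_ap0}. Both directions hinge on \Cref{lem:spm_tri_best}, which supplies the closed form $\spm = \sum_{i=1}^n v_i q_i \prod_{j<i}(1-q_j)$ on any triangular instance ordered by $v_1 \geq \cdots \geq v_n$, and also equates $\spm$ with $\opt$ there. The high-level strategy is to show that the supremum is attained (in the limit) by triangular instances, so that the program collapses to a one-dimensional continuous optimization whose solution produces $\Q$ and $\C$.

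For the lower bound, I would exhibit a sequence of triangular instances approaching the ratio $\C$. Normalize $\ap = 1$ and index triangular distributions $\tri(v_i, q_i)$ by decreasing monopoly prices $v_1 > v_2 > \cdots$. The guiding principle is to saturate the anonymous-pricing constraint $p \cdot (1 - \prod_i F_i(p)) \leq 1$ at every $p \geq 1$. Rewriting saturation as $\sum_i \ln F_i(p) = \ln(1 - 1/p)$ and substituting the triangular CDF $F_i(p) = 1 - \frac{v_i q_i}{(1-q_i) p + v_i q_i}$ produces a recurrence on $(v_i, q_i)$ whose continuum limit $n \to \infty$ is precisely what $\Q(p)$ encodes; the series $-\tfrac{1}{2} \sum_{k \geq 1} k^{-2} p^{-2k}$ emerges when the logarithms are expanded. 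The basic building block is the limit distribution $\tri(\infty)$ with CDF $p/(p+1)$. Plugging the saturating profile back into \Cref{lem:spm_tri_best} and integrating in $v$ recovers $\C = 2 + \int_1^\infty (1 - e^{-\Q(x)}) \dd x$.

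For the upper bound, I would first argue that in~\eqref{prog:spm_ap0} one may restrict attention to triangular instances without decreasing the objective, by a buyer-by-buyer triangularisation argument in the spirit of \cite{AHNPY15}: replace each $F_i$ by a triangular $T_i$ matched to the peak of the revenue-quantile curve $r_i$, rescale if needed to keep $\ap \leq 1$, and verify that $\spm$ does not drop. Once restricted to triangular instances, the program becomes a finite maximization of $\sum_i v_i q_i \prod_{j<i}(1-q_j)$ subject to $\sum_i \ln F_i(p) \geq \ln(1 - 1/p)$ for every $p \geq 1$. Taking the continuum limit and showing that the saturating profile used in the lower bound is globally optimal (e.g.\ via a Lagrangian or pointwise complementary-slackness argument) completes the argument.

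The main obstacle will be the triangularisation step for $\spm$: unlike $\opt$, whose value admits an explicit virtual-welfare formula, $\spm$ depends on an inner maximum over the order $\sigma \in \Pi$ and the prices $\prices$, so modifying a single buyer's distribution can alter both the optimal order and the optimal price vector, and these effects must be tracked jointly with the change in $\ap$. A related difficulty is the continuum-limit step: the pointwise $\ap$ constraint indexed by every $p \geq 1$ must be converted into a single functional equation, and one has to verify that the saturating solution is globally optimal in order to obtain $\Q$ in its exact series form. Once these two pieces are in place, $\C$ falls out of a direct evaluation of the integral.
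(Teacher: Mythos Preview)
Your high-level plan---reduce to triangular instances, then identify the optimum as the profile that saturates the $\ap$ constraint---matches the paper. But two of your proposed execution steps diverge from what the paper actually does, and in each case the paper's route is the one that closes.

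First, your triangularisation matches $T_i$ to the \emph{peak} of $r_i$ (the monopoly point) and then ``rescales if needed.'' The paper instead fixes an optimal $\spm$ scheme $(\sigma^*,\{p_i^*\})$ for the original instance and sets $v_i=p_i^*$, $q_i=1-F_i(p_i^*)$. Under the \emph{same} order and prices, every buyer's acceptance probability is unchanged, so $\spm$ is preserved exactly; and because the resulting triangle sits under the concave $r_i$, one gets $\overline F_i\ge F_i$ pointwise, hence $\ap$ drops with no rescaling. Matching at the monopoly peak does not control $\spm$, since the optimal $\spm$ prices need not be monopoly prices.

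Second, for the upper bound you plan to ``take the continuum limit and use a Lagrangian / complementary-slackness argument.'' The paper never passes to a continuum or a variational problem on the upper-bound side. It (i) adds two special buyers $\tri(\infty)$ and $\tri(1,1)$, which account for the ``$2+$'' and recast the objective as $2+\sum_k (v_k-1)q_k\prod_{j<k}(1-q_j)$; (ii) \emph{relaxes} the continuum of $\ap$ constraints via $\ln(1+x/p)\ge p^{-1}\ln(1+x)$ to finitely many constraints indexed by the $v_k$'s, with right-hand side $\R(v_k)=-v_k\ln(1-v_k^{-2})$; (iii) shows these relaxed constraints bind (giving a recursion for $q_k$); and (iv) bounds each summand by $\int_{v_k}^{v_{k-1}}(x-1)\,\dd e^{-\Q(x)}$ using the identity $\R'(p)=p\,\Q'(p)$ together with two elementary inequalities. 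The function $\Q$ is not discovered by expanding logarithms of the saturating profile; it is \emph{defined} precisely so that $\R'=p\,\Q'$, and the series form is a consequence. Your Lagrangian sketch does not anticipate either the relaxation (ii) or the discrete-to-continuous comparison (iv), which are where the real work is.
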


\noindent
{\bf Proof Overview.}
The upper-bound part of \Cref{thm:spm_ap} is verified in \Cref{subsec:spm_ap_reduction,subsec:spm_ap_solution}. Our proof follows the framework developed by \cite{AHNPY15}, who derived the tight ratio between {\sf Anonymous Pricing} and another benchmark called {\sf Ex-Ante Relaxation}. Since the counterpart program in that work also involves constraint~\eqref{cstr:spm_ap0}, towards characterizing the worst-case instance of Program~\eqref{prog:spm_ap0} we reuse several reductions of Alaei et al. That is, we show that these reductions also work when the objective function is replaced by the $\spm$ revenue.

In \Cref{subsec:spm_ap_reduction} ({\bf Reduction Part}), we first demonstrate that worst-case distributions of Program~\eqref{prog:spm_ap0} w.l.o.g.\ fall into the family of {\em triangular distributions}. As a result, the complicated objective function of Program~\eqref{prog:spm_ap0} can be replaced by another explicit formula (recall Part 1 of \Cref{lem:spm_tri_best}, namely the $\spm$ revenue formula for triangular instance). Afterwards, based on several other reductions, we show that a worst-case instance w.l.o.g.\ includes two special distributions $\tri(\infty)$ and $\tri(1, 1)$, each of which contributes one unit to the $\spm$ revenue. This explains the term of $2$ for the constant $\C \approx 2.6202$.

In \Cref{subsec:spm_ap_solution} ({\bf Optimization Part}), we measure the $\spm$ revenue derived from the remaining distributions. Indeed, we can replace these distributions by a spectrum of ``small'' triangular distributions, under which the $\spm$ revenue increases or remains the same. The $\spm$ revenue from these ``small'' distributions corresponds to the integral term for constant $\C \approx 2.6202$.

Notably, \cite{AHNPY15} did not introduce the special distribution $\tri(1, 1)$, i.e.\ a deterministic value of $1$, in solving their counterpart program. For this reason, they constructed the spectrum of ``small'' triangular distributions in a more complicated way. Here is our interpretation of $\tri(1, 1)$: this distribution never violates the feasibility as constraint~\eqref{cstr:spm_ap0} is restricted on the {\em open} interval $(1, \infty)$, but ensures one unit of the $\spm$ revenue even if all other buyers refute their take-it-or-leave-it offers. Although the ideas behind $\tri(1, 1)$ seem to be intuitive, introducing it not only simplifies our proof, but also paves the way for the follow-up work of \cite{JLQTX2019} that settles the {\sf Myerson Auction} vs. {\sf Anonymous Pricing} problem.

In \Cref{subsec:spm_ap_lower}, we provide a matching lower-bound instance. The high-level idea is simple: use finitely many ``small'' triangular distributions $\{\tri(v_i, q_i)\}_{i = 1}^n$ to surrogate the worst-case instance derived in the upper-bound proof. While being feasible to Program~\eqref{prog:spm_ap0}, this instance generates an $\spm$ revenue arbitrarily close to the constant $\C \approx 2.6202$, when the population $n \in \NNP$ is sufficiently large.

\subsection{Upper-Bound Analysis I: Reduction}
\label{subsec:spm_ap_reduction}

We first show a reduction from a regular instance to another triangular instance, which is very similar to the one by \citet[Lemma~4.1]{AHNPY15}. Under this, the $\spm$ revenue increases or keeps the same, whereas for any posted price $p \in \RRP$ the $\ap(p)$ revenue decreases.
\begin{lemma}
\label{lem:spm_ap_reduction1}
Given any regular instance $\{F_i\}_{i = 1}^n$, there exists a triangular instance $\{\tri(v_i, q_i)\}_{i = 1}^n$ satisfying the following:
\begin{enumerate}[font = {\em\bfseries}]
\item $\spm\big(\{\tri(v_i, q_i)\}_{i = 1}^n\big) \geq \spm\big(\{F_i\}_{i = 1}^n\big)$.
\item $\ap\big(p, \{\tri(v_i, q_i)\}_{i = 1}^n\big) \leq \ap\big(p, \{F_i\}_{i = 1}^n\big)$ for all $p \in \RRP$.
\end{enumerate}
\end{lemma}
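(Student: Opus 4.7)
Fix a near-optimal $\spm$ for the original regular instance $\{F_i\}_{i=1}^n$, say one that offers prices $p_1^*, \ldots, p_n^*$ to buyers $\pi_1, \ldots, \pi_n$ in this order. For each position $i \in [n]$, let $\tilde q_i \eqdef 1 - F_{\pi_i}(p_i^*) \in [0,1]$ be the step-$i$ acceptance probability, and for each buyer $k \in [n]$ let $i(k)$ denote the position at which $k$ is offered. Define
\[
q_k \;\eqdef\; \tilde q_{i(k)}, \qquad v_k \;\eqdef\; F_k^{-1}(1 - q_k) \qquad (k \in [n]),
\]
so that $v_k q_k = r_k(q_k)$, where $r_k(\cdot) \eqdef (\cdot) \cdot F_k^{-1}(1 - \cdot)$ denotes the revenue-quantile curve of $F_k$. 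The claimed triangular instance is $\{\tri(v_k, q_k)\}_{k=1}^n$.

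\medskip
\noindent\textbf{Part 2 ($\ap$ inequality).}
The revenue-quantile curve of $\tri(v_k, q_k)$ is the triangle through $(0,0)$, $(q_k, v_k q_k)$, $(1, 0)$. Regularity makes $r_k$ concave on $[0,1]$ with $r_k(0) = 0$ and $r_k(1) \geq 0$, and since this triangle and $r_k$ agree at $(q_k, v_k q_k)$, concavity forces the triangle to lie \emph{below} $r_k$ on all of $[0,1]$. Re-expressing this in price space (value $=$ revenue divided by quantile), the quantile function of $\tri(v_k, q_k)$ is pointwise $\leq$ that of $F_k$; equivalently, $1 - F_{\tri(v_k, q_k)}(p) \leq 1 - F_k(p)$ for every $p \in \RRP$. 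Substituting $F_{\tri(v_k, q_k)}(p) \geq F_k(p)$ into the $\ap$ formula yields $\ap\bigl(p, \{\tri(v_k, q_k)\}\bigr) \leq \ap\bigl(p, \{F_k\}\bigr)$.

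\medskip
\noindent\textbf{Part 1 ($\spm$ inequality).}
Consider the (suboptimal) $\spm$ on the triangular instance that reuses the same visit order $\pi_1, \ldots, \pi_n$ and offers each buyer $k = \pi_i$ the monopoly price $v_k$. Under $\tri(v_k, q_k)$, this price is accepted with probability exactly $q_k = \tilde q_i$ and returns revenue $v_k q_k = r_k(q_k)$ upon acceptance; hence the $i$-th term of the telescoping revenue formula for this triangular $\spm$ coincides with the $i$-th term of the original $\spm$, giving
\[
\spm\bigl(\{\tri(v_k, q_k)\}_{k=1}^n\bigr) \;\geq\; \sum_{i=1}^n r_{\pi_i}(\tilde q_i) \prod_{j < i} (1 - \tilde q_j) \;=\; \spm\bigl(\{F_k\}\bigr).
\]
If the original supremum is not attained, the same argument is run along a supremising sequence, and the inequality passes to the limit because triangular instances are closed under convergence of the parameters $(v_k, q_k) \in \RRP \times [0,1]$.

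\medskip
\noindent\textbf{Main obstacle.}
The only place requiring care is verifying the global CDF comparison $F_{\tri(v_k, q_k)} \geq F_k$ on all of $\RRP$: on $[0, v_k)$ it follows from the triangle-below-$r_k$ bound together with the monotone quantile-price correspondence; at $p = v_k$ one uses right-continuity and accounts for a possible atom of $F_k$; on $p > v_k$ the triangular CDF is identically $1$, so the inequality is automatic. Boundary parameter values $q_k \in \{0, 1\}$ are handled by the standard generalised-inverse convention and make the corresponding triangular factor degenerate without affecting either inequality.
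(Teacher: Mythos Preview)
Your proof is correct and follows essentially the same route as the paper: pick an optimal $\spm$ for $\{F_i\}$, build each triangle from the chosen price and its acceptance probability, reuse the same order and prices to match the $\spm$ revenue, and use concavity of $r_k$ to get the per-buyer CDF domination that yields the $\ap$ inequality.

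One simplification worth noting: the paper sets $v_k$ equal to the posted price $p_{i(k)}^*$ \emph{directly}, rather than defining it as $F_k^{-1}(1-q_k)$. Your detour through the generalised inverse is harmless for continuous strictly increasing $F_k$, but forces you to worry about flat spots and atoms (where $F_k^{-1}(F_k(p))$ need not return $p$); the paper's direct assignment sidesteps this entirely, since then $\tri(v_k,q_k)$ has acceptance probability exactly $q_k$ at price exactly $p_{i(k)}^*$ by construction, and the Part~1 equality is immediate. Also, a small slip in wording: upon acceptance the revenue is $v_k$, not $v_k q_k$; the quantity $v_k q_k$ is the expected step-$i$ contribution before multiplying by the reach probability. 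Your displayed formula is nonetheless correct.
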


\begin{proof}
Given any optimal $\spm$ mechanism for regular instance $\{F_i\}_{i = 1}^n$, denote by $\sigma^* \in \Pi$ the buyer order and $\{p_i^*\}_{i = 1}^n \in \RRP^n$ the posted prices. As mentioned and illustrated in \Cref{fig:transform_rq}, each regular distribution $F_i$ has a concave revenue-quantile curve. We define triangular instance $\{\tri(v_i, q_i)\}_{i = 1}^n$ by letting $v_i \eqdef p_i^*$ and $q_i \eqdef 1 - F_i(p_i^*)$ for each $i \in [n]$.

To see Part 1 of the lemma, by reusing the order $\sigma^* \in \Pi$ and posted prices $\{p_i^*\}_{i = 1}^n \in \RRP^n$, the winning probability of each buyer $i \in [n]$ remains the same. That is, this $\spm$ mechanism gives a revenue of $\spm\big(\sigma^*, \{p_i^*\}_{i = 1}^n, \{F_i\}_{i = 1}^n\big) = \spm\big(\{F_i\}_{i = 1}^n\big)$. This implies Part 1, since the optimal $\spm\big(\{\tri(v_i, q_i)\}_{i = 1}^n\big)$ mechanism at least generates the same revenue.

\begin{figure}[!htbp]
\centering
\subfigure[Revenue-quantile curves]{
\begin{tikzpicture}[thick, smooth, domain = 0: 1.732, scale = 2.5]
\draw[->] (0, 0) -- (1.85, 0);
\draw[->] (0, 0) -- (0, 1.425);
\node[above] at (0, 1.425) {\small $r(q)$};
\node[right] at (1.85, 0) {\small $q$};
\node[below] at (0, 0) {\small $0$};
\draw[very thick, color = blue] plot (\x, {2 * ((\x - 0.732)^3 / 3 - (\x - 0.732)^2 + 2 / 3)});
\draw[very thick, color = red] (0, 0) -- (0.6, 1.297) -- (1.732, 0);
\node[anchor = south west, color = blue] at (1.166, 1.011) {\footnotesize $r_i(q)$};
\draw[dashed] (0.6, 0) -- (0.6, 1.297);
\draw[dashed] (0, 1.297) -- (0.6, 1.297);
\draw[color = blue, fill = red] (0.6, 1.297) circle(0.835pt);
\draw[very thick] (1.732, 0) -- (1.732, 1.5pt);
\node[below] at (1.732, 0) {$1$};
\draw[very thick] (0.6, 0) -- (0.6, 1.5pt);
\node[below] at (0.8, 0) {\footnotesize $q_i = 1 - F_i(p_i^*)$};
\draw[very thick] (0, 1.297, 0) -- (1.5pt, 1.297);
\node[left] at (0, 1.297) {\footnotesize $v_i q_i = r_i(q_i)$};
\end{tikzpicture}
\label{fig:transform_rq}
}
\subfigure[CDF curves]{
\begin{tikzpicture}[thick, smooth, domain = 0: 1.0808, scale = 3.34]
\draw[color = blue, very thick] (0.0000, 0.000) -- (0.0010, 0.001) -- (0.0020, 0.002) -- (0.0030, 0.003) -- (0.0040, 0.004) -- (0.0050, 0.005) -- (0.0060, 0.006) -- (0.0070, 0.007) -- (0.0081, 0.008) -- (0.0091, 0.009) -- (0.0101, 0.010) -- (0.0111, 0.011) -- (0.0121, 0.012) -- (0.0132, 0.013) -- (0.0142, 0.014) -- (0.0152, 0.015) -- (0.0163, 0.016) -- (0.0173, 0.017) -- (0.0183, 0.018) -- (0.0194, 0.019) -- (0.0204, 0.020) -- (0.0214, 0.021) -- (0.0225, 0.022) -- (0.0235, 0.023) -- (0.0246, 0.024) -- (0.0256, 0.025) -- (0.0267, 0.026) -- (0.0277, 0.027) -- (0.0288, 0.028) -- (0.0298, 0.029) -- (0.0309, 0.030) -- (0.0320, 0.031) -- (0.0330, 0.032) -- (0.0341, 0.033) -- (0.0352, 0.034) -- (0.0362, 0.035) -- (0.0373, 0.036) -- (0.0384, 0.037) -- (0.0394, 0.038) -- (0.0405, 0.039) -- (0.0416, 0.040) -- (0.0427, 0.041) -- (0.0438, 0.042) -- (0.0448, 0.043) -- (0.0459, 0.044) -- (0.0470, 0.045) -- (0.0481, 0.046) -- (0.0492, 0.047) -- (0.0503, 0.048) -- (0.0514, 0.049) -- (0.0525, 0.050) -- (0.0536, 0.051) -- (0.0547, 0.052) -- (0.0558, 0.053) -- (0.0569, 0.054) -- (0.0580, 0.055) -- (0.0591, 0.056) -- (0.0602, 0.057) -- (0.0614, 0.058) -- (0.0625, 0.059) -- (0.0636, 0.060) -- (0.0647, 0.061) -- (0.0658, 0.062) -- (0.0670, 0.063) -- (0.0681, 0.064) -- (0.0692, 0.065) -- (0.0704, 0.066) -- (0.0715, 0.067) -- (0.0726, 0.068) -- (0.0738, 0.069) -- (0.0749, 0.070) -- (0.0760, 0.071) -- (0.0772, 0.072) -- (0.0783, 0.073) -- (0.0795, 0.074) -- (0.0806, 0.075) -- (0.0818, 0.076) -- (0.0829, 0.077) -- (0.0841, 0.078) -- (0.0852, 0.079) -- (0.0864, 0.080) -- (0.0876, 0.081) -- (0.0887, 0.082) -- (0.0899, 0.083) -- (0.0911, 0.084) -- (0.0922, 0.085) -- (0.0934, 0.086) -- (0.0946, 0.087) -- (0.0957, 0.088) -- (0.0969, 0.089) -- (0.0981, 0.090) -- (0.0993, 0.091) -- (0.1005, 0.092) -- (0.1016, 0.093) -- (0.1028, 0.094) -- (0.1040, 0.095) -- (0.1052, 0.096) -- (0.1064, 0.097) -- (0.1076, 0.098) -- (0.1088, 0.099) -- (0.1100, 0.100) -- (0.1112, 0.101) -- (0.1124, 0.102) -- (0.1136, 0.103) -- (0.1148, 0.104) -- (0.1160, 0.105) -- (0.1172, 0.106) -- (0.1184, 0.107) -- (0.1197, 0.108) -- (0.1209, 0.109) -- (0.1221, 0.110) -- (0.1233, 0.111) -- (0.1245, 0.112) -- (0.1258, 0.113) -- (0.1270, 0.114) -- (0.1282, 0.115) -- (0.1295, 0.116) -- (0.1307, 0.117) -- (0.1319, 0.118) -- (0.1332, 0.119) -- (0.1344, 0.120) -- (0.1356, 0.121) -- (0.1369, 0.122) -- (0.1381, 0.123) -- (0.1394, 0.124) -- (0.1406, 0.125) -- (0.1419, 0.126) -- (0.1431, 0.127) -- (0.1444, 0.128) -- (0.1456, 0.129) -- (0.1469, 0.130) -- (0.1482, 0.131) -- (0.1494, 0.132) -- (0.1507, 0.133) -- (0.1520, 0.134) -- (0.1532, 0.135) -- (0.1545, 0.136) -- (0.1558, 0.137) -- (0.1570, 0.138) -- (0.1583, 0.139) -- (0.1596, 0.140) -- (0.1609, 0.141) -- (0.1622, 0.142) -- (0.1634, 0.143) -- (0.1647, 0.144) -- (0.1660, 0.145) -- (0.1673, 0.146) -- (0.1686, 0.147) -- (0.1699, 0.148) -- (0.1712, 0.149) -- (0.1725, 0.150) -- (0.1738, 0.151) -- (0.1751, 0.152) -- (0.1764, 0.153) -- (0.1777, 0.154) -- (0.1790, 0.155) -- (0.1803, 0.156) -- (0.1816, 0.157) -- (0.1830, 0.158) -- (0.1843, 0.159) -- (0.1856, 0.160) -- (0.1869, 0.161) -- (0.1882, 0.162) -- (0.1896, 0.163) -- (0.1909, 0.164) -- (0.1922, 0.165) -- (0.1936, 0.166) -- (0.1949, 0.167) -- (0.1962, 0.168) -- (0.1976, 0.169) -- (0.1989, 0.170) -- (0.2002, 0.171) -- (0.2016, 0.172) -- (0.2029, 0.173) -- (0.2043, 0.174) -- (0.2056, 0.175) -- (0.2070, 0.176) -- (0.2083, 0.177) -- (0.2097, 0.178) -- (0.2110, 0.179) -- (0.2124, 0.180) -- (0.2138, 0.181) -- (0.2151, 0.182) -- (0.2165, 0.183) -- (0.2179, 0.184) -- (0.2192, 0.185) -- (0.2206, 0.186) -- (0.2220, 0.187) -- (0.2233, 0.188) -- (0.2247, 0.189) -- (0.2261, 0.190) -- (0.2275, 0.191) -- (0.2289, 0.192) -- (0.2302, 0.193) -- (0.2316, 0.194) -- (0.2330, 0.195) -- (0.2344, 0.196) -- (0.2358, 0.197) -- (0.2372, 0.198) -- (0.2386, 0.199) -- (0.2400, 0.200) -- (0.2414, 0.201) -- (0.2428, 0.202) -- (0.2442, 0.203) -- (0.2456, 0.204) -- (0.2470, 0.205) -- (0.2484, 0.206) -- (0.2498, 0.207) -- (0.2513, 0.208) -- (0.2527, 0.209) -- (0.2541, 0.210) -- (0.2555, 0.211) -- (0.2569, 0.212) -- (0.2584, 0.213) -- (0.2598, 0.214) -- (0.2612, 0.215) -- (0.2627, 0.216) -- (0.2641, 0.217) -- (0.2655, 0.218) -- (0.2670, 0.219) -- (0.2684, 0.220) -- (0.2698, 0.221) -- (0.2713, 0.222) -- (0.2727, 0.223) -- (0.2742, 0.224) -- (0.2756, 0.225) -- (0.2771, 0.226) -- (0.2785, 0.227) -- (0.2800, 0.228) -- (0.2814, 0.229) -- (0.2829, 0.230) -- (0.2844, 0.231) -- (0.2858, 0.232) -- (0.2873, 0.233) -- (0.2888, 0.234) -- (0.2902, 0.235) -- (0.2917, 0.236) -- (0.2932, 0.237) -- (0.2946, 0.238) -- (0.2961, 0.239) -- (0.2976, 0.240) -- (0.2991, 0.241) -- (0.3006, 0.242) -- (0.3021, 0.243) -- (0.3035, 0.244) -- (0.3050, 0.245) -- (0.3065, 0.246) -- (0.3080, 0.247) -- (0.3095, 0.248) -- (0.3110, 0.249) -- (0.3125, 0.250) -- (0.3140, 0.251) -- (0.3155, 0.252) -- (0.3170, 0.253) -- (0.3185, 0.254) -- (0.3200, 0.255) -- (0.3215, 0.256) -- (0.3231, 0.257) -- (0.3246, 0.258) -- (0.3261, 0.259) -- (0.3276, 0.260) -- (0.3291, 0.261) -- (0.3306, 0.262) -- (0.3322, 0.263) -- (0.3337, 0.264) -- (0.3352, 0.265) -- (0.3368, 0.266) -- (0.3383, 0.267) -- (0.3398, 0.268) -- (0.3414, 0.269) -- (0.3429, 0.270) -- (0.3444, 0.271) -- (0.3460, 0.272) -- (0.3475, 0.273) -- (0.3491, 0.274) -- (0.3506, 0.275) -- (0.3522, 0.276) -- (0.3537, 0.277) -- (0.3553, 0.278) -- (0.3568, 0.279) -- (0.3584, 0.280) -- (0.3600, 0.281) -- (0.3615, 0.282) -- (0.3631, 0.283) -- (0.3647, 0.284) -- (0.3662, 0.285) -- (0.3678, 0.286) -- (0.3694, 0.287) -- (0.3709, 0.288) -- (0.3725, 0.289) -- (0.3741, 0.290) -- (0.3757, 0.291) -- (0.3773, 0.292) -- (0.3789, 0.293) -- (0.3804, 0.294) -- (0.3820, 0.295) -- (0.3836, 0.296) -- (0.3852, 0.297) -- (0.3868, 0.298) -- (0.3884, 0.299) -- (0.3900, 0.300) -- (0.3916, 0.301) -- (0.3932, 0.302) -- (0.3948, 0.303) -- (0.3964, 0.304) -- (0.3980, 0.305) -- (0.3996, 0.306) -- (0.4013, 0.307) -- (0.4029, 0.308) -- (0.4045, 0.309) -- (0.4061, 0.310) -- (0.4077, 0.311) -- (0.4093, 0.312) -- (0.4110, 0.313) -- (0.4126, 0.314) -- (0.4142, 0.315) -- (0.4159, 0.316) -- (0.4175, 0.317) -- (0.4191, 0.318) -- (0.4208, 0.319) -- (0.4224, 0.320) -- (0.4240, 0.321) -- (0.4257, 0.322) -- (0.4273, 0.323) -- (0.4290, 0.324) -- (0.4306, 0.325) -- (0.4323, 0.326) -- (0.4339, 0.327) -- (0.4356, 0.328) -- (0.4372, 0.329) -- (0.4389, 0.330) -- (0.4406, 0.331) -- (0.4422, 0.332) -- (0.4439, 0.333) -- (0.4456, 0.334) -- (0.4472, 0.335) -- (0.4489, 0.336) -- (0.4506, 0.337) -- (0.4522, 0.338) -- (0.4539, 0.339) -- (0.4556, 0.340) -- (0.4573, 0.341) -- (0.4590, 0.342) -- (0.4607, 0.343) -- (0.4623, 0.344) -- (0.4640, 0.345) -- (0.4657, 0.346) -- (0.4674, 0.347) -- (0.4691, 0.348) -- (0.4708, 0.349) -- (0.4725, 0.350) -- (0.4742, 0.351) -- (0.4759, 0.352) -- (0.4776, 0.353) -- (0.4793, 0.354) -- (0.4810, 0.355) -- (0.4827, 0.356) -- (0.4845, 0.357) -- (0.4862, 0.358) -- (0.4879, 0.359) -- (0.4896, 0.360) -- (0.4913, 0.361) -- (0.4930, 0.362) -- (0.4948, 0.363) -- (0.4965, 0.364) -- (0.4982, 0.365) -- (0.5000, 0.366) -- (0.5017, 0.367) -- (0.5034, 0.368) -- (0.5052, 0.369) -- (0.5069, 0.370) -- (0.5086, 0.371) -- (0.5104, 0.372) -- (0.5121, 0.373) -- (0.5139, 0.374) -- (0.5156, 0.375) -- (0.5174, 0.376) -- (0.5191, 0.377) -- (0.5209, 0.378) -- (0.5226, 0.379) -- (0.5244, 0.380) -- (0.5262, 0.381) -- (0.5279, 0.382) -- (0.5297, 0.383) -- (0.5315, 0.384) -- (0.5332, 0.385) -- (0.5350, 0.386) -- (0.5368, 0.387) -- (0.5385, 0.388) -- (0.5403, 0.389) -- (0.5421, 0.390) -- (0.5439, 0.391) -- (0.5457, 0.392) -- (0.5475, 0.393) -- (0.5492, 0.394) -- (0.5510, 0.395) -- (0.5528, 0.396) -- (0.5546, 0.397) -- (0.5564, 0.398) -- (0.5582, 0.399) -- (0.5600, 0.400) -- (0.5618, 0.401) -- (0.5636, 0.402) -- (0.5654, 0.403) -- (0.5672, 0.404) -- (0.5690, 0.405) -- (0.5708, 0.406) -- (0.5727, 0.407) -- (0.5745, 0.408) -- (0.5763, 0.409) -- (0.5781, 0.410) -- (0.5799, 0.411) -- (0.5818, 0.412) -- (0.5836, 0.413) -- (0.5854, 0.414) -- (0.5872, 0.415) -- (0.5891, 0.416) -- (0.5909, 0.417) -- (0.5927, 0.418) -- (0.5946, 0.419) -- (0.5964, 0.420) -- (0.5982, 0.421) -- (0.6001, 0.422) -- (0.6019, 0.423) -- (0.6038, 0.424) -- (0.6056, 0.425) -- (0.6075, 0.426) -- (0.6093, 0.427) -- (0.6112, 0.428) -- (0.6130, 0.429) -- (0.6149, 0.430) -- (0.6168, 0.431) -- (0.6186, 0.432) -- (0.6205, 0.433) -- (0.6224, 0.434) -- (0.6242, 0.435) -- (0.6261, 0.436) -- (0.6280, 0.437) -- (0.6299, 0.438) -- (0.6317, 0.439) -- (0.6336, 0.440) -- (0.6355, 0.441) -- (0.6374, 0.442) -- (0.6393, 0.443) -- (0.6411, 0.444) -- (0.6430, 0.445) -- (0.6449, 0.446) -- (0.6468, 0.447) -- (0.6487, 0.448) -- (0.6506, 0.449) -- (0.6525, 0.450) -- (0.6544, 0.451) -- (0.6563, 0.452) -- (0.6582, 0.453) -- (0.6601, 0.454) -- (0.6620, 0.455) -- (0.6639, 0.456) -- (0.6659, 0.457) -- (0.6678, 0.458) -- (0.6697, 0.459) -- (0.6716, 0.460) -- (0.6735, 0.461) -- (0.6755, 0.462) -- (0.6774, 0.463) -- (0.6793, 0.464) -- (0.6812, 0.465) -- (0.6832, 0.466) -- (0.6851, 0.467) -- (0.6870, 0.468) -- (0.6890, 0.469) -- (0.6909, 0.470) -- (0.6929, 0.471) -- (0.6948, 0.472) -- (0.6967, 0.473) -- (0.6987, 0.474) -- (0.7006, 0.475) -- (0.7026, 0.476) -- (0.7045, 0.477) -- (0.7065, 0.478) -- (0.7085, 0.479) -- (0.7104, 0.480) -- (0.7124, 0.481) -- (0.7143, 0.482) -- (0.7163, 0.483) -- (0.7183, 0.484) -- (0.7202, 0.485) -- (0.7222, 0.486) -- (0.7242, 0.487) -- (0.7262, 0.488) -- (0.7281, 0.489) -- (0.7301, 0.490) -- (0.7321, 0.491) -- (0.7341, 0.492) -- (0.7361, 0.493) -- (0.7380, 0.494) -- (0.7400, 0.495) -- (0.7420, 0.496) -- (0.7440, 0.497) -- (0.7460, 0.498) -- (0.7480, 0.499) -- (0.7500, 0.500) -- (0.7520, 0.501) -- (0.7540, 0.502) -- (0.7560, 0.503) -- (0.7580, 0.504) -- (0.7600, 0.505) -- (0.7621, 0.506) -- (0.7641, 0.507) -- (0.7661, 0.508) -- (0.7681, 0.509) -- (0.7701, 0.510) -- (0.7721, 0.511) -- (0.7742, 0.512) -- (0.7762, 0.513) -- (0.7782, 0.514) -- (0.7802, 0.515) -- (0.7823, 0.516) -- (0.7843, 0.517) -- (0.7863, 0.518) -- (0.7884, 0.519) -- (0.7904, 0.520) -- (0.7925, 0.521) -- (0.7945, 0.522) -- (0.7965, 0.523) -- (0.7986, 0.524) -- (0.8006, 0.525) -- (0.8027, 0.526) -- (0.8047, 0.527) -- (0.8068, 0.528) -- (0.8089, 0.529) -- (0.8109, 0.530) -- (0.8130, 0.531) -- (0.8150, 0.532) -- (0.8171, 0.533) -- (0.8192, 0.534) -- (0.8212, 0.535) -- (0.8233, 0.536) -- (0.8254, 0.537) -- (0.8275, 0.538) -- (0.8295, 0.539) -- (0.8316, 0.540) -- (0.8337, 0.541) -- (0.8358, 0.542) -- (0.8379, 0.543) -- (0.8400, 0.544) -- (0.8420, 0.545) -- (0.8441, 0.546) -- (0.8462, 0.547) -- (0.8483, 0.548) -- (0.8504, 0.549) -- (0.8525, 0.550) -- (0.8546, 0.551) -- (0.8567, 0.552) -- (0.8588, 0.553) -- (0.8609, 0.554) -- (0.8630, 0.555) -- (0.8652, 0.556) -- (0.8673, 0.557) -- (0.8694, 0.558) -- (0.8715, 0.559) -- (0.8736, 0.560) -- (0.8757, 0.561) -- (0.8779, 0.562) -- (0.8800, 0.563) -- (0.8821, 0.564) -- (0.8842, 0.565) -- (0.8864, 0.566) -- (0.8885, 0.567) -- (0.8906, 0.568) -- (0.8928, 0.569) -- (0.8949, 0.570) -- (0.8971, 0.571) -- (0.8992, 0.572) -- (0.9014, 0.573) -- (0.9035, 0.574) -- (0.9057, 0.575) -- (0.9078, 0.576) -- (0.9100, 0.577) -- (0.9121, 0.578) -- (0.9143, 0.579) -- (0.9164, 0.580) -- (0.9186, 0.581) -- (0.9208, 0.582) -- (0.9229, 0.583) -- (0.9251, 0.584) -- (0.9273, 0.585) -- (0.9294, 0.586) -- (0.9316, 0.587) -- (0.9338, 0.588) -- (0.9360, 0.589) -- (0.9381, 0.590) -- (0.9403, 0.591) -- (0.9425, 0.592) -- (0.9447, 0.593) -- (0.9469, 0.594) -- (0.9491, 0.595) -- (0.9512, 0.596) -- (0.9534, 0.597) -- (0.9556, 0.598) -- (0.9578, 0.599) -- (0.9600, 0.600) -- (0.9622, 0.601) -- (0.9644, 0.602) -- (0.9666, 0.603) -- (0.9688, 0.604) -- (0.9711, 0.605) -- (0.9733, 0.606) -- (0.9755, 0.607) -- (0.9777, 0.608) -- (0.9799, 0.609) -- (0.9821, 0.610) -- (0.9844, 0.611) -- (0.9866, 0.612) -- (0.9888, 0.613) -- (0.9910, 0.614) -- (0.9933, 0.615) -- (0.9955, 0.616) -- (0.9977, 0.617) -- (1.0000, 0.618) -- (1.0022, 0.619) -- (1.0044, 0.620) -- (1.0067, 0.621) -- (1.0089, 0.622) -- (1.0112, 0.623) -- (1.0134, 0.624) -- (1.0157, 0.625) -- (1.0179, 0.626) -- (1.0202, 0.627) -- (1.0224, 0.628) -- (1.0247, 0.629) -- (1.0269, 0.630) -- (1.0292, 0.631) -- (1.0315, 0.632) -- (1.0337, 0.633) -- (1.0360, 0.634) -- (1.0383, 0.635) -- (1.0405, 0.636) -- (1.0428, 0.637) -- (1.0451, 0.638) -- (1.0474, 0.639) -- (1.0496, 0.640) -- (1.0519, 0.641) -- (1.0542, 0.642) -- (1.0565, 0.643) -- (1.0588, 0.644) -- (1.0611, 0.645) -- (1.0634, 0.646) -- (1.0657, 0.647) -- (1.0679, 0.648) -- (1.0702, 0.649) -- (1.0725, 0.650) -- (1.0748, 0.651) -- (1.0772, 0.652) -- (1.0795, 0.653) -- (1.0818, 0.654) -- (1.0841, 0.655) -- (1.0864, 0.656) -- (1.0887, 0.657) -- (1.0910, 0.658) -- (1.0933, 0.659) -- (1.0956, 0.660) -- (1.0980, 0.661) -- (1.1003, 0.662) -- (1.1026, 0.663) -- (1.1049, 0.664) -- (1.1073, 0.665) -- (1.1096, 0.666) -- (1.1119, 0.667) -- (1.1143, 0.668) -- (1.1166, 0.669) -- (1.1190, 0.670) -- (1.1213, 0.671) -- (1.1236, 0.672) -- (1.1260, 0.673) -- (1.1283, 0.674) -- (1.1307, 0.675) -- (1.1330, 0.676) -- (1.1354, 0.677) -- (1.1377, 0.678) -- (1.1401, 0.679) -- (1.1425, 0.680) -- (1.1448, 0.681) -- (1.1472, 0.682) -- (1.1495, 0.683) -- (1.1519, 0.684) -- (1.1543, 0.685) -- (1.1567, 0.686) -- (1.1590, 0.687) -- (1.1614, 0.688) -- (1.1638, 0.689) -- (1.1662, 0.690) -- (1.1685, 0.691) -- (1.1709, 0.692) -- (1.1733, 0.693) -- (1.1757, 0.694) -- (1.1781, 0.695) -- (1.1805, 0.696) -- (1.1829, 0.697) -- (1.1853, 0.698) -- (1.1877, 0.699) -- (1.1901, 0.700) -- (1.1925, 0.701) -- (1.1949, 0.702) -- (1.1973, 0.703) -- (1.1997, 0.704) -- (1.2021, 0.705) -- (1.2045, 0.706) -- (1.2069, 0.707) -- (1.2093, 0.708) -- (1.2118, 0.709) -- (1.2142, 0.710) -- (1.2166, 0.711) -- (1.2190, 0.712) -- (1.2214, 0.713) -- (1.2239, 0.714) -- (1.2263, 0.715) -- (1.2287, 0.716) -- (1.2312, 0.717) -- (1.2336, 0.718) -- (1.2360, 0.719) -- (1.2385, 0.720) -- (1.2409, 0.721) -- (1.2434, 0.722) -- (1.2458, 0.723) -- (1.2483, 0.724) -- (1.2507, 0.725) -- (1.2532, 0.726) -- (1.2556, 0.727) -- (1.2581, 0.728) -- (1.2605, 0.729) -- (1.2630, 0.730) -- (1.2654, 0.731) -- (1.2679, 0.732) -- (1.2704, 0.733) -- (1.2728, 0.734) -- (1.2753, 0.735) -- (1.2778, 0.736) -- (1.2803, 0.737) -- (1.2827, 0.738) -- (1.2852, 0.739) -- (1.2877, 0.740) -- (1.2902, 0.741) -- (1.2927, 0.742) -- (1.2951, 0.743) -- (1.2976, 0.744) -- (1.3001, 0.745) -- (1.3026, 0.746) -- (1.3051, 0.747) -- (1.3076, 0.748) -- (1.3101, 0.749) -- (1.3126, 0.750) -- (1.3151, 0.751) -- (1.3176, 0.752) -- (1.3201, 0.753) -- (1.3226, 0.754) -- (1.3251, 0.755) -- (1.3276, 0.756) -- (1.3302, 0.757) -- (1.3327, 0.758) -- (1.3352, 0.759) -- (1.3377, 0.760) -- (1.3402, 0.761) -- (1.3428, 0.762) -- (1.3453, 0.763) -- (1.3478, 0.764) -- (1.3503, 0.765) -- (1.3529, 0.766) -- (1.3554, 0.767) -- (1.3579, 0.768) -- (1.3605, 0.769) -- (1.3630, 0.770) -- (1.3656, 0.771) -- (1.3681, 0.772) -- (1.3706, 0.773) -- (1.3732, 0.774) -- (1.3757, 0.775) -- (1.3783, 0.776) -- (1.3809, 0.777) -- (1.3834, 0.778) -- (1.3860, 0.779) -- (1.3885, 0.780) -- (1.3911, 0.781) -- (1.3937, 0.782) -- (1.3962, 0.783) -- (1.3988, 0.784) -- (1.4014, 0.785) -- (1.4039, 0.786) -- (1.4065, 0.787) -- (1.4091, 0.788) -- (1.4117, 0.789) -- (1.4142, 0.790) -- (1.4168, 0.791) -- (1.4194, 0.792) -- (1.4220, 0.793) -- (1.4246, 0.794) -- (1.4272, 0.795) -- (1.4298, 0.796) -- (1.4324, 0.797) -- (1.4350, 0.798) -- (1.4375, 0.799) -- (1.4402, 0.800) -- (1.4428, 0.801) -- (1.4454, 0.802) -- (1.4480, 0.803) -- (1.4506, 0.804) -- (1.4532, 0.805) -- (1.4558, 0.806) -- (1.4584, 0.807) -- (1.4610, 0.808) -- (1.4636, 0.809) -- (1.4663, 0.810) -- (1.4689, 0.811) -- (1.4715, 0.812) -- (1.4741, 0.813) -- (1.4768, 0.814) -- (1.4794, 0.815) -- (1.4820, 0.816) -- (1.4847, 0.817) -- (1.4873, 0.818) -- (1.4899, 0.819) -- (1.4926, 0.820) -- (1.4952, 0.821) -- (1.4979, 0.822) -- (1.5000, 0.8228);
\node[anchor = north west, color = blue] at (0.54, 0.3888) {\small $F_i(p)$};
\draw[->] (0, 0) -- (1.5, 0);
\draw[->] (0, 0) -- (0, 1.1);
\node[above] at (0, 1.1) {\footnotesize $F_i(p)$};
\node[right] at (1.5, 0) {\small $p$};
\node[below] at (1.0808, 0) {\footnotesize $v_i = p_i^*$};
\node[left] at (0, 1) {\small $1$};
\node[below] at (0, 0) {\small $0$};
\node[left] at (0, 0.6536) {\footnotesize $1 - q_i = F_i(p_i^*)$};
\draw[style = dashed] (0, 1) -- (1.0808, 1);
\draw[style = dashed] (1.0808, 0) -- (1.0808, 1);

\draw[color = red] plot (\x, {\x / (\x + 0.5728)});
\draw[style = dashed] (0, 0.6536) -- (1.0808, 0.6536);
\draw[red, fill = white] (1.0808, 0.6536) circle (0.625pt);
\draw[color = red] (1.0808, 1) -- (1.5, 1);
\draw[color = red, fill = red] (1.0808, 1) circle(0.625pt);

\draw[very thick] (0pt, 1) -- (1.25pt, 1);
\draw[very thick] (0pt, 0.6536) -- (1.25pt, 0.6536);
\draw[very thick] (1.0808, 0pt) -- (1.0808, 1.25pt);
\end{tikzpicture}
\label{fig:transform_cdf}
}
\caption{Transformation from regular instance $\{F_i\}_{i = 1}^n$ to triangular instance $\{\tri(v_i, q_i)\}_{i = 1}^n$.}
\label{fig:transform}
\end{figure}
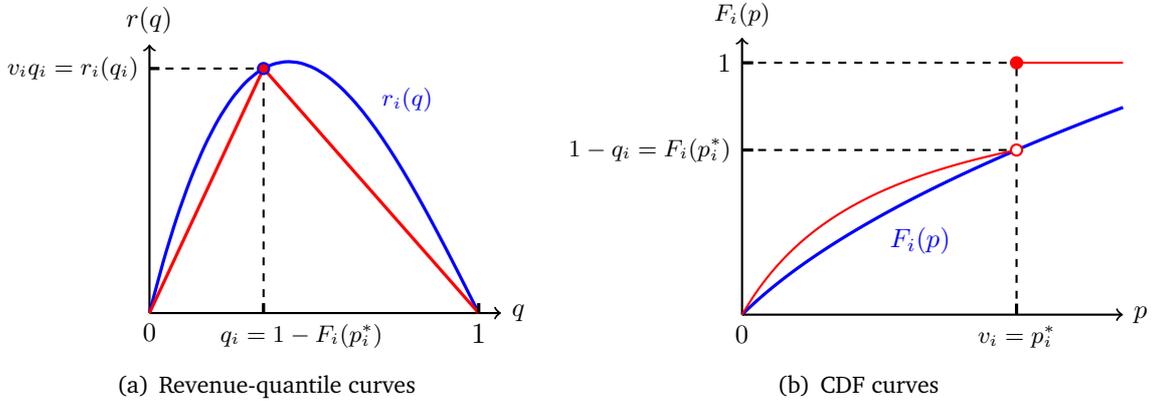

For brevity, denote by $\FF_i$ the CDF of triangular distribution $\tri(v_i, q_i)$, and $\rr_i$ the revenue quantile curve. We infer from \Cref{fig:transform_cdf} that $\rr_i(q) \leq r_i(q)$ for all $q \in [0, 1]$, i.e.\ $\FF_i(p) \geq F_i(p)$ for all $p \in \RRP$. Such stochastic domination directly indicates Part 2 (recall the $\ap(p)$ revenue formula presented in \Cref{subsec:prelim:mech}), and thus completes the proof of \Cref{lem:spm_ap_reduction1}.
\end{proof}

To find the optimal solution to Program~\eqref{prog:spm_ap0}, we safely restrict our attention to triangular instances. For convenience, we re-index a triangular instance such that $v_1 \geq v_2 \geq \cdots \geq v_n$, and still denote by $F_i$ the CDF of each distribution $\tri(v_i, q_i)$. Thus, the $\spm$ revenue formula in Part 1 of \Cref{lem:spm_tri_best} is applicable. Further, we can formulate the $\ap$ revenue explicitly:
\begin{align*}
\mbox{$\ap(p) = p \cdot \Big\{1 - \prod_{i: v_i \geq p} \big[\frac{(1 - q_i) \cdot p}{(1 - q_i) \cdot p + v_i q_i}\big]\Big\} \leq 1$}, && \forall p \in \RRP,
\end{align*}
which converts into constraint~\eqref{cstr:spm_ap0.5} after being rearranged. Therefore, the next mathematical program admits the same optimal objective value as Program~\eqref{prog:spm_ap0}. As mentioned before, the new program has exactly the same set of constraints as in \cite{AHNPY15}, but a different objective function.

\vspace{.1in}
\noindent\fcolorbox{black}{lightgray!50}{\begin{minipage}{0.977\textwidth}
\vspace{-.1in}
\begin{align}
\label{prog:spm_ap0.5}\tag{P2}
& \max_{\{\tri(v_i, q_i)\}^n} && \mbox{$\spm = \sum_{i = 1}^n v_i q_i \cdot \prod_{j = 1}^{i - 1} (1 - q_j)$} \\
\label{cstr:spm_ap0.5}\tag{C2}
& \mbox{subject to:} && \mbox{$\sum_{i: v_i \geq p} \ln\big(1 + \frac{v_i q_i}{1 - q_i} \cdot \frac{1}{p}\big) \leq -\ln(1 - p^{-1})$}, && \forall p \in (1, \infty) \\
\notag
& && v_1 \geq v_2 \geq \cdots \geq v_n
\end{align}
\end{minipage}}

\vspace{.1in}
\noindent
The following propositions further narrow down the optimal-solution space of Program~\eqref{prog:spm_ap0.5}:
\begin{enumerate}[font = {\bfseries}]
\item W.l.o.g.\ a worst-case instance satisfies the {\em strict} inequalities $v_1 > v_2 > \cdots > v_n$. Namely, if two consecutive triangular distributions have the same $v_i = v_{i + 1}$, we can replace them (as a whole) with another triangular distribution, under which the $\spm$ revenue remains the same yet constraint~\eqref{cstr:spm_ap0.5} still holds for all $p \in (1, \infty)$.
    
In \Cref{subapp:spm_ap_reductions}, we formalize this proposition as \Cref{lem:spm_ap_reduction3}.

\item W.l.o.g.\ a worst-case instance includes a special distribution $\tri(\infty)$, whose CDF equals $F_0(p) = \frac{p}{p + 1}$ for all $p \in \RRP$. Hence, we can transform Program~\eqref{prog:spm_ap0.5} as follows:
    \begin{enumerate}[label = (\roman*), font = {\bfseries}]
    \item The objective function now becomes $\spm = 1 + \sum_{i = 1}^n v_i q_i \cdot \prod_{j = 1}^{i - 1} (1 - q_j)$.
    \item Constraint~\eqref{cstr:spm_ap0.5} becomes the following: for any $p \in (1, \infty)$,
    \[
    \mbox{$\sum_{i: v_i \geq p} \ln\Big(1 + \frac{v_i q_i}{1 - q_i} \cdot \frac{1}{p}\Big) \leq -\ln(1 - p^{-1}) + \ln F_0(p) = -\ln(1 - p^{-2})$}.
    \]
    \end{enumerate}
    In \Cref{subapp:spm_ap_reduction4}, we formalize this proposition as \Cref{lem:spm_ap_reduction4}; its proof relies on a reduction very similar to the one by \citet[Lemma~4.2]{AHNPY15}.

\item W.l.o.g.\ a worst-case instance includes a special distribution $\tri(1, 1)$, i.e.\ a special buyer with a deterministic value of $1$. In particular, this buyer has no effect on constraint~\eqref{cstr:spm_ap0.5}, and serves as the default winner in the $\spm$ mechanism (when every other buyer has a value of $b_i \leq 1$), hence a guaranteed $\spm$ revenue of $1$.

Given these, w.l.o.g.\ $v_n \geq 1$, and the objective function of Program~\eqref{prog:spm_ap0.5} now increases to $\spm = 2 + \sum_{i = 1}^n (v_i - 1) \cdot q_i \cdot \prod_{j = 1}^{i - 1} (1 - q_j)$.
\end{enumerate}
For simplicity, we never explicitly mention the special distributions $\tri(\infty)$ and $\tri(1, 1)$ elsewhere. To summarize, the following mathematical program admits the same optimal solution as Program~\eqref{prog:spm_ap0} and Program~\eqref{prog:spm_ap0.5}.

\vspace{.1in}
\noindent\fcolorbox{black}{lightgray!50}{
\begin{minipage}{0.977\textwidth}
\vspace{-.1in}
\begin{align}
\label{prog:spm_ap1}\tag{P3}
& \max_{\big\{\tri(v_i, q_i)\big\}^n} && \mbox{$\spm = 2 + \sum_{i = 1}^n (v_i - 1) \cdot q_i \cdot \prod_{j = 1}^{i - 1} (1 - q_j)$} \\
\label{cstr:spm_ap1}\tag{C3.1}
& \mbox{subject to:} && \mbox{$\sum_{i: v_i \geq p} \ln\big(1 + \frac{v_i q_i}{1 - q_i} \cdot \frac{1}{p}\big) \leq -\ln(1 - p^{-2})$}, && \forall p \in (1, \infty) \\
\notag
& &&  v_1 > v_2 > \cdots > v_n \geq 1
\end{align}
\end{minipage}}

\subsection{Upper-Bound Analysis II: Optimization}
\label{subsec:spm_ap_solution}

We continue to handle Program~\eqref{prog:spm_ap1} by following the proof framework of \cite{AHNPY15}. It is easy to check that $\ln\big(1 + \frac{x}{p}\big) \geq \frac{1}{p} \cdot \ln(1 + x)$ when $p \geq 1$ and $x \geq 0$. Applying this to the $\lhs$ of constraint~\eqref{cstr:spm_ap1} and then restricting ourselves to prices $p \in \{v_k\}_{k = 1}^{n}$, we are left with constraints
\begin{equation}
\label{cstr:spm_ap4}\tag{C3.2}
\mbox{$\sum_{i = 1}^k \ln\big(1 + \frac{v_i q_i}{1 - q_i}\big) \leq \R(v_k)$}, \quad\quad\quad\quad \forall k \in [n],
\end{equation}
where $\R(p) \eqdef -p \cdot \ln\big(1 - p^{-2}\big)$. Although the constraint gets relaxed, later we will see that the upper bound of $\C = 2 + \Int{1}{\infty} \big(1 - e^{-\Q(x)}\big) \cdot \dd x \approx 2.6202$ still holds.

Recall the function $\Q(p) = -\ln\big(1 - p^{-2}\big) - \frac{1}{2} \cdot \sum_{k = 1}^{\infty} k^{-2} \cdot p^{-2k}$ from \Cref{thm:spm_ap}. The reader may wonder how to relate constraint~\eqref{cstr:spm_ap4} and the constant $\C$, which involve two different functions $\R(p)$ and $\Q(p)$. Actually, these two functions together satisfy the ODE in Part 1 of the next \Cref{lem:spm_ap_RQ} (proved in \Cref{subapp:spm_ap_solution}).

In what follows, we characterize a worst-case instance through its monopoly price $\{v_i\}_{i = 1}^n$ and quantile $\{q_i\}_{i = 1}^n$. Particularly, we will construct an interim instance, whose $\{q_i\}_{i = 1}^n$ satisfy an $\R(p)$-based recursive formula of $\{v_i\}_{i = 1}^n$ (see constraint~\eqref{cstr:spm_ap5}). By contrast, the parameters of our worst-case instance satisfy another $\Q(p)$-based formula. It turns out that the worst-case instance generates a better $\spm$ revenue than the interim instance (see \Cref{lem:spm_ap_inequality}), for which the proof crucially relies on \Cref{lem:spm_ap_RQ}.


\begin{fact}
\label{lem:spm_ap_RQ}
For functions $\R(p) = -p \cdot \ln(1 - p^{-2})$ and $\Q(p) = -\ln(1 - p^{-2}) - \frac{1}{2} \cdot \sum_{k = 1}^{\infty} k^{-2} \cdot p^{-2k}$, the following holds:
\begin{enumerate}[font = {\em\bfseries}]
\item $\R'(p) = p \cdot \Q'(p) < 0$ for any $p \in (1, \infty)$.
\item $\lim\limits_{p \to 1^+} \R(p) = \lim\limits_{p \to 1^+} \Q(p) = \infty$ and $\lim\limits_{p \to \infty} \R(p) = \lim\limits_{p \to \infty} \Q(p) = 0$.
\end{enumerate}
\end{fact}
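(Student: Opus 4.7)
The statement is essentially a calculus fact, so I would handle it by direct computation, with care around the series manipulations.

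\textbf{Part 1.} For the derivative identity, I would differentiate $\R$ directly: using $\frac{d}{dp}\ln(1-p^{-2}) = \frac{2p^{-3}}{1-p^{-2}} = \frac{2}{p^3-p}$, the product rule gives
\[
\R'(p) \;=\; -\ln(1-p^{-2}) - \frac{2}{p^2-1}.
\]
For $\Q$, I would differentiate term-by-term, which is valid since the series $\sum k^{-2} p^{-2k}$ converges uniformly on $[1+\delta,\infty)$ for any $\delta>0$. Then I would collapse the resulting series $\sum_{k\ge 1} k^{-1} p^{-2k}$ back into $-\ln(1-p^{-2})$ via the standard identity $\sum_{k\ge 1} x^k/k = -\ln(1-x)$ with $x=p^{-2}\in(0,1)$, yielding
\[
\Q'(p) \;=\; -\frac{2}{p^3-p} - \frac{1}{p}\ln(1-p^{-2}) \;=\; \frac{1}{p}\,\R'(p),
\]
which is the required identity $\R'(p)=p\Q'(p)$.

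\textbf{Negativity.} Substituting $x=p^{-2}\in(0,1)$ reduces the inequality $\R'(p)<0$ to $-\ln(1-x)<\tfrac{2x}{1-x}$. I would verify this by defining $f(x):=\tfrac{2x}{1-x}+\ln(1-x)$, observing $f(0)=0$, and computing $f'(x)=\tfrac{2}{(1-x)^2}-\tfrac{1}{1-x}=\tfrac{1+x}{(1-x)^2}>0$ on $(0,1)$, so $f>0$ on $(0,1)$. (A cleaner alternative is a term-by-term comparison of the two power series, since $\tfrac{2x}{1-x}=2\sum_{k\ge 1}x^k$ dominates $\sum_{k\ge 1} x^k/k=-\ln(1-x)$ coefficient-wise.)

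\textbf{Part 2.} For $p\to 1^+$: one has $1-p^{-2}\to 0^+$, hence $-\ln(1-p^{-2})\to+\infty$, which immediately gives $\R(p)\to+\infty$; for $\Q$, the residual series $\tfrac{1}{2}\sum k^{-2}p^{-2k}$ is bounded above by $\tfrac{1}{2}\sum k^{-2}=\pi^2/12$, so the log term dominates and $\Q(p)\to+\infty$. For $p\to\infty$: the expansion $-\ln(1-p^{-2}) = p^{-2}+O(p^{-4})$ gives $\R(p)=p\cdot(p^{-2}+O(p^{-4}))\to 0$, and for $\Q$ both the log term and the tail series tend to $0$ termwise (dominated by $\sum k^{-2}p^{-2k}\le p^{-2}\sum k^{-2}$). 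No step here is delicate; the only mild care needed is justifying differentiation/evaluation of the series, which follows from uniform convergence on compact subsets of $(1,\infty)$ and the $M$-test at $p=1$. I expect no serious obstacle — this Fact is a setup lemma whose role is to fuel the comparison argument in \Cref{lem:spm_ap_inequality}, not to contain any deep content of its own.
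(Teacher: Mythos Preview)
Your proof is correct and follows essentially the same route as the paper: both arguments hinge on the Taylor expansion $-\ln(1-x)=\sum_{k\ge 1}x^k/k$ and term-by-term differentiation. The only cosmetic difference is that the paper expands $\R$ and $\Q$ into power series first and then reads off $\R'(p)=-\sum_{k\ge 1}(2-k^{-1})p^{-2k}$, from which negativity is immediate since every coefficient $2-k^{-1}$ is positive; you instead keep $\R'$ in closed form and verify $-\ln(1-x)<2x/(1-x)$ as a separate (equally easy) step.
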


With respect to the relaxed program with constraint~\eqref{cstr:spm_ap4}, the optimal solution turns out to make constraint~\eqref{cstr:spm_ap4} tight for every $k \in [n]$. This is formalized as \Cref{lem:spm_ap_relexation}, whose proof is very similar to that of \citet[Lemma~4.4]{AHNPY15} and is deferred to \Cref{subapp:spm_ap_solution}.

\begin{lemma}
\label{lem:spm_ap_relexation}
Given any triangular instance $\{\tri(v_i, q_i)\}_{i = 1}^n$ with constraint~\eqref{cstr:spm_ap4} loose for some $i \in [n]$, there exists another triangular instance $\{\tri(\vv_i, \qq_i)\}_{i = 1}^n$ satisfying the following:
\begin{enumerate}[font = {\em\bfseries}]
\item Constraint~\eqref{cstr:spm_ap4} still holds, and is tight for each $i \in [n]$.
\item $\spm\big(\{\tri(\vv_i, \qq_i)\}_{i = 1}^n\big) \geq \spm\big(\{\tri(v_i, q_i)\}_{i = 1}^n\big)$.
\end{enumerate}
\end{lemma}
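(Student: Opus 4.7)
The plan is to construct $\{\tri(\vv_i,\qq_i)\}_{i=1}^n$ by keeping the monopoly prices $\vv_i := v_i$ and solving iteratively for quantiles $\qq_i \in [0,1)$ that make constraint~\eqref{cstr:spm_ap4} hold with equality at every index. Setting $\R(v_0) := 0$, for each $k \in [n]$ I would let $\qq_k$ be the unique root in $[0,1)$ of
\[
\ln\!\left(1 + \frac{v_k\,\qq_k}{1-\qq_k}\right) \;=\; \R(v_k) - \R(v_{k-1}).
\]
The right-hand side is strictly positive because, by \Cref{lem:spm_ap_RQ}, $\R$ is strictly decreasing on $(1,\infty)$ while the $v_i$'s are strictly decreasing with $v_n \geq 1$; the left-hand side is a continuous bijection from $[0,1)$ onto $[0,\infty)$, so $\qq_k$ exists and is unique. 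Telescoping these $n$ equalities yields $\sum_{i \leq k}\ln(1 + \tfrac{v_i\,\qq_i}{1-\qq_i}) = \R(v_k)$ for every $k$, which is Part~1.

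For Part~2, I would first rewrite $\spm$ in a form that exposes its dependence on each individual $q_j$. Abel summation on $\spm = 2 + \sum_{i=1}^n (v_i - 1)\, q_i \prod_{j < i}(1-q_j)$ gives
\[
\spm \;=\; 1 + v_1 + \sum_{i=2}^n (v_i - v_{i-1})\, P_i - (v_n - 1)\, P_{n+1},
\]
where $P_i := \prod_{j<i}(1-q_j)$. Since $v_1 > v_2 > \cdots > v_n \geq 1$, every $P_i$ with $i \geq 2$ and also $P_{n+1}$ appears with a non-positive coefficient, while each $P_i$ is strictly decreasing in every $q_j$ with $j < i$. Hence the SPM revenue is monotonically non-decreasing in each coordinate $q_j$ when the others are held fixed.

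Pointwise monotonicity alone is not enough, since the passage $\{q_i\} \to \{\qq_i\}$ may decrease some coordinates (only the prefix sums $S_k := \sum_{i \leq k} \ell_i$, with $\ell_i := \ln(1+\tfrac{v_i q_i}{1-q_i})$, are a priori guaranteed to weakly increase). The plan is therefore a local exchange argument paralleling \citet[Lemma~4.4]{AHNPY15}: letting $k^*$ be the smallest index at which constraint~\eqref{cstr:spm_ap4} is loose, perturb $\ell_{k^*} \mapsto \ell_{k^*} + \delta$ and $\ell_{k^*+1} \mapsto \ell_{k^*+1} - \delta$ for the largest $\delta > 0$ that keeps every constraint valid (spilling into $\ell_{k^*+2}, \ell_{k^*+3}, \ldots$ if $\ell_{k^*+1}$ is exhausted). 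Such a swap keeps $S_k$ unchanged for $k \neq k^*$, strictly increases $S_{k^*}$, and strictly shrinks the set of loose constraints; finitely many iterations (together with continuity of $\spm$ in the $\qq_i$'s) then drive the instance to $\{\tri(v_i,\qq_i)\}$.

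The hardest step is to verify that each basic swap weakly increases $\spm$. With every $q_j$ for $j \notin \{k^*, k^*+1\}$ held fixed, $\spm$ becomes a function of $(\ell_{k^*}, \ell_{k^*+1})$ alone, and the claim reduces to showing that on the line $\ell_{k^*} + \ell_{k^*+1} = \mathrm{const}$ the revenue is maximized by taking $\ell_{k^*}$ as large as possible. Using $1 - q_k = v_k/(e^{\ell_k}+v_k-1)$ together with the Abel-sum formula above, this should boil down to an elementary one-variable inequality that exploits the strict ordering $v_{k^*} > v_{k^*+1}$ (the intuitive content being that shifting quantile mass to the higher-valued buyer in the $\spm$ order cannot hurt revenue), and whose proof follows the same template as the corresponding step of \citet{AHNPY15}.
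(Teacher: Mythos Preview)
Your route diverges from the paper's at the decisive step. You keep the monopoly prices $\vv_i=v_i$ fixed and try to push the quantiles up via adjacent swaps in the $\ell$-coordinates; the paper does the opposite. It keeps every $\ell_i=\ln\bigl(1+\tfrac{v_iq_i}{1-q_i}\bigr)$ unchanged and instead \emph{raises the monopoly price} of the first loose index $k$, setting $\vv_k=\R^{-1}\!\bigl(\ell_k+\R(v_{k-1})\bigr)\in(v_k,v_{k-1})$ and choosing $\qq_k$ so that $\ell_k$ is preserved. Because $\R$ is decreasing, this makes constraint~\eqref{cstr:spm_ap4} tight at $k$ while all other constraints (both sides) are untouched. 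The $\spm$ comparison is then immediate: from $\tfrac{\vv_k\qq_k}{1-\qq_k}=\tfrac{v_kq_k}{1-q_k}$ and $\vv_k>v_k$ one gets $\qq_k<q_k$ and $\vv_k\qq_k>v_kq_k$, hence $(\vv_k-1)\qq_k>(v_k-1)q_k$ and $1-\qq_k>1-q_k$, so every summand of the $\spm$ formula in Program~\eqref{prog:spm_ap1} weakly increases. Induction on the smallest loose index finishes.

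The gap in your plan is exactly the step you flag as ``hardest'': that the swap $\ell_{k^*}\mapsto\ell_{k^*}+\delta$, $\ell_{k^*+1}\mapsto\ell_{k^*+1}-\delta$ weakly increases $\spm$. Your Abel identity shows $\spm$ is non-decreasing in each $q_j$ separately, but the swap decreases $q_{k^*+1}$, so coordinatewise monotonicity does not suffice. Writing out the varying part as $(v_{k^*}-1)q_{k^*}+(1-q_{k^*})\bigl[(v_{k^*+1}-1)q_{k^*+1}+(1-q_{k^*+1})R\bigr]$ with $R$ the tail contribution, the required inequality is a genuine two-buyer statement depending on $v_{k^*},v_{k^*+1},R$ and the constraint $e^{\ell_{k^*}}e^{\ell_{k^*+1}}=\mathrm{const}$; it does not reduce to the monotonicity you already have, and it is not the same inequality as in \citet{AHNPY15} (whose objective is the ex-ante relaxation, linear in the $v_iq_i$'s, rather than the multiplicative $\spm$ formula). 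The paper's device of moving $v_k$ instead of the quantiles is precisely what sidesteps this calculation.
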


For ease of notation, let $v_0 \eqdef \infty$. Based on Part 1 of \Cref{lem:spm_ap_relexation} and Part 2 of \Cref{lem:spm_ap_RQ} (namely $\R(v_0) = 0$), we get a recursive formula from constraint~\eqref{cstr:spm_ap4}: w.l.o.g.\ a worst-case instance $\{\tri(v_i, q_i)\}_{i = 1}^n$ satisfies $\ln\big(1 + \frac{v_k q_k}{1 - q_k}\big) = \R(v_k) - \R(v_{k - 1})$, namely
\begin{align}
\label{cstr:spm_ap5}\tag{C3.3}
q_k = \frac{e^{\R(v_k) - \R(v_{k - 1})} - 1}{v_k + e^{\R(v_k) - \R(v_{k - 1})} - 1}, && \forall k \in [n].
\end{align}
Based on this recursive formula, we prove the following lemma in \Cref{subapp:spm:ineq}.

\begin{lemma}
\label{lem:spm_ap_inequality}
Given any triangular instance $\{\tri(v_i, q_i)\}_{i = 1}^n$ such that $v_1 > v_2 > \cdots > v_n \geq 1$ and the parameters $\{q_i\}_{i = 1}^n$ satisfying constraint~\eqref{cstr:spm_ap5}. The following holds for each $k \in [n]$:
\begin{align*}
\mbox{$(v_k - 1) \cdot q_k \cdot \prod_{j = 1}^{k - 1} (1 - q_j)$}
& \leq \mbox{$\Int{v_k}{v_{k - 1}} (x - 1) \cdot \big(-\Q'(x)\big) \cdot e^{-\Q(x)} \cdot \dd x$} \\
& \equiv \mbox{$\Int{v_k}{v_{k - 1}} (x - 1) \cdot \dd e^{-\Q(x)}$}.
\end{align*}
\end{lemma}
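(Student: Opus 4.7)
The plan is to prove the inequality in two stages: first establish an auxiliary bound $\prod_{j=1}^{k-1}(1-q_j) \le e^{-\Q(v_{k-1})}$ by induction on $k$; second, combine it with a monotonicity argument, viewing both sides as functions of the free variable $v_k$ with $v_{k-1}$ and the preceding buyers held fixed.

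For the auxiliary claim, abbreviate $P_k := \prod_{j=1}^{k-1}(1-q_j)$, $\alpha_j := \R(v_j) - \R(v_{j-1})$, and $\gamma_j := \Q(v_j) - \Q(v_{j-1})$, with $v_0 := \infty$ so that $P_1 = e^{-\Q(v_0)} = 1$. From \eqref{cstr:spm_ap5} we have $1 - q_j = v_j/(v_j + e^{\alpha_j} - 1)$, so after taking logs and telescoping, $P_k \le e^{-\Q(v_{k-1})}$ reduces to the term-wise bound $e^{\alpha_j} - 1 \ge v_j(e^{\gamma_j} - 1)$ for each $j$. Using $\R'(x) = x\,\Q'(x)$ from \Cref{lem:spm_ap_RQ} and the fact that $x \ge v_j \ge 1$ on the integration interval,
\[
\alpha_j \;=\; \int_{v_j}^{v_{j-1}} x\,(-\Q'(x))\, \dd x \;\ge\; v_j \int_{v_j}^{v_{j-1}} (-\Q'(x))\, \dd x \;=\; v_j\,\gamma_j.
\]
A short convexity argument (the function $t \mapsto e^{t\gamma_j} - 1 - t(e^{\gamma_j} - 1)$ vanishes at $t=0$ and $t=1$ and is convex in $t$, hence is nonnegative for $t \ge 1$) then yields $e^{v_j \gamma_j} - 1 \ge v_j(e^{\gamma_j} - 1)$, and chaining with $e^{\alpha_j} \ge e^{v_j \gamma_j}$ completes the induction.

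For the main inequality, fix $v_{k-1}$ and $P_k$ and define
\[
h(v) \;:=\; \int_v^{v_{k-1}} (x-1)\,\dd e^{-\Q(x)} \;-\; (v-1)\,q(v)\,P_k, \qquad v \in [1, v_{k-1}],
\]
where $q(v) := (A(v)-1)/(v + A(v) - 1)$ and $A(v) := e^{\R(v) - \R(v_{k-1})}$; note $h(v_{k-1}) = 0$ and that we want $h(v_k) \ge 0$. A direct differentiation, using $A'(v) = v\,\Q'(v)\,A(v)$ to simplify $\frac{\dd}{\dd v}[(v-1)q(v)]$, gives
\[
h'(v) \;=\; (v-1)\,\Q'(v)\,e^{-\Q(v)} \;-\; P_k\,\frac{A(v)\,\bigl[(A(v)-1) + v^2(v-1)\,\Q'(v)\bigr]}{(v + A(v) - 1)^2}.
\]
Showing $h'(v) \le 0$ on $[1, v_{k-1}]$ makes $h$ non-increasing on this interval and yields $h(v_k) \ge h(v_{k-1}) = 0$.

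The monotonicity step is where I expect the main difficulty. After clearing denominators, the desired inequality takes the form
\[
(v-1)\,\Q'(v)\,\bigl[e^{-\Q(v)}(v+A(v)-1)^2 - P_k A(v) v^2\bigr] \;\le\; P_k\,A(v)\,(A(v)-1),
\]
whose right-hand side is non-negative. When the bracketed quantity is non-negative the left-hand side is non-positive (because $\Q'(v) < 0$ and $v \ge 1$) and the inequality is immediate. In the remaining regime both sides are positive, and my plan is to invoke the auxiliary bound $P_k \le e^{-\Q(v_{k-1})}$ established above, parameterize via $\alpha = \R(v) - \R(v_{k-1})$ and $\gamma = \Q(v) - \Q(v_{k-1})$, and reduce to a purely scalar inequality in $(\alpha, \gamma, v)$ that follows from $\alpha \ge v\gamma$ together with the same flavor of convexity argument used in the first step.
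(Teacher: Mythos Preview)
Your auxiliary bound $\prod_{j=1}^{k-1}(1-q_j)\le e^{-\Q(v_{k-1})}$ is correct, and your proof via $\alpha_j\ge v_j\gamma_j$ plus convexity of $t\mapsto e^{t\gamma_j}$ is a clean alternative to the paper's argument for the same fact (the paper's \Cref{lem:ineq2}, proved there by differentiating in the upper endpoint).

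The gap is in the monotonicity step. You only complete the case where the bracketed quantity is nonnegative; for the remaining regime you leave a plan (``invoke $P_k\le e^{-\Q(v_{k-1})}$, parameterize by $(\alpha,\gamma,v)$, reduce to a scalar inequality'') without carrying it out. Note that in the hard regime, replacing $P_k$ by its upper bound $e^{-\Q(v_{k-1})}$ increases \emph{both} sides of your rearranged inequality, so it is not obvious the substitution helps; the promised ``same flavor of convexity argument'' is not spelled out, and there is no clear route from $\alpha\ge v\gamma$ to the two-sided estimate you would need. The claim $h'\le 0$ may well be true, but you have not shown it.

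The paper sidesteps the whole derivative computation by factoring one step differently. Write
\[
(v_k-1)\,q_k\,\prod_{j=1}^{k-1}(1-q_j)\;=\;\frac{(v_k-1)\,q_k}{1-q_k}\;\cdot\;\prod_{j=1}^{k}(1-q_j),
\]
i.e.\ move the $(1-q_k)$ factor into the product. Your own auxiliary argument, applied at index $k$ rather than $k-1$, gives $\prod_{j=1}^{k}(1-q_j)\le e^{-\Q(v_k)}\le e^{-\Q(x)}$ for every $x\in[v_k,v_{k-1}]$. The first factor simplifies via \eqref{cstr:spm_ap5} to $(1-v_k^{-1})(e^{\alpha_k}-1)$, and the paper bounds it by the scalar inequality
\[
(1-v_k^{-1})\bigl(e^{\alpha_k}-1\bigr)\;\le\;\alpha_k-\gamma_k\;=\;\int_{v_k}^{v_{k-1}}(x-1)\,\bigl(-\Q'(x)\bigr)\,\dd x
\]
(this is \Cref{lem:ineq1}, proved by differentiating in $v_{k-1}$ --- the same one-variable technique you used for the auxiliary bound). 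Since the integrand $(x-1)(-\Q'(x))$ is nonnegative, multiplying the two bounds gives the lemma directly. This second scalar inequality is the missing ingredient that replaces your monotonicity argument.
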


Roughly speaking, in \Cref{lem:spm_ap_inequality} we replace the triangular instance $\{\tri(v_i, q_i)\}_{i = 1}^n$ by a spectrum of ``small'' triangular distributions (denoted by $\mathcal{I}$ for ease of notation). That is, this spectrum contains infinitely many buyers, but each specific buyer accepts his $\spm$ offer with negligible probability. Below, we interpret \Cref{lem:spm_ap_inequality} via the {\em revenue-equivalence theorem} of \citep{M81}, and safely interchange ``$\spm$ revenue'' and ``virtual welfare''.

With respect to $\{\tri(v_i, q_i)\}_{i = 1}^n$, the virtual welfare $\mathbf{W}_{\tri}$ is a {\em discrete} random variable. Concretely, we have $\mathbf{W}_{\tri} = v_k$ iff the index-$k$ buyer accepts his take-it-or-leave-it offer but all the smaller-index buyers refute their offers.\footnote{At the monopoly price $v_k$, the distribution $\tri(v_k, q_k)$ has a {\em probability mass} of $q_k$. Thus, when the index-$k$ buyer has a value of $v_k$, his virtual value also equals $v_k$.} This event occurs with probability $q_k \cdot \prod_{j = 1}^{k - 1} (1 - q_j)$. But recall \Cref{subsec:spm_ap_reduction} that the special distribution $\tri(1, 1)$ has a deterministic value of $1$, and we have already incorporated this one unit into the $\spm$ revenue. Thereby, in expectation the remaining virtual welfare $\E\big[(\mathbf{W}_{\tri} - 1)_+\big]$ from $\{\tri(v_i, q_i)\}_{i = 1}^n$ is exactly the telescoping sum (over all $k \in [n]$) of the $\lhs$ formulas in \Cref{lem:spm_ap_inequality}:
\[
\mbox{$\E\big[(\mathbf{W}_{\tri} - 1)_+\big] = \sum_{k = 1}^{n} (v_k - 1) \cdot q_k \cdot \prod_{j = 1}^{k - 1} (1 - q_j)$}.
\]

Respecting the spectral instance $\mathcal{I}$, the virtual welfare $\mathbf{W}_{\mathcal{I}}$ is almost a {\em continuous} random variable. Particularly, unlike $\mathbf{W}_{\tri}$ that is supported on $\{v_1, v_2, \cdots, v_n\}$, $\mathbf{W}_{\mathcal{I}}$ can take any value between $[v_n, \infty) \equiv \bigcup_{k = 1}^{n} [v_k, v_{k - 1})$.\footnote{In fact, we may also have $\mathbf{W}_{\tri} = 1$, which occurs when only the special distribution $\tri(1, 1)$ accepts the offer.} In this range, we must work with probability density rather than probability mass: $\mathbf{W}_{\mathcal{I}}$ turns out to follow the CDF
\[
\Pr\{\mathbf{W}_{\mathcal{I}} \leq x\} = e^{-\Q(x)},
\]
for any $x \in [v_n, \infty)$. We have $\Pr\{v_n \leq \mathbf{W}_{\mathcal{I}} < \infty\} = 1 - e^{-\Q(v_n)}$ because $\Q(\infty) = 0$ (see Part 2 of \Cref{lem:spm_ap_RQ}). The remaining probability mass of $e^{-\Q(v_n)}$ is at $\mathbf{W}_{\mathcal{I}} = 1$, as the special distribution $\tri(1, 1)$ has a deterministic value of $1$. Again, the expected virtual welfare $\E\big[(\mathbf{W}_{\mathcal{I}} - 1)_+\big]$ from the spectral instance $\mathcal{I}$ is exactly the telescoping sum of the $\rhs$ formulas in \Cref{lem:spm_ap_inequality}:
\[
\mbox{$\E\big[(\mathbf{W}_{\mathcal{I}} - 1)_+\big]$}
= \mbox{$\Int{v_n}{\infty} (x - 1) \cdot \dd e^{-\Q(x)}$}.
\]

The lemma shows that the spectral instance $\mathcal{I}$ gives a higher $\spm$ revenue than the triangular instance $\{\tri(v_i, q_i)\}_{i = 1}^n$. As a result, w.l.o.g.\ the worst-case of our mathematical program contains a continuum of ``small'' buyers. Formally, by applying \Cref{lem:spm_ap_inequality} for all $k \in [n]$, we settle the upper-bound part of \Cref{thm:spm_ap} as follows:
\begin{align*}
\spm
& \overset{\eqref{prog:spm_ap1}}{=} \mbox{$2 + \sum_{i = 1}^n (v_i - 1) \cdot q_i \cdot \prod_{j = 1}^{i - 1} (1 - q_j)$} \\
& \,\,\leq\,\, \mbox{$2 + \Int{v_n}{\infty} (x - 1) \cdot \big(-\Q'(x)\big) \cdot e^{-\Q(x)} \cdot \dd x$}
&& \mbox{\tt (by \Cref{lem:spm_ap_inequality})} \\
& \,\,\leq\,\, \mbox{$2 + \Int{1}{\infty} (x - 1) \cdot \big(-\Q'(x)\big) \cdot e^{-\Q(x)} \cdot \dd x$}
&& \mbox{\tt (as $v_n \geq 1$)}  \\
& \,\,=\,\, \mbox{$2 + \Int{1}{\infty} \big(1 -  e^{-\Q(x)}\big) \cdot \dd x = \C \approx 2.6202$}.
&& \mbox{\tt (integration by parts)}
\end{align*}

\subsection{Lower-Bound Analysis}
\label{subsec:spm_ap_lower}

In this part, we consider the triangular instance defined in the following \Cref{exp:spm_ap}, which gives a ratio of $\spm$ to $\ap$ arbitrarily close to constant $\C = 2 + \Int{1}{\infty} \big(1 -  e^{-\Q(x)}\big) \cdot \dd x \approx 2.6202$. Recall that a triangular distribution $\tri(v_i, q_i)$ has a CDF of $F_i(p) = \frac{(1 - q_i) \cdot p}{(1 - q_i) \cdot p + v_i q_i}$ when $p \in [0, v_i)$, and $F_i(p) = 1$ when $p \in [v_i, \infty)$.

\begin{example}
\label{exp:spm_ap}
Given any constant $\eps \in (0, 1)$ and integer $n \geq 2$, let $a \eqdef \min \big\{(1 + \eps), \Q^{-1}(\ln\eps^{-1})\big\}$, $b \eqdef (1 + \eps^{-1})$, $\delta \eqdef \frac{b - a}{n - 1}$, and $v_0 \eqdef \infty$. Define triangular instance $\{\tri(v_i, q_i)\}_{i = 1}^{n}$ as follows:
\begin{align*}
& v_i \eqdef b - (i - 1) \cdot \delta, &&
q_i \eqdef \frac{\R(v_i) - \R(v_{i - 1})}{v_i + \R(v_i) - \R(v_{i - 1})}, && \forall i \in [n].
\end{align*}
\end{example}

Actually, the above parameters $a$ and $b$ are carefully chosen to guarantee the next technical lemma (proved in \Cref{subapp:spm_ap_lower}), which will be useful for our lower-bound analysis.

\begin{lemma}
\label{lem:spm_ap_lower2}
Given any constant $\eps \in (0, 1)$, let $a = \min \big\{(1 + \eps), \Q^{-1}(\ln\eps^{-1})\big\}$ and $b = (1 + \eps^{-1})$. Then, $2 + \Int{a}{b} (x - 1) \cdot \big(-\Q'(x)\big) \cdot e^{-\Q(x)} \cdot \dd x \geq \C - 4 \cdot \eps$.
\end{lemma}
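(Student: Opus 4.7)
The strategy is to estimate the deficit between the target integral and $\C - 2 = \Int{1}{\infty}(x - 1) \cdot (-\Q'(x)) \cdot e^{-\Q(x)} \cdot \dd x$ by writing
\begin{align*}
\C - 2 - \Int{a}{b}(x - 1) \cdot (-\Q'(x)) \cdot e^{-\Q(x)} \cdot \dd x = \Int{1}{a}(x-1) \cdot (-\Q'(x)) \cdot e^{-\Q(x)} \cdot \dd x + \Int{b}{\infty}(x-1) \cdot (-\Q'(x)) \cdot e^{-\Q(x)} \cdot \dd x,
\end{align*}
and then showing that each of the two missing contributions on the right is $O(\eps)$, with total at most $4\eps$. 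Throughout I use $(-\Q'(x)) \cdot e^{-\Q(x)} = \frac{\dd}{\dd x} e^{-\Q(x)}$, which is positive by \Cref{lem:spm_ap_RQ}.

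For the head integral, I would bound $x - 1 \leq a - 1$ on $[1, a]$ and compute $\Int{1}{a}(-\Q'(x)) \cdot e^{-\Q(x)} \cdot \dd x = e^{-\Q(a)} - e^{-\Q(1^+)} = e^{-\Q(a)}$, yielding an upper bound of $(a - 1) \cdot e^{-\Q(a)}$. Both entries of the $\min$ defining $a$ then supply an $\eps$: $a \leq 1 + \eps$ gives $a - 1 \leq \eps$; and $a \leq \Q^{-1}(\ln \eps^{-1})$, combined with the monotonicity from \Cref{lem:spm_ap_RQ}, gives $\Q(a) \geq \ln \eps^{-1}$, hence $e^{-\Q(a)} \leq \eps$. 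So the head contributes at most $\eps^2$.

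For the tail, I would integrate by parts on $[b, M]$ with $u = x - 1$ and $\dd v = \dd e^{-\Q(x)}$, rewrite $e^{-\Q(x)} = 1 - (1 - e^{-\Q(x)})$ to isolate the divergent pieces, and let $M \to \infty$, obtaining
\begin{align*}
\Int{b}{\infty}(x - 1) \cdot (-\Q'(x)) \cdot e^{-\Q(x)} \cdot \dd x = (b - 1) \cdot (1 - e^{-\Q(b)}) + \Int{b}{\infty}(1 - e^{-\Q(y)}) \cdot \dd y,
\end{align*}
once I verify $(M - 1)(1 - e^{-\Q(M)}) \to 0$. Both this limit and the two terms on the right are controlled by the elementary chain $1 - e^{-\Q(y)} \leq \Q(y) \leq -\ln(1 - y^{-2}) \leq \frac{y^{-2}}{1 - y^{-2}} = \frac{1}{y^2 - 1}$, which uses $1 - e^{-u} \leq u$, positivity of the subtracted term in the definition of $\Q$, and $-\ln(1 - u) \leq u/(1 - u)$ for $u \in (0, 1)$. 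Plugging in $b = 1 + \eps^{-1}$, the boundary contribution is $(b - 1)(1 - e^{-\Q(b)}) \leq (b-1)/(b^2 - 1) = 1/(b + 1) = \eps/(1 + 2\eps) \leq \eps$, and $\Int{b}{\infty}(1 - e^{-\Q(y)}) \cdot \dd y \leq \Int{b}{\infty} \dd y/(y^2 - 1) = \frac{1}{2}\ln\frac{b+1}{b-1} = \frac{1}{2}\ln(1 + 2\eps) \leq \eps$. So the tail contributes at most $2\eps$.

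Summing, the total deficit is at most $\eps^2 + 2\eps \leq 3\eps < 4\eps$ for $\eps \in (0, 1)$, yielding the claimed bound (indeed a slightly stronger one). The main technical care is in the tail integration by parts: both $\lim_M (M - 1) \cdot e^{-\Q(M)}$ and $\Int{b}{M} e^{-\Q(y)} \cdot \dd y$ diverge individually, so one must work on a finite window, rewrite via $e^{-\Q} = 1 - (1 - e^{-\Q})$, and then pass to the limit using the quantitative decay $1 - e^{-\Q(y)} = O(1/y^2)$ derived from the chain above.
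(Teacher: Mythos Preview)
Your proof is correct and follows essentially the same approach as the paper: split the deficit into the head $\int_1^a$ and the tail $\int_b^\infty$, and bound each by $O(\eps)$ using integration by parts together with the estimate $1-e^{-\Q(x)}=O(x^{-2})$ derived from $\Q(x)\le -\ln(1-x^{-2})$. The only differences are cosmetic: for the head you bound $x-1\le a-1$ (yielding $\eps^2$) where the paper bounds $x-1\le x$ (yielding $2\eps$), and for the tail you keep the factor $x-1$ through the integration by parts and use $1-e^{-\Q(y)}\le 1/(y^2-1)$, whereas the paper first bounds $x-1\le x$ and uses $1-e^{-\Q(y)}\le y^{-2}$; both routes give $\le 2\eps$ for the tail.
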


Notice that
\begin{enumerate*}[label = (\alph*), font = {\bfseries}]
\item $\ln(1 + x) \leq x$ when $x \geq 0$;
\item function $\R$ is a decreasing function; and
\item $\R(v_0) = \R(\infty) = 0$.
\end{enumerate*}
Because Program~\eqref{prog:spm_ap1} is derived from Program~\eqref{prog:spm_ap0} via {\em reductions}, it suffices to reason about Program~\eqref{prog:spm_ap1}. We first show that triangular instance $\{\tri(v_i, q_i)\}_{i = 1}^{n}$ satisfies constraint~\eqref{cstr:spm_ap1}:
\begin{align*}
\lhs \mbox{ of } \eqref{cstr:spm_ap1}
& = \mbox{$\sum_{i: v_i \geq p} \ln\big(1 + \frac{v_i q_i}{1 - q_i} \cdot \frac{1}{p}\big)$} \\
& \leq \mbox{$\sum_{i: v_i \geq p} \frac{v_i q_i}{1 - q_i} \cdot \frac{1}{p}$} && \mbox{\tt (by {\bf Claim~(a)} above)} \\
& = p^{-1} \cdot \R(\min_{i \in [n]} \{v_i \given v_i \geq p\}) - p^{-1} \cdot \R(0) && \mbox{\tt (by formulas of $\{q_i\}_{i = 1}^{n}$)} \\
& \leq p^{-1} \cdot \R(p)
= \rhs \mbox{ of } \eqref{cstr:spm_ap1}.  && \mbox{\tt (by {\bf Claims~(b,c)} above)}
\end{align*}
We next investigate the $\spm$ revenue extracted from triangular instance $\{\tri(v_i, q_i)\}_{i = 1}^n$, which can be summarized as the following lemma.

\begin{lemma}
\label{lem:spm_ap_lower1}
Consider the triangular instance $\{\tri(v_i, q_i)\}_{i = 1}^{n}$ defined in \Cref{exp:spm_ap}. It follows that $\spm \geq \C - 6 \cdot \eps$, for any sufficiently large integer $n \in \mathbb{N}_+$.
\end{lemma}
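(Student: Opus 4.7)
The plan is to show that the discrete sum in Program~\eqref{prog:spm_ap1} is a Riemann-sum approximation of the integral appearing in \Cref{lem:spm_ap_lower2}, with vanishing discretization error. Combined with that lemma, this yields $\spm \geq \C - 6 \cdot \eps$.

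First, rewrite each term using the abbreviation $x_i \eqdef \R(v_i) - \R(v_{i - 1})$, so that the definition of $q_i$ in \Cref{exp:spm_ap} becomes $v_i q_i = x_i \cdot (1 - q_i)$, equivalently $1 - q_i = v_i / (v_i + x_i)$. Let $P_k \eqdef \prod_{j = 1}^k (1 - q_j)$ with $P_0 \eqdef 1$. Then $q_k \cdot \prod_{j = 1}^{k - 1}(1 - q_j) = P_{k - 1} - P_k$, hence the nontrivial contribution to Program~\eqref{prog:spm_ap1} is $\sum_{k = 1}^n (v_k - 1) \cdot (P_{k - 1} - P_k)$, and we need to show this is at least $\int_a^b (x - 1) \cdot (-\Q'(x)) \cdot e^{-\Q(x)} \cdot \dd x - 2 \cdot \eps$ for $n$ large.

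Second, approximate $P_k$ by $e^{-\Q(v_k)}$. Taking logarithms, $-\ln P_k = \sum_{j = 1}^k \ln(1 + x_j / v_j)$. For $j \geq 2$, the mean value theorem together with Part~1 of \Cref{lem:spm_ap_RQ} (i.e.\ $\R'(p) = p \cdot \Q'(p)$) gives $x_j / v_j = -\Q'(\xi_j) \cdot \delta$ for some $\xi_j \in (v_j, v_{j - 1})$, so $\ln(1 + x_j / v_j) = -\Q'(v_j) \cdot \delta + O(\delta^2)$, with the implied constant uniform on $[a, b]$ since $a > 1$ keeps us away from the singularity $p = 1^+$ of $\Q$. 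The partial sum $\sum_{j = 2}^k (-\Q'(v_j)) \cdot \delta$ is a Riemann sum for $\int_{v_k}^{b} -\Q'(x) \cdot \dd x = \Q(v_k) - \Q(b)$, with error $O(\delta \cdot (b - a) \cdot \max_{[a, b]} |\Q''|)$, which is $o_n(1)$ for fixed $\eps$. The $j = 1$ term and the $\Q(b)$ correction each contribute $O(\R(b) / b) = O(\eps^2)$. Hence $P_k = e^{-\Q(v_k)} \cdot (1 + o_n(1) + O(\eps^2))$ uniformly in $k$.

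Third, pass from the discrete sum to the integral. Writing $e^{-\Q(v_{k - 1})} - e^{-\Q(v_k)} = \int_{v_k}^{v_{k - 1}} (-\Q'(x)) \cdot e^{-\Q(x)} \cdot \dd x$ and using $|x - v_k| \leq \delta$ on $[v_k, v_{k - 1}]$, one checks that $(v_k - 1) \cdot (P_{k - 1} - P_k)$ agrees with $\int_{v_k}^{v_{k - 1}} (x - 1) \cdot (-\Q'(x)) \cdot e^{-\Q(x)} \cdot \dd x$ up to an additive error of $O(\delta) \cdot (P_{k - 1} - P_k) + O(\eps^2) \cdot (P_{k - 1} - P_k)$ per interval. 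The intervals for $k = 2, \ldots, n$ telescope into $[a, b]$; the $k = 1$ term is nonnegative and can be dropped. Accumulating errors yields $\sum_{k = 1}^n (v_k - 1) \cdot (P_{k - 1} - P_k) \geq \int_a^b (x - 1) \cdot (-\Q'(x)) \cdot e^{-\Q(x)} \cdot \dd x - o_n(1) - O(\eps^2)$. Taking $n$ large enough that $o_n(1) + O(\eps^2) \leq 2 \cdot \eps$ and invoking \Cref{lem:spm_ap_lower2} gives $\spm \geq 2 + (\C - 2 - 4 \cdot \eps) - 2 \cdot \eps = \C - 6 \cdot \eps$, as claimed.

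The main obstacle is the stacked error analysis: the Taylor expansion of $\ln(1 + \cdot)$, the Riemann-sum error for $\int \Q'$, the boundary contribution $\Q(b)$, and the midpoint error $|x - v_k|$ must all be controlled uniformly in $k$. This is routine because $a > 1$ is bounded away from the singularity for each fixed $\eps$, so $\Q$, $\Q'$, $\Q''$ are uniformly bounded on $[a, b]$ and the standard estimates go through.
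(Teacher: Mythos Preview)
Your proposal is correct and takes essentially the same approach as the paper: identify $\prod_{j \leq k}(1-q_j) \approx e^{-\Q(v_k)}$ via $\R' = p\,\Q'$, treat the objective as a Riemann sum for $\int_a^b (x-1)\,(-\Q'(x))\,e^{-\Q(x)}\,\dd x$, drop the $k=1$ term, and invoke \Cref{lem:spm_ap_lower2}. The paper differs only cosmetically, using the one-sided bound $\ln(1+x) \leq x$ in place of your Taylor/big-$O$ error tracking, which lets it extract an explicit multiplicative factor $(1-b^{-2})$ and reach $\C - 5\eps$ in the limit before relaxing to $\C - 6\eps$.
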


\begin{proof}
Note that
\begin{enumerate*}[label = (\alph*), font = {\bfseries}]
\item $\R(v_0) = \R(\infty) = 0$;
\item $a = v_{n} < v_{n - 1} < \cdots < v_1 = b$ is a uniform partition of interval $[a, b]$, with norm $\delta = \frac{b - a}{n - 1}$;
\item $\R'(p) = p \cdot \Q'(p)$ for any $p \in (1, \infty)$; and
\item $\ln(1 + x) \leq x$ when $x \geq 0$.
\end{enumerate*}
By the definition of Riemann integral, we have:
\begin{align*}
-\lim_{n \to \infty} \mbox{$\sum_{j = 1}^i \ln(1 - q_i)$}
& = \lim_{n \to \infty} \mbox{$\sum_{j = 1}^i \ln\big(1 + \frac{\R(v_j) - \R(v_{j - 1})}{v_j}\big)$}
&& \mbox{\tt (by formulas of $\{q_i\}_{i = 1}^{n}$)} \\
& = \mbox{$\ln\big(1 + \frac{\R(v_1)}{v_1}\big) + \Int{v_i}{v_1} \big(-\frac{\R'(x)}{x}\big) \cdot \dd x$}
&& \mbox{\tt (by {\bf Claims~(a, b)} above)} \\
& = \mbox{$\ln\big(1 + \frac{\R(b)}{b}\big) - \Q(b) + \Q(v_i)$}
&& \mbox{\tt (by {\bf Claim~(c)}; as $v_1 = b$)} \\
& \leq \mbox{$\frac{\R(b)}{b} - \Q(b) + \Q(v_i)$}
&& \mbox{\tt (by {\bf Claim~(d)} above)} \\
& \leq -\ln(1 - b^{-2}) + \Q(v_i),
&& \mbox{\tt (as $\R(b) = -b \cdot \ln(1 - b^{-2})$)}
\end{align*}
for each $i \in [n]$. By construction, we also have $\frac{v_i q_i}{1 - q_i} = \R(v_i) - \R(v_{i - 1})$, for each $i \in [n]$. Based on the objective function of Program~\eqref{prog:spm_ap1}, we obtain an $\spm$-revenue sequence, with the limit inferior of
\begin{align*}
\underset{n \to \infty}{\underline{\lim}} \spm\big(\{\tri(v_i, q_i)\}_{i = 1}^{n}\big)
& = 2 + \underset{n \to \infty}{\underline{\lim}} \mbox{$\sum_{i = 1}^{n} \frac{(v_i - 1) \cdot q_i}{1 - q_i} \cdot \prod_{j = 1}^{i} (1 - q_j)$} \\
& \hspace{1.955cm} \mbox{\tt (by construction and the arguments above)} \\
& \geq 2 + (1 - b^{-2}) \cdot \underset{n \to \infty}{\underline{\lim}} \mbox{$\sum_{i = 1}^{n} (1 - v_i^{-1}) \cdot \big(\R(v_i) - \R(v_{i - 1})\big) \cdot e^{-Q(v_j)}$} \\
& \hspace{1.955cm} \mbox{\tt (drop the summand with index $i = 1$)} \\
& \geq 2 + (1 - b^{-2}) \cdot \underset{n \to \infty}{\underline{\lim}} \mbox{$\sum_{i = 2}^{n} (1 - v_i^{-1}) \cdot \big(\R(v_i) - \R(v_{i - 1})\big) \cdot e^{-Q(v_j)}$} \\
& \hspace{1.955cm} \mbox{\tt (by {\bf Claims~(b,c)} above; as $v_1 = b$ and $v_n = a$)} \\
& \geq 2 + (1 - b^{-2}) \cdot \mbox{$\Int{a}{b} (x - 1) \cdot \big(-\Q'(x)\big) \cdot e^{-\Q(x)} \cdot \dd x$} \\
& \hspace{1.955cm} \mbox{\tt (by applying \Cref{lem:spm_ap_lower2})} \\
& \geq \C - (1 - b^{-2}) \cdot 4 \cdot \eps - (\C - 2) \cdot b^{-2} \\
& \hspace{1.955cm} \mbox{\tt (as $b = (1 + \eps^{-1})$ and $\C \approx 2.6202 < 3$)} \\
& \geq \C - 5 \cdot \eps
\end{align*}
Since $\eps \in (0, 1)$ is a given constant, $\spm\big(\{\tri(v_i, q_i)\}_{i = 1}^{n}\big) \geq \C - 6 \cdot \eps$, for any sufficiently large integer $n \in \NNP$. Hence, \Cref{lem:spm_ap_lower1} and the lower-bound part (in asymmetric regular setting) of \Cref{thm:ar_ap} are settled.
\end{proof}

\section{Anonymous Reserve vs. Anonymous Pricing}
\label{sec:ar_ap}
In this section, we study the revenue gap between $\ar$ and $\ap$. Sorting the buyers' values $\bids = \{b_i\}_{i = 1}^n$ such that $b_{(1)} \geq b_{(2)} \geq \cdots \geq b_{(n)}$, we begin with formulating CDF $\first$ of the highest value $b_{(1)}$ and CDF $\second$ of the second highest value $b_{(2)}$. For CDF $\first$, we have
\[
\first(p) = \Prob\big\{b_{(1)} \leq p\big\} = \Prob\big\{\forall i \in [n]:\,\, b_i \leq p\big\} = \prod_{i = 1}^n \Prob\big\{b_i \leq p\big\} = \prod_{i = 1}^n F_i(p).
\]
For CDF $\second$, the event that ``{\em the second highest value $b_{(2)}$ is at most $p$}'', i.e.\ $\big\{b_{(2)} \leq p\}$, can be partitioned into the following $(n + 1)$ disjoint sub-events: $A_i \eqdef \big\{(b_i > p) \wedge (\forall j \neq i:\,\, b_j \leq p)\big\}$ for each $i \in [n]$, and $A_0 \eqdef \big\{\forall i \in [n]:\,\, b_i \leq p\big\}$. Therefore,
\[
\second(p) = \Prob\{A_0\}+ \sum_{i = 1}^n \Prob\{A_i\} = \first(p) +  \sum_{i = 1}^n \big(1 - F_i(p)\big) \cdot \prod_{j \neq i} F_j(p) = \first(p) \cdot \bigg[1 + \sum_{i = 1}^n \Big(\frac{1}{F_i(p)} - 1\Big)\bigg].
\]
By definition, we have that $\ap(p) = p \cdot \big(1 - \first(p)\big)$. In addition, an explicit formula of the $\ar$ revenue is established in the following lemma \citep[first introduced by][]{CGM15}. For the sake of completeness, we provide a proof here. In \Cref{sec:opt_ar}, this lemma will also be useful for the lower-bound analysis of the $\opt$ vs.\ $\ar$ problem.

\begin{lemma}[\citet{CGM15}]
\label{lem:ar_rev}
For any reserve price $p \in \RRP$, the {\sf Anonymous Reserve} revenue equals $\ar(p) = p \cdot \big(1 - \first(p)\big) + \Int{p}{\infty} \big(1 - \second(x)\big) \cdot \dd x$.
\end{lemma}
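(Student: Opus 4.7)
The plan is to decompose the event ``a reserve price $p$ is in place'' into three disjoint cases based on how many buyers clear the reserve, and then express each contribution to the revenue in terms of the two order statistic CDFs $\first$ and $\second$ already set up in the paragraph preceding the lemma.

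First, I would partition the sample space as follows: \textbf{(a)} no buyer clears, i.e.\ $b_{(1)} < p$, yielding revenue $0$; \textbf{(b)} exactly one buyer clears, i.e.\ $b_{(2)} < p \leq b_{(1)}$, yielding revenue $p$; \textbf{(c)} at least two buyers clear, i.e.\ $b_{(2)} \geq p$, yielding revenue $b_{(2)}$ (which is automatically $\geq p$, so the reserve is irrelevant to the payment rule). The probabilities of the first two events are $\first(p)$ and $\second(p) - \first(p)$, respectively, so the contribution from these two cases is simply $p \cdot (\second(p) - \first(p))$.

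Next I would compute the contribution from case \textbf{(c)}, namely $\Ex{b_{(2)} \cdot \indicator\{b_{(2)} \geq p\}}$. Using the standard tail formula $\Ex{X \cdot \indicator\{X \geq p\}} = p \cdot \Prob[X \geq p] + \Int{p}{\infty} \Prob[X > x] \, \dd x$ applied to $X = b_{(2)}$ (whose CDF is $\second$), this expectation equals $p \cdot (1 - \second(p)) + \Int{p}{\infty} (1 - \second(x)) \, \dd x$. Adding the two pieces,
\begin{align*}
\ar(p) &= p \cdot (\second(p) - \first(p)) + p \cdot (1 - \second(p)) + \Int{p}{\infty} (1 - \second(x)) \, \dd x \\
&= p \cdot (1 - \first(p)) + \Int{p}{\infty} (1 - \second(x)) \, \dd x,
\end{align*}
as required. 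None of the steps is delicate; the only thing to be careful about is the tail formula for $\Ex{b_{(2)} \cdot \indicator\{b_{(2)} \geq p\}}$ (equivalent to an integration by parts on $\Int{p}{\infty} x \, \dd \second(x)$, which avoids a divergent boundary term precisely because we subtract the pedestal $p \cdot (1 - \second(p))$ rather than $x \cdot \second(x)$ at infinity). Tie-breaking among equal values occurs with probability zero under any continuous regular distribution and does not affect any of the CDF expressions, so no additional care is needed there.
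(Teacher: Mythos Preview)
Your proof is correct and follows essentially the same approach as the paper: both use the three-case decomposition on how many bids clear the reserve, and both reduce the case-(c) contribution to a tail integral of $1-\second$ via (what amounts to) integration by parts. The only cosmetic difference is that the paper first regroups the revenue as $p\cdot\indicator\{b_{(1)}\geq p\}+(b_{(2)}-p)\cdot\indicator\{b_{(2)}\geq p\}$ before integrating, whereas you keep the cases separate and cancel the $p\cdot\second(p)$ terms at the end; also note that your closing remark about ties is unnecessary (and slightly off, since \Cref{sec:ar_ap} works in the general, not just regular/continuous, setting), but this has no bearing on the argument.
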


\begin{proof}
Given any reserve price $p \in \RRP$, there are three possible outcomes:
\begin{enumerate*}[label = (\alph*), font = {\bfseries}]
\item a revenue of $0$ when no value reaches the reserve price of $p$, i.e.\ event $\big\{b_{(1)} < p\big\}$;
\item a revenue of $p$ when exactly one value reaches the reserve price of $p$, i.e.\ event $\big\{b_{(1)} \geq p > b_{(2)}\big\}$; and
\item a revenue of $b_{(2)}$ when two or more values reach the reserve price of $p$, i.e.\ event $\big\{b_{(2)} \geq p\big\}$.
\end{enumerate*}
Given these, we can formulate the $\ar$ revenue as follows:
\begin{align*}
\ar(p)
& = \E_{\bids \sim \distrs} \Big[p \cdot \mathbbm{1}\big\{b_{(1)} \geq p > b_{(2)}\big\} + b_{(2)} \cdot \mathbbm{1}\big\{b_{(2)} \geq p\big\}\Big] \\
& = \E_{\bids \sim \distrs} \Big[p \cdot \mathbbm{1}\big\{b_{(1)} \geq p\big\} + \big(b_{(2)} - p\big) \cdot \mathbbm{1}\big\{b_{(2)} \geq p\big\}\Big] \\
& = \mbox{$p \cdot \big(1 - \first(p)\big) + \Int{p}{\infty} (x - p) \cdot \dd \second(x)$} \\
& = \mbox{$p \cdot \big(1 - \first(p)\big) + \Int{p}{\infty} \big(1 - \second(x)\big) \cdot \dd x$}. && \mbox{\tt (integration by parts)}
\end{align*}
This completes the proof of \Cref{lem:ar_rev}.
\end{proof}

Given the above revenue formulas, the revenue gap between $\ar$ and $\ap$ can be captured by the next mathematical program. By finding the optimal solution, we attain the next theorem.

\vspace{.1in}
\noindent\fcolorbox{black}{lightgray!50}{\begin{minipage}{0.977\textwidth}
\vspace{-.1in}
\begin{align}
\label{prog:ap_ar}\tag{P4}
& \max_{\{F_i\}_{i = 1}^n, p \in \RRP} && \mbox{$\ar(p) = p \cdot \big(1 - \first(p)\big) + \Int{p}{\infty} \big(1 - \second(x)\big) \cdot \dd x$} \\
\label{cstr:ap_ar}\tag{C4}
& \mbox{subject to:} && \ap(x) = x \cdot \big(1 - \first(x)\big) \leq 1, \hspace{2cm} \forall x \in \RRP
\end{align}
\end{minipage}}

\begin{theorem}
\label{thm:ar_ap}
The supremum ratio of $\ar$ to $\ap$ is equal to $\frac{\pi^2}{6} \approx 1.6449$, which holds in each of
\begin{enumerate*}[label = (\alph*), font = {\bfseries}]
\item asymmetric general setting,
\item asymmetric regular setting, and
\item i.i.d.\ general setting.
\end{enumerate*}
\end{theorem}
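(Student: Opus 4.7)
The plan is to prove the upper bound in the broadest setting (asymmetric general) and then exhibit matching lower-bound instances in the two restricted settings. By \Cref{lem:ar_rev}, $\ar(p) = \ap(p) + \int_p^\infty (1 - \second(x))\,\dd x$, so the central task is to bound the tail integral under the anonymous-pricing constraint~\eqref{cstr:ap_ar}, which I normalize to $\ap = 1$.

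For the upper bound, I would parametrize each buyer via $\phi_i(x) \eqdef -\ln F_i(x)$ and set $\Phi(x) \eqdef \sum_i \phi_i(x) = -\ln \first(x)$. Direct expansion of the formula for $\second$ given just before \Cref{lem:ar_rev} yields $\second(x) = e^{-\Phi(x)}\bigl[1 + \sum_i (e^{\phi_i(x)} - 1)\bigr]$, and applying the elementary inequality $e^z - 1 \ge z$ termwise gives the key pointwise bound $1 - \second(x) \le 1 - (1 + \Phi(x))\,e^{-\Phi(x)}$. The right-hand side is increasing in $\Phi \ge 0$ (its derivative is $\Phi e^{-\Phi}$), while the normalized constraint $\ap(x) = x(1 - e^{-\Phi(x)}) \le 1$ forces $\Phi(x) \le -\ln(1 - 1/x)$ for $x \ge 1$. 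Substituting this extremal $\Phi$ gives $1 - \second(x) \le \tfrac{1}{x} + (1 - \tfrac{1}{x})\ln(1 - \tfrac{1}{x})$ for all $x \ge 1$.

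The heart of the computation is then the identity $\int_1^\infty \bigl[\tfrac{1}{x} + (1 - \tfrac{1}{x})\ln(1 - \tfrac{1}{x})\bigr]\,\dd x = \tfrac{\pi^2}{6} - 1$, which I would verify by the substitution $u = 1/x$, the Taylor series $\ln(1-u) = -\sum_{k \ge 1} u^k/k$, and the telescoping identity $\sum_{m \ge 1} \bigl(\tfrac{1}{m^2} - \tfrac{1}{m(m+1)}\bigr) = \tfrac{\pi^2}{6} - 1$; this is exactly where the constant $\pi^2/6$ arises. Combining the pieces: for $p \ge 1$, $\ar(p) \le \ap(p) + (\pi^2/6 - 1) \le \pi^2/6$; for $p < 1$, $\ap(p) \le p$ and the crude bound $1 - \second(x) \le 1$ on $[p,1]$ gives $\int_p^\infty (1 - \second(x))\,\dd x \le (1-p) + (\pi^2/6 - 1)$, so $\ar(p) \le p + (1-p) + (\pi^2/6 - 1) = \pi^2/6$ in this case too. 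Taking the supremum over $p$ closes the upper bound in the asymmetric general setting.

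For the matching lower bounds, in the i.i.d.\ general setting I would take $n$ i.i.d.\ buyers with $F(x) = (1 - 1/x)^{1/n}$ on $[1, \infty)$ and $F \equiv 0$ below $1$: one checks $\ap(x) \equiv 1$ on $[1, \infty)$, and as $n \to \infty$ the pointwise limit $1 - \second(x) \to \tfrac{1}{x} + (1 - \tfrac{1}{x})\ln(1 - \tfrac{1}{x})$ drives $\ar(1) \to \pi^2/6$. For the asymmetric regular setting, I would approximate the same limiting profile via a spectrum of triangular distributions $\tri(v_i, q_i)$ with $\{v_i\}$ on a fine partition of $[1, N]$ and $\{q_i\}$ tuned so that $\Phi(x) = \sum_i \phi_i(x)$ tracks $-\ln(1 - 1/x)$; letting the mesh shrink and $N \to \infty$ then drives $\ar/\ap \to \pi^2/6$, in the spirit of \Cref{exp:spm_ap}. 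The main subtlety I expect is that the termwise inequality $e^{z} - 1 \ge z$ is tight only when every $\phi_i$ is vanishingly small, so the upper-bound inequality $\second \ge (1+\Phi)e^{-\Phi}$ is attained only asymptotically---this is precisely what forces both worst-case instances to take $n \to \infty$.
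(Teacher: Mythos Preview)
Your proposal is correct and follows essentially the same route as the paper. Your parametrization $\Phi = -\ln\first$ and the inequality $e^{\phi_i}-1 \ge \phi_i$ are exactly the paper's bound $\second \ge \first\cdot(1-\ln\first)$ in different clothing (the paper writes it via $x \ge \ln(1+x)$ applied to $1/F_i - 1$), and your extremal $\Phi(x) = -\ln(1-1/x)$ corresponds to the paper's $\Psi_1,\Psi_2$; the integral computation, the $p\lessgtr 1$ case split, the i.i.d.\ instance $F_n(p)=(1-1/p)^{1/n}$, and the triangular-spectrum construction for the regular case all match the paper's \Cref{subsec:ar_ap:upper,subapp:ar_ap_lower_irregular,subapp:ar_ap_lower_regular}.
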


\noindent
{\bf Proof Overview.}
In \Cref{subsec:ar_ap:upper}, we settle the upper-bound part of \Cref{thm:ar_ap} in asymmetric general setting, which implies the same upper bound in the other two settings. Then, we respectively construct a matching i.i.d.\ general instance in \Cref{subapp:ar_ap_lower_irregular}, and a matching asymmetric regular instance in \Cref{subapp:ar_ap_lower_regular}. As a whole, \Cref{thm:ar_ap} is accomplished.

Intuitively, we obtain \Cref{thm:ar_ap} as follows. Recall Program~\eqref{prog:spm_ap0} in \Cref{sec:spm_ap} for the $\spm$ vs.\ $\ap$ problem. The worst-case instance of Program~\eqref{prog:spm_ap0} actually makes constraint~\eqref{cstr:spm_ap0} tight everywhere, and thus in some sense ``dominates'' any other feasible instance. We simply ``guess'' that Program~\eqref{prog:ap_ar} possesses the same characterization, which turns out to be workable.

\subsection{Upper-Bound Analysis}
\label{subsec:ar_ap:upper}

Due to constraint~\eqref{cstr:ap_ar}, $\first(p) \geq \Psi_1(p) \eqdef (1 - p^{-1})_+$, for all $p \in \RRP$. Namely, the highest-value distribution $\first$ is stochastically dominated by distribution $\Psi_1$. Moreover,
\[
\begin{aligned}
\second(p)
& = \mbox{$\first(p) \cdot \Big[1 + \sum_{i = 1}^n \big(\frac{1}{F_i(p)} - 1\big)\Big]$} \\
& \geq \mbox{$\first(p) \cdot \Big[1 + \sum_{i = 1}^n \ln\big(\frac{1}{F_i(p)}\big)\Big]$}
&& \mbox{\tt (as $x \geq \ln(1 + x)$ when $x \geq 0$)} \\
& = \first(p) \cdot \big(1 - \ln \first(p)\big),
\end{aligned}
\]
It can be seen that function $d(x) \eqdef x \cdot (1 - \ln x)$ is an increasing function on interval $x \in (0, 1]$, and that $\lim\limits_{x \to 0^+} d(x) = 0$. Hence, $\second(p) \geq d\big(\Psi_1(p)\big) = \Psi_2(p)$ for all $p \in \RRP$, where $\Psi_2(p) \eqdef 0$ when $p \in [0, 1]$, and $\Psi_2(p) \eqdef (1 - p^{-1}) \cdot \big[1 - \ln(1 - p^{-1})\big]$ when $p \in (1, \infty)$.

For any reserve price $p \in \RRP$, the above arguments imply that
\begin{equation}
\label{eq:ar:upper}
\ar(p) \leq \mbox{$p \cdot \big(1 - \Psi_1(p)\big) + \Int{p}{\infty} \big(1 - \Psi_2(x)\big) \cdot \dd x$}.
\end{equation}
{\tt [When $p \leq 1$]:}
$\rhs \mbox{ of } \eqref{eq:ar:upper} = p + \Int{p}{1} 1 \cdot \dd x + \Int{1}{\infty} \big(1 - \Psi_2(x)\big) \cdot \dd x = 1 + \Int{1}{\infty} \big(1 - \Psi_2(x)\big) \cdot \dd x$.

\noindent
{\tt [When $p > 1$]:}
$\rhs \mbox{ of } \eqref{eq:ar:upper} = 1 + \Int{p}{\infty} \big(1 - \Psi_2(x)\big) \cdot \dd x < 1 + \Int{1}{\infty} \big(1 - \Psi_2(x)\big) \cdot \dd x$.

\noindent
Combining both cases together results in the upper-bound part of \Cref{thm:ar_ap}:
\begin{align}
\notag
\ar = \max_{p \in \RRP} \big\{\ar(p)\big\}
& \leq \mbox{$1 + \Int{1}{\infty} \big(1 - \Psi_2(x)\big) \cdot \dd x$} \\
\notag
& = \mbox{$1 + \Int{1}{\infty} \big[\frac{1}{x} - (1 - \frac{1}{x}) \cdot \sum_{k = 1}^{\infty} \frac{1}{k \cdot x^k}\big] \cdot \dd x$}
&& \mbox{\tt (as $\ln(1 - z) = \sum_{k = 1}^{\infty} \frac{z^k}{k}$)} \\
\notag
& = \mbox{$1 + \sum_{k = 1}^{\infty} \frac{1}{k \cdot (k + 1)} \cdot \Int{1}{\infty} \frac{\dd x}{x^{k + 1}}$} \\
\label{eq:ar_ap}
& = \mbox{$1 + \sum_{k = 1}^{\infty} \frac{1}{k^2 \cdot (k + 1)} = \sum_{k = 1}^{\infty} \frac{1}{k^2} = \frac{\pi^2}{6}$}.
\end{align}

\subsection{Lower-Bound Analysis in I.I.D.\ General Setting}
\label{subapp:ar_ap_lower_irregular}

We begin our lower-bound analysis of \Cref{thm:ar_ap} with a matching i.i.d.\ general instance, namely all of the buyers follow the same (possibly) irregular value distribution. For the instance $\{F_n\}^n$ defined in the following \Cref{exp:iid_irregular}, we will prove that:
\begin{enumerate*}[label = (\alph*), font = {\bfseries}]
\item it is feasible to Program~\eqref{prog:ap_ar}, for any $n \in \NNP$;
\item given any constant $\eps \in (0, 1)$, an $\ar$ revenue of at least $\big(\frac{\pi^2}{6} - \eps\big)$ can be extracted from it, whenever integer $n \in \NNP$ is sufficiently large.
\end{enumerate*}
For simplicity, we reuse notations defined in \Cref{subsec:ar_ap:upper}.

\begin{example}
\label{exp:iid_irregular}
There are $n \in \mathbb{N}_+$ buyers drawing i.i.d.\ values $\bids = \{b_i\}_{i = 1}^n$ from a common distribution $F_n(p) \eqdef \sqrt[n]{\Psi_1(p)}$, i.e.\ $F_n(p) = 0$ when $p \in [0, 1]$, and $F_n(p) = (1 - p^{-1})^{\frac{1}{n}}$ when $p \in (1, \infty)$.
\end{example}

First, $\ap(p) = p \cdot \Big[1 - \big(F_n(p)\big)^n\Big] = p \cdot \big(1 - \Psi_1(p)\big) \leq 1$ for all $p \in \RRP$, and thus instance $\{F_n\}^n$ is feasible to Program~\eqref{prog:ap_ar}. For CDF $\second$ of the second highest value,
\begin{align*}
\lim_{n \to \infty} \second(p)
& = \lim_{n \to \infty} \mbox{$\big(F_n(p)\big)^n \cdot \Big[1 + n \cdot \big(\frac{1}{F_n(p)} - 1\big)\Big]$} \\
& = \mbox{$\big(1 - \frac{1}{p}\big) \cdot \Big\{1 + \lim\limits_{n \to \infty} \frac{\exp[-\ln(1 - \frac{1}{p}) \cdot n^{-1}] - 1}{n^{-1}}\Big\}$}
= \Psi_2(p),
\end{align*}
for all $p \in (1, \infty)$. By choosing a fixed reserve price of $1$, we obtain a convergent $\ar$-revenue sequence, with the limitation of
\[
\lim_{n \to \infty} \ar\big(1, \{F_n\}^n\big) = \mbox{$1 + \Int{1}{\infty} \big(1 - \Psi_2(x)\big) \cdot \dd x$} \overset{\eqref{eq:ar_ap}}{=} \pi^2 / 6.
\]
In that $\eps \in (0, 1)$ is a given constant, $\ar\big(1, \{F_n\}^n\big) \geq \frac{\pi^2}{6} - \eps$ whenever integer $n \in \NNP$ is sufficiently large. This settles the lower-bound part of \Cref{thm:ar_ap} in i.i.d.\ irregular setting. (We will prove in \Cref{subapp:ar_ap_lower_formula} that distribution $F_n$ is indeed irregular, whenever $n \geq 2$.)

\begin{remark}
Actually, because $\ap\big(p, \{F_n\}^n\big) = 1$ for all $p \in (1, \infty)$, distribution $F_n$ stochastically dominates any other feasible distribution of Program~\eqref{prog:ap_ar}. This implies that instance $\{F_n\}^n$ is a worst-case instance of Program~\eqref{prog:ap_ar}, for any specific positive integer $n \in \NNP$. Below, we provide the tight ratios corresponding to some small positive integer $n \in \NNP$.

\vspace{.1in}
\begin{tabular}{|c|c|c|c|c|c|}
	\hline
	$n$ & $2$ & $3$ & $4$ & $\cdots$ & $\infty$ \\
	\hline
	\rule{0pt}{14pt}{\rm ratio} & $2\ln2 \approx 1.3863$ & $3\ln3 - \frac{\pi}{\sqrt{3}} \approx 1.4820$ & $9\ln2 - \frac{3\pi}{2} \approx 1.5259$ & $\cdots$ & $\frac{\pi^2}{6} \approx 1.6449$ \\ [4pt]
	\hline
\end{tabular}
\end{remark}

\subsection{Lower-Bound Analysis in Asymmetric Regular Setting}
\label{subapp:ar_ap_lower_regular}

We next use triangular distributions to construct an asymmetric regular instance that matches the bound in \Cref{thm:ar_ap} as well. Actually, the idea behind this instance is very similar to that behind \Cref{exp:spm_ap} (i.e.\ the lower-bound instance of the $\spm$ vs.\ $\ap$ problem in \Cref{sec:spm_ap}). For convenience, we reuse notations defined in \Cref{subsec:ar_ap:upper}. Recall that a triangular distribution $\tri(v_i, q_i)$ has a CDF of $F_i(p) = \frac{(1 - q_i) \cdot p}{(1 - q_i) \cdot p + v_i q_i}$ when $p \in [0, v_i)$, and $F_i(p) = 1$ when $p \in [v_i, \infty)$.

\begin{example}
\label{exp:ar_ap_non_iid}
Given any constant $\eps \in (0, 1)$ and integer $n \in \NNP$, let $a \eqdef (1 + \eps)$, $b \eqdef (1 + \eps^{-1})$, and $\delta \eqdef \frac{b - a}{n}$. Based on function $\V(p) \eqdef p \cdot \ln\big(\frac{p}{p - 1}\big)$, consider triangular instance $\{\tri(v_i, q_i)\}_{i = 1}^{2n}$:
\begin{align*}
& v_i \eqdef b, && q_i \eqdef \frac{\frac{1}{n} \cdot \V(v_i)}{v_i + \frac{1}{n} \cdot \V(v_i)}, && \forall i \in [n]; \\
& v_i \eqdef b - (i - n) \cdot \delta, && q_i \eqdef \frac{\V(v_i) - \V(v_{i - 1})}{v_i + \V(v_i) - \V(v_{i - 1})}, && \forall i \in [n + 1: 2n].
\end{align*}
\end{example}

For ease of notation, let $v_0 \eqdef \infty$. It can be checked that
\begin{enumerate*}[label = (\alph*), font = {\bfseries}]
\item $\ln(1 + x) \leq x$ when $x \geq 0$; and
\item function $\V$ is a decreasing function.
\end{enumerate*}
To justify the feasibility that $\ap\big(p, \{\tri(v_i, q_i)\}_{i = 1}^{2n}\big) = p \cdot \big(1 - \prod_{i = 1}^{2n} F_i(p)\big) \leq 1$, or equivalently, $\sum_{i = 1}^{2n} \ln F_i(p) \ge \ln\big(1 - \frac{1}{p}\big)$, we observe that
\begin{align*}
\mbox{$\sum_{i = 1}^{2n} \ln F_i(p)$}
& = -\mbox{$\sum_{i: v_i \geq p} \ln\big(1 + \frac{v_i q_i}{1 - q_i} \cdot \frac{1}{p}\big)$} \\
& \geq -\mbox{$\sum_{i: v_i \geq p} \frac{v_i q_i}{1 - q_i} \cdot \frac{1}{p}$}
&& \mbox{\tt (by {\bf Claim~(a)}  above)} \\
& = -\mbox{$\frac{1}{p} \cdot \V\big(\min_{i \in [2n]} \big\{v_i \given v_i \geq p\big\}\big)$}
&& \mbox{\tt (by formulas of $\{q_i\}_{i = 1}^{2n}$)} \\
& \geq -\mbox{$\frac{1}{p} \cdot \V(p) = \ln\big(1 - \frac{1}{p}\big)$}.
&& \mbox{\tt (by {\bf Claim~(b)} above)}
\end{align*}
Let us measure the $\ar$ revenue from the above triangular instance $\{\tri(v_i, q_i)\}_{i = 1}^{2n}$.

\begin{lemma}
\label{lem:ar_ap_lower}
Consider the triangular instance $\{\tri(v_i, q_i)\}_{i = 1}^{2n}$ defined in \Cref{exp:ar_ap_non_iid}. It follows that $\ar \geq \ar(a) \geq \frac{\pi^2}{6} - 3 \cdot \eps$, for any sufficiently large integer $n \in \mathbb{N}_+$.
\end{lemma}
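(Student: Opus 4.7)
\textbf{Proof plan for Lemma~\ref{lem:ar_ap_lower}.} The approach is to show that the triangular instance $\{\tri(v_i, q_i)\}_{i = 1}^{2n}$ of \Cref{exp:ar_ap_non_iid} realizes the worst-case CDFs $\Psi_1(p) = (1-p^{-1})_+$ and $\Psi_2(p)$ from \Cref{subsec:ar_ap:upper} in the limit $n \to \infty$, so that $\ar(a)$ approaches $1 + \int_1^{\infty} (1 - \Psi_2(x))\,\dd x = \pi^2/6$ up to error $O(\eps)$. Concretely, evaluate $\ar(a)$ via \Cref{lem:ar_rev} as $\ar(a) = a \cdot (1 - \first(a)) + \Int{a}{\infty} (1 - \second(x)) \cdot \dd x$, and observe that since $v_i \leq b$ for all $i$, the integrand vanishes on $(b, \infty)$; it thus suffices to analyze $\first$ and $\second$ on $[a, b]$.

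The key computation is the telescoping identity
\[
\mbox{$\sum_{i: v_i \geq p} \frac{v_i q_i}{(1 - q_i) \cdot p} = \frac{1}{p} \cdot \V\big(v_{k(p)}\big)$},
\]
where $v_{k(p)} \eqdef \min\{v_i : v_i \geq p\}$. This is because the first $n$ buyers contribute $n \cdot \frac{\V(b)}{np} = \frac{\V(b)}{p}$, and the last $n$ buyers telescope (using $v_n = v_0 = \infty$ with $\V(\infty) = 0$ convention) to $\frac{\V(v_{k(p)}) - \V(b)}{p}$. For any $p \in [a, b]$, the spacing of the grid $\{v_i\}_{i = n+1}^{2n}$ guarantees $v_{k(p)} \in [p, p + \delta]$ with $\delta = (b-a)/n$. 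Applying the two-sided bound $x - \tfrac{x^2}{2} \le \ln(1+x) \le x$ to $-\ln \first(p) = \sum_{v_i \geq p} \ln\big(1 + \frac{v_iq_i}{(1-q_i)p}\big)$, and noting that each summand is $O(1/n)$ (uniformly on $[a, b]$), yields $-\ln \first(p) = \V(v_{k(p)})/p + O(1/n)$, hence $\first(p) = (1 - 1/v_{k(p)})^{v_{k(p)}/p} \cdot (1 + O(1/n))$. Combined with continuity of $v \mapsto (1-1/v)^{v/p}$ and $v_{k(p)} - p \leq \delta$, we conclude $\first(p) \leq \Psi_1(p) + O(1/n)$ uniformly on $[a, b]$. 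A parallel argument starting from $1/F_i(p) - 1 = \frac{v_i q_i}{(1-q_i) p}$ gives $\second(p) \leq \Psi_2(p) + O(1/n)$ uniformly.

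In particular $a(1 - \first(a)) \geq a(1 - \Psi_1(a)) - O(1/n) = 1 - O(1/n)$ (note that $a = v_{2n}$ coincides exactly with the grid, so the discretization error at the boundary is zero). For the tail integral,
\[
\mbox{$\Int{a}{\infty} (1 - \second(x)) \cdot \dd x = \Int{a}{b} (1 - \second(x)) \cdot \dd x \geq \Int{a}{b} (1 - \Psi_2(x)) \cdot \dd x - (b-a) \cdot O(1/n)$},
\]
and $\int_a^b (1 - \Psi_2) \dd x$ differs from $\int_1^\infty (1 - \Psi_2) \dd x = \pi^2/6 - 1$ by at most $\int_1^a \dd x + \int_b^\infty \frac{C}{x^2}\dd x \leq (a-1) + C/b \leq \eps + C\eps$, using the asymptotic $1 - \Psi_2(x) = O(1/x^2)$ for large $x$ together with $a - 1 = \eps$ and $1/b = \eps/(1+\eps) \leq \eps$. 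Summing, $\ar(a) \geq 1 + (\pi^2/6 - 1) - O(\eps) - O(1/n - 1/(n\eps))$, which is at least $\pi^2/6 - 3\eps$ once $n$ is chosen large enough (since $\eps$ is a fixed constant).

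The main obstacle is the careful bookkeeping of the two separate error sources: the discretization error $v_{k(p)} - p \leq \delta = O(1/(n\eps))$ from replacing a continuum of ``small'' triangular distributions by $n$ discrete ones, and the truncation error from restricting to the bounded window $[a, b]$, which must each be shown to be $O(\eps)$ with constants independent of $n$. The telescoping identity and the choice of the first $n$ identical ``copies'' at $v_i = b$ (which guarantees $\first(b^-) \to 1 - 1/b$ rather than overshooting) are precisely engineered to make both errors controllable by $\eps$ and $1/n$ separately.
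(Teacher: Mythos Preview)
Your approach is correct and essentially identical to the paper's: both show that as $n\to\infty$ the CDFs $\first,\second$ converge (uniformly on $[a,b]$) to the extremal $\Psi_1,\Psi_2$ from \Cref{subsec:ar_ap:upper}, then bound the truncation error $\big(\int_1^a+\int_b^\infty\big)(1-\Psi_2)\,\dd x\le 2\eps$ and absorb the remaining $O_\eps(1/n)$ discretization error into a third $\eps$. Two cosmetic slips: the parenthetical ``$v_n=v_0=\infty$'' is wrong here (in \Cref{exp:ar_ap_non_iid} one has $v_n=b$, and the telescope is simply $\sum_{i=n+1}^{k(p)}\big(\V(v_i)-\V(v_{i-1})\big)=\V(v_{k(p)})-\V(b)$, which is what you correctly wrote), and ``$O(1/n-1/(n\eps))$'' should read $O(1/n)+O\big((b-a)/n\big)=O_\eps(1/n)$.
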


\begin{proof}
We first prove that $\lim\limits_{n \to \infty} \first(p) = \Psi_1(p)$ for any $p \in [a, b]$.
Observe that
\begin{enumerate*}[label = (\alph*), font = {\bfseries}]
\item when $n \to \infty$, $q_i \to 0^+$ for all $i \in [2n]$; and
\item $a = v_{2n} < v_{2n - 1} < \cdots < v_{n + 1} < v_n = b$ is a uniform partition of interval $[a, b]$, with norm $\delta = \frac{b - a}{n}$.
\end{enumerate*}
As a result, for any $p \in [a, b]$,
\begin{align*}
\lim_{n \to \infty} \first(p)
& = \mbox{$\exp\Big[-\lim_{n \to \infty} \sum_{i: v_i \geq p} \ln\big(1 + \frac{v_i q_i}{1 - q_i} \cdot \frac{1}{p}\big)\Big]$} \\
& = \mbox{$\exp\Big(-\lim_{n \to \infty} \sum_{i: v_i \geq p} \frac{v_i q_i}{1 - q_i} \cdot \frac{1}{p}\Big)$}
&& \mbox{\tt (by {\bf Claim~(a)} that $q_i \to 0^+$)} \\
& = \exp\big(-p^{-1} \cdot \V(\min_{i \in [2n]} \{v_i \given v_i \geq p\})\big) && \mbox{\tt (by formulas of $\{q_i\}_{i = 1}^{2n}$)} \\
& = \exp\big(-p^{-1} \cdot \V(p)\big)
= 1 - p^{-1}. && \mbox{\tt (by {\bf Claim~(b)} that norm $\delta \to 0^+$)}
\end{align*}
Similarly, it can be checked that $\lim\limits_{n \to \infty} \second(p) = \Psi_2(p)$ for any $p \in [a, b]$. Because $\eps \in (0, 1)$ is a given constant, by choosing a fixed reserve price of $a = (1 + \eps)$ and a sufficiently large integer $n \in \NNP$, we capture an $\ar$ revenue of
\begin{align*}
\ar(a)
& \geq \mbox{$a \cdot \big(1 - \Psi_1(a)\big) + \Int{a}{b} \big(1 - \Psi_2(x)\big) \cdot \dd x - \eps$} \\
& \overset{\eqref{eq:ar_ap}}{=} \mbox{$\pi^2 / 6 - \big(\Int{1}{a} + \Int{b}{\infty}\big) \big(1 - \Psi_2(x)\big) \cdot \dd x - \eps$}.
\end{align*}
It can be checked that $\ln\big(1 - \frac{1}{p}\big) \leq -\frac{1}{p}$, and thus $\Psi_2(p) = \big(1 - \frac{1}{p}\big) \cdot \Big[1 - \ln\big(1 - \frac{1}{p}\big)\Big] \geq 1 - \frac{1}{p^2}$. Recall that $a = (1 + \eps)$ and $b = (1 + \eps^{-1})$, then
\begin{align*}
\mbox{$\big(\Int{1}{a} + \Int{b}{\infty}\big) \big(1 - \Psi_2(x)\big) \cdot \dd x$}
\leq \mbox{$\big(\Int{1}{a} + \Int{b}{\infty}\big) \frac{\dd x}{x^2}$}
= \mbox{$1 - \frac{1}{a} + \frac{1}{b}$}
= \mbox{$\frac{2 \cdot \eps}{1 + \eps}$}
\leq \mbox{$2 \cdot \eps$}.
\end{align*}
Combining the above two inequalities together completes the proof of \Cref{lem:ar_ap_lower}. This settles the lower-bound part of \Cref{thm:ar_ap} in asymmetric regular setting.
\end{proof}

\section{Myerson Auction vs. Anonymous Reserve}
\label{sec:opt_ar}
For the $\opt$ vs.\ $\ar$ problem in asymmetric regular setting, \cite{HR09} conjectured that the following two-buyer instance (i.e.\ \Cref{exp:opt_ar_2}) is the worst case. This instance gives a ratio of $2$, which remained the best known lower bound for a decade. Nonetheless, we will employ {\em triangular} instances to establish improved lower bounds.

\begin{example}
\label{exp:opt_ar_2}
Consider a two-buyer instance:
\begin{enumerate*}[label = (\alph*), font = {\bfseries}]
\item one buyer draws his value from the equal-revenue distribution $F(p) = 1 - \frac{1}{p}$; and
\item the other buyer has a deterministic value of $1$.
\end{enumerate*}

It can be checked that $\ar = \ap = 1$, whereas $\opt = \spm = 2$ (e.g. by sequentially posting a price of $\infty$ to the first buyer, and then posting a price of $1$ to the second buyer).
\end{example}

The following lemma (see \Cref{app:opt_ar} for its proof) will be helpful in reasoning about our sharper lower-bound instances. For this, we provide some intuitions in \Cref{rmk:ar_monotone}, based on the {\em revenue-equivalence theorem} of \cite{M81}.

\begin{lemma}
\label{lem:ar_rev_monotone}
Given any triangular instance $\{\tri(v_i, q_i)\}_{i = 1}^n$ that $v_1 \geq v_2 \cdots \geq v_n > v_{n + 1} \eqdef 0$, the best $\ar$ revenue is
achieved by reserve price $p = v_i$, for some $i \in [n]$.
\end{lemma}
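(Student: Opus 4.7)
The plan is to show that $\ar(p)$ is strictly increasing on each open interval between consecutive monopoly prices, and then account for the endpoint behaviour, which together force the supremum to be attained at some $v_i$. First, for $p > v_1$ every bidder's value lies in $[0, v_j] \subseteq [0, v_1]$, so the reserve is never met and $\ar(p) = 0$; hence it suffices to analyse $p \in [0, v_1]$, partitioned into the half-open intervals $(v_{i+1}, v_i]$ for $i \in [n]$ (with $v_{n+1} \eqdef 0$).

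Next, fix such an interval and work in its open interior $(v_{i+1}, v_i)$, where every CDF is smooth. Setting $a_j \eqdef v_j q_j / (1 - q_j)$, one has $F_j(p) = p/(p + a_j)$ for $j \le i$ and $F_j(p) = 1$ for $j > i$. These closed forms give $\first(p) = \prod_{j \le i} p/(p + a_j)$, together with the two identities
\[
\second(p) - \first(p) \;=\; \sum_{k \le i} \bigl(1 - F_k(p)\bigr) \prod_{j \ne k,\, j \le i} F_j(p) \;=\; \frac{\first(p)}{p} \sum_{j = 1}^{i} a_j,
\]
and $p \first'(p) = \first(p) \sum_{j = 1}^{i} a_j / (p + a_j)$. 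Substituting into the formula of \Cref{lem:ar_rev} and differentiating yields
\[
\ar'(p) \;=\; \bigl(\second(p) - \first(p)\bigr) \;-\; p \first'(p) \;=\; \frac{\first(p)}{p} \sum_{j = 1}^{i} \frac{a_j^2}{p + a_j} \;>\; 0,
\]
so $\ar$ is strictly increasing on $(v_{i+1}, v_i)$.

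Finally, to pass to the right endpoint, I must handle the atom of mass $q_i$ that $\tri(v_i, q_i)$ places at $v_i$: under the natural interpretation that a bidder with value exactly equal to the reserve meets it, the revenue function is left-continuous at $v_i$, so $\ar(v_i) = \lim_{p \uparrow v_i} \ar(p)$ and the supremum on $(v_{i+1}, v_i]$ is attained at $p = v_i$. Combined with $\ar(p) = 0$ for $p > v_1$, this yields $\sup_{p \ge 0} \ar(p) = \max_{i \in [n]} \ar(v_i)$, as claimed. The main obstacle is this last step: if one reads \Cref{lem:ar_rev} literally with the right-continuous CDF convention, then $\first$ jumps at $v_i$ and the naive formula gives $\ar(v_i) < \lim_{p \uparrow v_i} \ar(p)$, making the supremum only approached rather than attained; the correct convention has to be invoked explicitly. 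The algebraic cancellation in the derivative computation is the core of the argument and relies essentially on the specific form $F_j(p) = p/(p + a_j)$ of triangular CDFs, so this proof method is genuinely specialised to the triangular class.
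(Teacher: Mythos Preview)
Your proof is correct and follows essentially the same route as the paper: differentiate the formula of \Cref{lem:ar_rev} on each open interval $(v_{i+1},v_i)$ and show $\ar'(p)>0$ there. The paper packages the final expression as $\ar'(p)=\first(p)\sum_{j\le i}\frac{f_j(p)}{F_j(p)}\bigl(-\varphi_j(p)\bigr)$ via the constant negative virtual value $\varphi_j(p)=-a_j$, which is exactly your $\frac{\first(p)}{p}\sum_{j\le i}\frac{a_j^2}{p+a_j}$ once one substitutes $f_j/F_j=a_j/[p(p+a_j)]$; your explicit handling of the $p>v_1$ region and of left-continuity at the atoms is more careful than the paper, which leaves those points implicit.
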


\begin{remark}
\label{rmk:ar_monotone}
Any triangular distribution $\tri(v_i, q_i)$ is supported on interval $p \in [0, v_i]$, and only the maximum possible value of $b_i = v_i$ corresponds to a non-negative virtual value. Given any $k \in [n]$, when the seller raises the reserve price in $\ar$ mechanism from some $p \in (v_{k + 1}, v_k)$ to $p = v_k$, the item no longer gets allocated when the highest value is between $(v_{k + 1}, v_k)$. As a result, the seller rules out some possible outcomes in $\ar$ mechanism that result in negative virtual welfare. By the revenue-equivalence theorem, the new reserve price of $p = v_i$ leads to a larger $\ar$ revenue.
\end{remark}

\begin{theorem}
\label{thm:opt_ar_lower}
For the $\opt$ vs.\ $\ar$ problem, there is a three-buyer instance $\{\tri(v_i, q_i)\}_{i = 0}^2$ giving a ratio of $2.1361$, and a four-buyer triangular instance $\{\tri(\overline{v}_i, \overline{q}_i)\}_{i = 0}^3$ giving a ratio of $2.1596$.
\end{theorem}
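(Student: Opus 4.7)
The plan is to construct explicit three- and four-buyer triangular instances and verify the claimed ratios by direct calculation, exploiting \Cref{lem:spm_tri_best} to write $\opt$ in closed form and \Cref{lem:ar_rev_monotone,lem:ar_rev} to reduce the $\ar$ computation to a finite search over monopoly prices. Following the paper's earlier discussion (notably the remark that $\tri(\infty)$ features in improved $\opt$ vs.\ $\ar$ lower-bound instances), I would take the $i = 0$ buyer to be $\tri(\infty)$ in both constructions, i.e., the $N \to \infty$ limit of $\tri(N, 1/N)$ with CDF $F_0(p) = p/(p+1)$. This buyer contributes exactly one unit to $\opt$ through Part~1 of \Cref{lem:spm_tri_best}, while being ``spread out'' enough that it barely boosts $\ar$ at any finite reserve.

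\textbf{Step 1 (compute $\opt$).} Arrange the remaining buyers so that $v_0 = \infty > v_1 > \cdots > v_k \geq 1$, with $k = 2$ (three-buyer case) or $k = 3$ (four-buyer case). By Part~1 of \Cref{lem:spm_tri_best}, using $v_0 q_0 \to 1$ and $q_0 \to 0$ in the limit,
\[
\opt \;=\; 1 \;+\; \sum_{i=1}^{k} v_i q_i \prod_{j=1}^{i-1}(1 - q_j),
\]
an explicit polynomial in the $2k$ free parameters $(v_i, q_i)_{i=1}^k$.

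\textbf{Step 2 (compute $\ar$).} By \Cref{lem:ar_rev_monotone}, the optimal reserve is some $v_i$; the choice $v_0 = \infty$ corresponds to treating the $\tri(\infty)$ buyer as a deterministic default and yields revenue at most $1$, so it suffices to evaluate $\ar(v_i)$ for $i \in [k]$. Using the explicit product form $D_1(p) = \prod_{j=0}^{k} F_j(p)$ and the identity $D_2(p) = D_1(p) \cdot \bigl[1 + \sum_j (F_j(p)^{-1} - 1)\bigr]$ together with \Cref{lem:ar_rev}, each $\ar(v_i)$ reduces to an elementary integral of a rational function of $p$, evaluable in closed form in terms of logarithms (and possibly arctangents) of the parameters.

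\textbf{Step 3 (optimize and verify).} Numerically search over $(v_1, q_1, \ldots, v_k, q_k)$ to maximize $\opt / \max_{i \in [k]} \ar(v_i)$, pinpoint a good parameter choice, then substitute into the closed-form expressions from Steps~1 and~2 to certify the ratios $\geq 2.1361$ and $\geq 2.1596$; this reduces to a finite arithmetic check. The main obstacle is Step~3: the optimization landscape is non-convex with no obvious closed-form optimum, so the proof is effectively by exhibiting one carefully chosen parameter tuple and rigorously bounding the resulting algebraic expression from below. A secondary subtlety is confirming which $v_i$ actually achieves $\max_i \ar(v_i)$ at the proposed parameters — this requires comparing at most $k$ explicit numbers — but once the maximizer is identified, the ratio bound is immediate.
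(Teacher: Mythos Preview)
Your plan is essentially the paper's own approach: take buyer~$0$ to be $\tri(\infty)$, compute $\opt$ via \Cref{lem:spm_tri_best}, reduce $\ar$ to the finite set $\{v_i\}$ via \Cref{lem:ar_rev_monotone}, and evaluate each $\ar(v_i)$ through \Cref{lem:ar_rev}. Two points deserve correction or sharpening.

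First, the restriction $v_k \geq 1$ must be dropped. The paper's three-buyer optimum has $v_2 \approx 0.8399$ (with $q_2 = 1$, i.e.\ a deterministic buyer), and the four-buyer optimum has $\overline v_3 \approx 0.7231$ (again $\overline q_3 = 1$). Restricting to $v_k \geq 1$ appears to be carried over from the $\spm$ vs.\ $\ap$ analysis, where it arose for a different reason; here it would block you from reaching the stated ratios.

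Second, rather than a brute-force search over $2k$ parameters, the paper uses an equalization design: it forces $\ar(v_i) = 1$ for every $i$. Concretely, $\ar(\infty) = 1$ automatically; choosing $q_1 = 1/v_1^2$ makes $\ar(v_1) = \ap(v_1) = 1$; setting $q_2 = 1$ and solving a single transcendental equation in $v_2$ makes $\ar(v_2) = 1$. This collapses the three-buyer problem to a one-variable optimization in $v_1$, with $\opt = 1 + 1/v_1 + v_2(v_1)\cdot(1 - 1/v_1^2)$. The four-buyer case is handled analogously. This is not a different method, just a cleaner parametrization of the same search.

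A minor bookkeeping note: once you drop $i = 0$ from the $\ar$ maximization, remember that $\ar(v_0) = \lim_{p \to \infty} p/(p+1) = 1$ still contributes, so the denominator is $\max\bigl\{1,\ \max_{i \in [k]} \ar(v_i)\bigr\}$; in the paper's instances all of these equal~$1$, so the issue is moot there, but your general formula in Step~3 as written could otherwise overstate the ratio.
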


\begin{proof}
Our three-buyer instance contains three triangular distributions $\{\tri(v_i, q_i)\}_{i = 0}^2$, where the monopoly prices $v_0 > v_1 > v_2$. Recall that a triangular distribution $\tri(v_i, q_i)$ has a CDF of $F_i(p) = \frac{(1 - q_i) \cdot p}{(1 - q_i) \cdot p + v_i q_i}$ when $p \in [0, v_i)$, and $F_i(p) = 1$ when $p \in [v_i, \infty)$. We reuse notations $\first$ and $\second$ to respectively denote the CDF's of the highest and the second highest values.

\begin{itemize}
\item Let distribution $\tri(v_0, q_0)$ be the special distribution $\tri(\infty)$ with CDF $F_0(p) = \frac{p}{p + 1}$.

    Thus, $\ar(p) = p \cdot \big(1 - F_0(p)\big) = \frac{p}{p + 1}$, for all $p \in (v_1, \infty)$; particularly, $\ar(v_0) = \ar(\infty) = 1$.
\item Let $v_1 \in (1, \infty)$ be a variable to be determined, and let $q_1 \eqdef 1 / v_1^2$.

    Because this buyer always has a value no more than $v_1$, the $\ar$ revenue under reserve price $p = v_1$ is equal to the $\ap$ revenue under posted price $p = v_1$. Because $q_1 = 1 / v_1^2$, we conclude that $\ar(v_1) = \ap(v_1) = v_1 \cdot \big[1 - \frac{v_1}{v_1 + 1} \cdot \frac{(1 - q_1) \cdot v_1}{(1 - q_1) \cdot v_1 + v_1 q_1}\big] = 1$.
\item Let $v_2$ be the root of equation $v_2 + \frac{v_1}{1 + v_1 - v_1^2} \cdot \ln\big[\frac{1 + v_1}{1 + v_2} \cdot \frac{v_2(v_1^2 - 1) + v_1}{v_1^3}\big] = 1$, and let $q_2 \eqdef 1$.

    This buyer has a deterministic value of $v_2$, i.e.\ $\ap(v_2) = v_2$. For any $x \in (v_2, v_1]$,
    \[
    1 - \second(x) = \big(1 - F_0(x)\big) \cdot \big(1 - F_1(x)\big) = \frac{v_1 q_1}{(x + 1) \cdot \big(x + \frac{v_1 q_1}{1 - q_1}\big)},
    \]
    Given these, and due to the definitions of $q_1$ and $v_2$, we have
    \begin{align*}
    \ar(v_2)
    & = \mbox{$\ap(v_2) + \Int{v_2}{v_1} \big(1 - \second(x)\big) \cdot \dd x$} \\
    & = v_2 +  \frac{v_1}{1 + v_1 - v_1^2} \cdot \Big.\ln\Big[\frac{x + 1}{(v_1^2 - 1) \cdot x + v_1}\Big]\Big|_{v_2}^{v_1} = 1.
    \end{align*}
\end{itemize}
Combining everything together, we conclude from \Cref{lem:ar_rev_monotone} that $\ar = 1$. On the other hand,
\begin{align*}
\opt
& = 1 + v_1 q_1 + v_2 q_2 \cdot (1 - q_1)
&& \mbox{\tt (by \Cref{lem:spm_tri_best})} \\
& = 1 + 1 / v_1 + v_2 \cdot (1 - 1 / v_1^2).
&& \mbox{\tt (as $q_1 = 1 / v_1^2$)}
\end{align*}
As \Cref{fig:opt_ar} shows, choosing $v_1 \approx 1.5699$ results in $v_2 \approx 0.8399$ and $\opt \approx 2.1361$.

\begin{figure}[H]
\centering
\subfigure[$v_2$-$v_1$ curve]{

\par}

\begin{itemize}
	\item $\ar(\overline{v}_i) \approx 1.0000$ for each $i \in \{0, 1, 2, 3\}$, and thus $\ar \approx 1.0000$.
	\item $\opt = 1 + \sum_{i = 1}^3 \overline{v}_i \overline{q}_i \cdot \prod_{j = 1}^{i - 1} (1 - \overline{q}_j) \approx 2.1596$.
\end{itemize}
This completes the proof of \Cref{thm:opt_ar_lower}.
\end{proof}

\begin{remark}
The reader may ask that why triangular instances are ``bad cases'' of the {\sf Myerson Auction} vs.\ {\sf Anonymous Reserve} problem. Actually, the following reasoning guides us to obtain the improved lower-bound instances involved in \Cref{thm:opt_ar_lower}.

As mentioned, in comparison with {\sf Anonymous Pricing}:
\begin{enumerate*}[label = (\alph*), font = {\bfseries}]
\item {\sf Sequential Posted-Pricing} leverages the price discrimination;
\item {\sf Anonymous Reserve} employs the buyer competition; and
\item {\sf Myerson Auction} benefits from the both advantages.
\end{enumerate*}

In addition, for any triangular instance:
\begin{enumerate*}[label = (\alph*), font = {\bfseries}]
\item \Cref{lem:spm_tri_best} suggests that {\sf Sequential Posted-Pricing} gives exactly the same revenue as {\sf Myerson Auction}; that is,
\item the price discrimination plays a dominant role in revenue maximization, whereas the buyer competition is negligible;
\item given this, we conjecture that the worst case of the {\sf Myerson Auction} vs.\ {\sf Anonymous Reserve} problem is reached by a triangular instance.
\end{enumerate*}
\end{remark}

\section{Conclusions}
\label{app:summary}
This work studies revenue gaps among $\opt$, $\spm$, $\ar$, and $\ap$. In the literature, there is another simple mechanism receiving particular attention: {\sf Order-Oblivious Posted-Pricing} ($\opm$) proposed by \cite{CHMS10}. Roughly speaking, $\opm$ is the worst-order (a.k.a.\ adversary) counterpart of $\spm$, and models the fact that the seller cannot control the order of buyers in some markets. In particular, this mechanism and the corresponding revenue are given by
\[
\opm(\distrs) \eqdef \min_{\sigma \in \Pi} \max_{\prices \in \RRP^n} \big\{\spm(\sigma, \prices, \distrs)\big\}.
\]
The next three tables summarize current progress in this research agenda. For each comparison, an interval indicates both of lower bound and upper bound, while a number means that this bound is tight. Recall the lattice structure in \Cref{fig:result}: $\spm$ and $\opm$ are identical for i.i.d.\ distributions, and are both incomparable to $\ar$.

Notice that some tight ratios in \Cref{tbl:summary1} and \Cref{tbl:summary2} are still unknown. It is an interesting direction to close these gaps, which are beneficial to our understandings of relative powers and distinctions of these mechanisms.

\begin{table}[H]
\centering
{\small
\begin{tabular}{|c|c|c|c|}
	\hline
	\rule{0pt}{12pt}Setting & Comparison & Ratio & Reference \\ [1pt]
	\hline
	\hline
	\rule{0pt}{12pt}I.I.D.\ & $\opt$ vs.\ $\spm^\diamond$ & $\alpha$ & \cite{K86,CFHOV17} \\ [1pt]
	\hline
	\hline
	\rule{0pt}{12pt}& $\opt$ vs.\ $\spm$ & $[\alpha, \beta]$ & \cite{K86,CSZ19} \\ [1pt]
	\cline{2-4}
	\rule{0pt}{12pt}Asym. & $\opt$ vs.\ $\opm$ & \multirow{2}*{\rule{0pt}{17pt}$2$} & Folklore, e.g.\ see \citet[Chapter~4.2.2]{H13} \\ [1pt]
	\cline{2-2}
	\rule{0pt}{12pt}& $\spm$ vs.\ $\opm$ & & \citet[Example~5.2]{HR09} \\ [1pt]
	\hline
	\multicolumn{4}{l}{\rule{0pt}{12pt}$(\diamond)$: Note that $\spm$ and $\opm$ are equivalent in the i.i.d.\ setting.}
\end{tabular}}
\caption{Summary I of revenue gaps; each comparison here admits the same ratio for general distributions as for regular distributions, due to the {\em ironing} technique \citep[see][]{M81}; for definitions of constants $\alpha \approx 1.34$ and $\beta \approx 1.49$, recall \Cref{footnote:alpha,footnote:beta}, respectively.}
\label{tbl:summary1}
\end{table}

\begin{table}[H]
\centering
{\small\begin{tabular}{|c|c|c|c|}
\hline
\rule{0pt}{12pt}Setting & Comparison & Ratio & Reference \\ [1pt]
\hline
\hline
\rule{0pt}{12pt}\multirow{2}*{I.I.D.\ Regular} & $\opt$ vs.\ $\ap$ & \multirow{2}*{\rule{0pt}{13pt}$e / (e - 1)$} & \cite{CHMS10} \\ [1pt]
\cline{2-2}
\rule{0pt}{12pt}& $\spm^\diamond$ vs.\ $\ap$ &  & \cite{DFK16} \\ [1pt]
\hline
\hline
\rule{0pt}{12pt}& $\opt$ vs.\ $\ar$ & $[2.1596, \C]$ & \Cref{thm:opt_ar_lower}, \cite{AHNPY15} \\ [1pt]
\cline{2-4}
\rule{0pt}{12pt}& $\opt$ vs.\ $\ap$ & $\C$ & \cite{AHNPY15,JLQTX2019} \\ [1pt]
\cline{2-4}
\rule{0pt}{12pt}Asym. Regular & $\spm$ vs.\ $\ap$ & $\C$ & \Cref{thm:spm_ap} \\ [1pt]
\cline{2-4}
\rule{0pt}{12pt}& $\opm$ vs.\ $\ap$ & $[e / (e - 1), \C]$ & \Cref{thm:spm_ap}, \cite{DFK16} \\ [1pt]
\cline{2-4}
\rule{0pt}{12pt}& $\ar$ vs.\ $\ap$ & $\pi^2 / 6$ & \Cref{thm:ar_ap} \\ [1pt]
\hline
\multicolumn{4}{l}{\rule{0pt}{12pt}$(\diamond)$: Note that $\spm$ and $\opm$ are equivalent in the i.i.d.\ setting.}
\end{tabular}}
\caption{Summary II of revenue gaps.}
\label{tbl:summary2}
\end{table}

\begin{table}[H]
\centering
{\small\begin{tabular}{|c|c|c|c|}
\hline
\rule{0pt}{12pt}Setting & Comparison & Ratio & Reference \\
\hline
\hline
\rule{0pt}{12pt}& $\opt$ vs.\ $\ar$ & & \cite{CHMS10} \\ [1pt]
\cline{2-2}
\rule{0pt}{12pt}\multirow{2}*{\rule{0pt}{14pt}I.I.D.\ General} & $\opt$ vs.\ $\ap$ & $2$ & \cite{H13} \\ [1pt]
\cline{2-2}
\rule{0pt}{12pt}& $\spm^\diamond$ vs.\ $\ap$ & & \cite{DFK16} \\ [1pt]
\cline{2-4}
\rule{0pt}{12pt}& $\ar$ vs.\ $\ap$ & $\pi^2 / 6$ & \Cref{thm:ar_ap} \\ [1pt]
\hline
\hline
\rule{0pt}{12pt}& $\opt$ vs.\ $\ar$ & & \\ [1pt]
\cline{2-2}
\rule{0pt}{12pt}& $\opt$ vs.\ $\ap$ & \multirow{2}*{\rule{0pt}{14pt}$n$} & \multirow{2}*{\cite{AHNPY15}} \\ [1pt]
\cline{2-2}
\rule{0pt}{12pt}Asym. General & $\spm$ vs.\ $\ap$ & & \\ [1pt]
\cline{2-2}
\rule{0pt}{12pt}& $\opm$ vs.\ $\ap$ & & \\ [1pt]
\cline{2-4}
\rule{0pt}{12pt}& $\ar$ vs.\ $\ap$ & $\pi^2 / 6$ & \Cref{thm:ar_ap} \\ [1pt]
\hline
\multicolumn{4}{l}{\rule{0pt}{12pt}$(\diamond)$: Note that $\spm$ and $\opm$ are equivalent in the i.i.d.\ setting.}
\end{tabular}}
\caption{Summary III of revenue gaps; note that all ratios here are well understood.}
\label{tbl:summary3}
\end{table}

\vspace{.1in}
\noindent
{\bf Acknowledgement.}
We are grateful to Qi Qi and Simai He for many helpful discussions, and would like to thank the anonymous reviewers for providing \Cref{rmk:lem:spm_tri_best,rmk:ar_monotone}, as well as other invaluable suggestions.

\appendix
\renewcommand{\appendixname}{Appendix~\Alph{section}}

\section{Mathematical Facts}
\label{subapp:math}
\subsection{Proof of \texorpdfstring{\Cref{lem:spm_ap_RQ}}{}}

{\bf \Cref{lem:spm_ap_RQ}.}
{\em For functions $\R(p) = -p \cdot \ln(1 - p^{-2})$ and $\Q(p) = -\ln(1 - p^{-2}) - \frac{1}{2} \cdot \sum_{k = 1}^{\infty} k^{-2} \cdot p^{-2k}$, the following holds:
\begin{enumerate}[font = {\em\bfseries}]
\item $\R'(p) = p \cdot \Q'(p) < 0$ for any $p \in (1, \infty)$.
\item $\lim\limits_{p \rightarrow 1^+} \R(p) = \lim\limits_{p \rightarrow 1^+} \Q(p) = \infty$ and $\lim\limits_{p \rightarrow \infty} \R(p) = \lim\limits_{p \rightarrow \infty} \Q(p) = 0$.
\end{enumerate}}

\begin{proof}
To see \Cref{lem:spm_ap_RQ}, we note that
\begin{enumerate*}[label = (\alph*), font = {\bfseries}]
\item $\ln(1 - z) = -\sum_{k = 1}^n k^{-1} \cdot z^k$;
\item $\sum_{k = 1}^{\infty} k^{-2} = \frac{\pi^2}{6}$; and
\item $\ln(1 + x) \leq x$ when $x \geq 0$.
\end{enumerate*}
Given these, for Part 1 of the lemma:
\begin{align}
\label{eq:R_derivative}
& \R(p) \overset{\bf (a)}{=} \sum_{k = 1}^n k^{-1} \cdot p^{-(2k - 1)} > 0
&& \Rightarrow \quad
\R'(p) = -\sum_{k = 1}^n (2 - k^{-1}) \cdot p^{-2k} < 0; \\
\label{eq:Q_derivative}
& \Q(p) \overset{\bf (a)}{=} \sum_{k = 1}^n \Big(k^{-1} - \frac{k^{-2}}{2}\Big) \cdot p^{-2k} > 0
&& \Rightarrow \quad
\Q'(p) = -\sum_{k = 1}^n (2 - k^{-1}) \cdot p^{-(2k + 1)} \overset{\eqref{eq:R_derivative}}{=} \frac{\R'(p)}{p}.
\end{align}
For the first chain of equalities in Part 2:
\begin{align*}
& \lim_{p \rightarrow 1^+} \R(p) \geq -\ln(1 - \lim_{p \rightarrow 1^+} p^{-2}) = \infty; \\
& \lim_{p \rightarrow 1^+} \Q(p) \geq -\ln(1 - \lim_{p \rightarrow 1^+} p^{-2}) - \frac{1}{2} \cdot \sum_{k = 1}^{\infty} k^{-2} \overset{\bf (b)}{=} \infty - \frac{\pi^2}{12}
= \infty.
\end{align*}
For the second chain of equalities in Part 2:
\[
\begin{aligned}
& 0 \overset{\eqref{eq:R_derivative}}{\leq}
\lim_{p \rightarrow \infty} \R(p)
= \lim_{p \rightarrow \infty} p \cdot \ln\Big(1 + \frac{1}{p^2 - 1}\Big)
\overset{\bf (c)}{\leq} \lim_{p \rightarrow \infty} \frac{p}{p^2 - 1} = 0; \\
& 0 \overset{\eqref{eq:Q_derivative}}{\leq}
\lim_{p \rightarrow \infty} \Q(p)
\leq -\ln(1 - \lim_{p \rightarrow \infty} p^{-2})
= 0.
\end{aligned}
\]
This completes the proof of \Cref{lem:spm_ap_RQ}.
\end{proof}

\subsection{Proof of \texorpdfstring{\Cref{lem:ineq1}}{}}
\label{subapp:ineq1}

\begin{fact}
\label{lem:ineq1}
$G(x, y) \leq 0$ for any $y \geq x > 1$, where
\[
G(x, y) \eqdef (1 - x^{-1}) \cdot (e^{\R(x) - \R(y)} - 1) + \big(\R(y) - \R(x)\big) - \big(\Q(y) - \Q(x)\big).
\]
\end{fact}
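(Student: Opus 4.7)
\textbf{Proof proposal for \Cref{lem:ineq1}.}
The plan is to fix $x > 1$ and show that the function $y \mapsto G(x,y)$ is non-increasing on $[x,\infty)$. Since $G(x,x) = 0$ by direct substitution, this immediately yields $G(x,y) \leq 0$ for all $y \geq x > 1$. The key computation is to differentiate under the definition of $G$, using $Q'(y) = R'(y)/y$ from \Cref{lem:spm_ap_RQ}:
\begin{align*}
\partial_y G(x,y)
&= -(1 - x^{-1})\, R'(y)\, e^{R(x) - R(y)} + R'(y) - \tfrac{1}{y} R'(y) \\
&= R'(y) \cdot \big[(1 - y^{-1}) - (1 - x^{-1})\, e^{R(x) - R(y)}\big].
\end{align*}
Since $R'(y) < 0$ on $(1,\infty)$ (also from \Cref{lem:spm_ap_RQ}), we have $\partial_y G \leq 0$ precisely when the bracket is non-negative, i.e.\ when
\[
(1 - y^{-1})\, e^{R(y)} \;\geq\; (1 - x^{-1})\, e^{R(x)}.
\]

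Thus it suffices to show that the single-variable function $h(p) \eqdef (1 - p^{-1})\, e^{R(p)}$ is non-decreasing on $(1,\infty)$. Differentiating gives $h'(p) = e^{R(p)}\big[p^{-2} + (1 - p^{-1})\, R'(p)\big]$, so it is enough to establish the pointwise bound $-R'(p) \leq \tfrac{1}{p(p-1)}$ for every $p > 1$. Using $R(p) = -p\ln(1-p^{-2})$ explicitly, one computes
\[
-R'(p) \;=\; \ln\!\big(1 - p^{-2}\big) + \frac{2}{p^2 - 1},
\]
so the desired inequality rearranges to
\[
\frac{2}{p^2 - 1} - \frac{1}{p(p-1)} \;\leq\; -\ln\!\big(1 - p^{-2}\big),
\]
whose left-hand side simplifies (common denominator $p(p-1)(p+1)$) to exactly $\tfrac{1}{p(p+1)}$. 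The right-hand side admits the series expansion $-\ln(1-p^{-2}) = \sum_{k \geq 1} \tfrac{p^{-2k}}{k} \geq p^{-2}$, and trivially $\tfrac{1}{p(p+1)} < p^{-2}$, which closes the argument.

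The main obstacle I anticipate is pinpointing the correct sharp bound on $-R'(p)$. A naive estimate such as $-R'(p) \leq \tfrac{2}{p^2 - 1}$ (the leading summand) is too loose: it fails for $p$ near $1$ precisely because $\tfrac{2}{p+1} > \tfrac{1}{p}$ there. The right choice is to keep the logarithmic correction term and collapse the difference $\tfrac{2}{p^2-1} - \tfrac{1}{p(p-1)}$ to $\tfrac{1}{p(p+1)}$; only then is the comparison with $-\ln(1-p^{-2})$ easy, via the first term of its power series.
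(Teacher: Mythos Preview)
Your proposal is correct and follows essentially the same route as the paper: fix $x$, reduce to $\partial_y G\le 0$ via $G(x,x)=0$, use $\Q'(y)=\R'(y)/y$ to factor out $\R'(y)$, and then show that $h(p)=(1-p^{-1})e^{\R(p)}$ is non-decreasing. The paper works with $g(p)=\ln h(p)$ instead of $h(p)$ itself, but the resulting one-variable inequality is literally the same---$-\ln(1-p^{-2})\ge \tfrac{1}{p(p+1)}$---and both arguments finish it off with the first term of the power series.
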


\begin{proof}
Since $G(x, x) \equiv 0$, it suffices to prove that $\frac{\partial G}{\partial y} \leq 0$ when $y \geq x > 1$. To this end,
\begin{enumerate*}[label = (\alph*), font = {\bfseries}]
\item recall Part 1 of \Cref{lem:spm_ap_RQ} that $\R'(p) = p \cdot \Q'(p) < 0$; and
\item define $g(x) \eqdef \ln(1 - x^{-1}) - x \cdot \ln(1 - x^{-2})$.
\end{enumerate*}
Then,
\begin{align*}
\frac{\partial G}{\partial y}
& = -(1 - x^{-1}) \cdot e^{\R(x) - \R(y)} \cdot \R'(y) + \R'(y) - \Q'(y) \\
& \overset{\bf (a)}{=} \big(-\R'(y)\big) \cdot e^{-\R(y)} \cdot \big[(1 - x^{-1}) \cdot e^{\R(x)} - (1 - y^{-1}) \cdot e^{\R(y)}\big] \\
& \overset{\bf (b)}{=} \big(-\R'(y)\big) \cdot e^{-\R(y)} \cdot (e^{g(x)} - e^{g(y)}).
\end{align*}
Since $\ln(1 - z) \leq -z$ for any $z \in (0, 1)$, we observe that
\[
g'(x) = -\ln(1 - x^{-2}) - x^{-1} \cdot (x + 1)^{-1} \geq x^{-2} - x^{-1} \cdot (x + 1)^{-1} \geq 0,
\]
which implies that $\frac{\partial G}{\partial y} \leq 0$ when $y \geq x > 1$. This completes the proof of \Cref{lem:ineq1}.
\end{proof}

\subsection{Proof of \texorpdfstring{\Cref{lem:ineq2}}{}}
\label{subapp:ineq2}

\begin{fact}
\label{lem:ineq2}
$H(x, y) \geq 0$ for any $y \geq x > 1$, where
\[
H(x, y) \eqdef x^{-1} \cdot (e^{\R(x) - \R(y)} - 1) - (e^{\Q(x) - \Q(y)} - 1).
\]
\end{fact}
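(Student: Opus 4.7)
The plan is to mirror the strategy used for \Cref{lem:ineq1}. Observe first that $H(x,x) = x^{-1} \cdot (e^0 - 1) - (e^0 - 1) = 0$. So it suffices to show that $\partial H / \partial y \geq 0$ for every $y \geq x > 1$, and then the fundamental theorem of calculus (integrating in $y$ from $x$ up) will yield $H(x,y) \geq 0$.

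Differentiating under the definition, and using Part 1 of \Cref{lem:spm_ap_RQ} (namely $\R'(y) = y \cdot \Q'(y)$) to express $\Q'(y) = \R'(y)/y$, I would obtain
\[
\frac{\partial H}{\partial y}(x, y) \;=\; -x^{-1} \cdot e^{\R(x)-\R(y)} \cdot \R'(y) \;+\; e^{\Q(x)-\Q(y)} \cdot \Q'(y) \;=\; \R'(y) \cdot \Big[\tfrac{1}{y} \cdot e^{\Q(x) - \Q(y)} \,-\, \tfrac{1}{x} \cdot e^{\R(x) - \R(y)}\Big].
\]
Since $\R'(y) < 0$ on $(1, \infty)$, what remains to prove is that the bracketed quantity is non-positive, i.e.\ $\tfrac{1}{y} \cdot e^{\Q(x) - \Q(y)} \leq \tfrac{1}{x} \cdot e^{\R(x) - \R(y)}$. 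Taking logarithms (all quantities are positive), this is equivalent to
\[
\psi(y) \;\geq\; \psi(x), \qquad \text{where} \quad \psi(p) \;\eqdef\; \ln p + \Q(p) - \R(p).
\]

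So everything reduces to showing that $\psi$ is non-decreasing on $(1, \infty)$. Differentiating and invoking $\R'(p) = p \cdot \Q'(p)$ once more gives
\[
\psi'(p) \;=\; \tfrac{1}{p} + \Q'(p) - \R'(p) \;=\; \tfrac{1}{p} + (1 - p) \cdot \Q'(p).
\]
For $p > 1$, the factor $(1 - p)$ is negative and $\Q'(p) < 0$ by Part 1 of \Cref{lem:spm_ap_RQ}, so $(1-p) \cdot \Q'(p) > 0$; together with $1/p > 0$, this yields $\psi'(p) > 0$ strictly. Hence $\psi$ is (strictly) increasing on $(1, \infty)$, so $\psi(y) \geq \psi(x)$ whenever $y \geq x > 1$, and chaining back we conclude $\partial H/\partial y \geq 0$ and thus $H(x,y) \geq 0$, as desired.

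I don't anticipate any real obstacle: the only subtlety is bookkeeping the sign when pulling $\R'(y)$ out of the derivative (since $\R'(y) < 0$ flips the desired inequality), but both the identity $\R'(p) = p \cdot \Q'(p)$ and the sign $\Q'(p) < 0$ are already at hand from \Cref{lem:spm_ap_RQ}. No further properties of $\R$ or $\Q$ are needed.
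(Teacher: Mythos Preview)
Your proof is correct and follows essentially the same route as the paper: both reduce to $\partial H/\partial y \geq 0$, factor using $\R'(y)=y\,\Q'(y)$, and verify that the auxiliary function is monotone via the identity $1/p + \Q'(p) - \R'(p) > 0$. The only cosmetic difference is that the paper works with $h(p)=p\,e^{\Q(p)-\R(p)}$ whereas you work with its logarithm $\psi(p)=\ln p + \Q(p)-\R(p)$; the monotonicity arguments are identical.
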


\begin{proof}
Since $H(x, x) \equiv 0$, it suffices to prove that $\frac{\partial H}{\partial y} \geq 0$ when $y \geq x > 1$. To this end,
\begin{enumerate*}[label = (\alph*), font = {\bfseries}]
\item recall Proof 1 of \Cref{lem:spm_ap_RQ} that $\R'(p) = p \cdot \Q'(p) < 0$; and
\item define $h(x) \eqdef x \cdot e^{\Q(x) - \R(x)}$.
\end{enumerate*}
Then,
\begin{align*}
\frac{\partial H}{\partial y}
& = x^{-1} \cdot e^{\R(x) - \R(y)} \cdot \big(-\R'(y)\big) - e^{\Q(x) - \Q(y)} \cdot \big(-\Q'(y)\big) \\
& \overset{\bf (a)}{=} x^{-1} \cdot e^{\R(x) - \Q(y)}  \cdot \big(-\Q'(y)\big) \cdot (y \cdot e^{\Q(y) - \R(y)} - x \cdot e^{\Q(x) - \R(x)}) \\
& \overset{\bf (b)}{=} x^{-1} \cdot e^{\R(x) - \Q(y)}  \cdot \big(-\Q'(y)\big) \cdot \big(h(y) - h(x)\big)
\end{align*}
We observe that, for any $x \geq 1$,
\begin{align*}
h'(x)
& = e^{\Q(x) - \R(x)} \cdot \big(1 + x \cdot \Q'(x) - x \cdot \R'(x)\big) \\
& \overset{\bf (a)}{=} e^{\Q(x) - \R(x)} \cdot \Big[1 + (x - 1) \cdot \big(-\R'(x)\big)\Big]
\geq 0.
\end{align*}
which implies that $\frac{\partial H}{\partial y} \geq 0$ when $y \geq x > 1$. This completes the proof of \Cref{lem:ineq2}.
\end{proof}

\section{Proof of Lemma~\ref{lem:spm_tri_best}}
\label{app:prelim}
{\bf \Cref{lem:spm_tri_best}.}
{\em For any triangular instance $\{\tri(v_i, q_i)\}_{i = 1}^n$ that $v_1 \geq v_2 \geq \cdots \geq v_n$, it follows that:
\begin{enumerate}[font = {\em\bfseries}]
\item $\opt = \spm = \sum_{i = 1}^n v_i q_i \cdot \prod_{j = 1}^{i - 1} (1 - q_j)$.
\item An optimal $\spm$ lets the buyers come in lexicographic order, and posts price $p_i = v_i$ to each buyer $i \in [n]$.
\end{enumerate}}

\begin{proof}
For convenience, we assume that parameters $\{v_i\}_{i = 1}^n$ are all finite. (Otherwise, we can slightly modify the proof via a standard argument from measure theory.) Any triangular distribution $\tri(v_i, q_i)$ is supported on interval $p \in [0,v_i]$, and has a CDF of $F_i(p) = \frac{(1 - q_i) \cdot p}{(1 - q_i) \cdot p + v_i q_i}$ when $p \in [0, v_i)$, and $F_i(p) = 1$ when $p \in [v_i, \infty)$. Thus, any value $p \in (0, v_i)$ corresponds to a virtual value of $\varphi_i(p) = p - \frac{1 - F_i(p)}{f_i(p)} = -\frac{v_i q_i}{1 - q_i}$, and value $p = v_i$ corresponds to a virtual value of $v_i$.

\begin{figure}[h!]
\centering
\begin{tikzpicture}[thick, smooth, scale = 2.5]
\draw[->] (-2.1, 0) -- (2.1, 0);
\draw[->] (0, 0) -- (0, 1.1);
\node[above] at (0, 1.1) {\small $\Phi_i$};
\node[right] at (2.1, 0) {\small $p$};
\node[below] at (0, 0) {\small $0$};
\draw[color = blue] (-2, 0) -- (-1, 0) (-1, 0.4) -- (1, 0.4) (1, 1) -- (2, 1);
\draw[thin, color = black, dashed] (1, 0) -- (1, 1);
\draw[thin, color = black, dashed] (0, 1) -- (1, 1);
\draw[thin, color = black, dashed] (-1, 0) -- (-1, 0.4);
\draw[very thick] (-1, 0) -- (-1, 1pt);
\node[below] at (-1, 0) {\small $-v_i q_i / (1 - q_i)$};
\draw[very thick] (1, 0) -- (1, 1pt);
\node[below] at (1, 0) {\small $v_i$};
\node[right] at (1, 0.4) {\small $(v_i, 1 - q_i)$};
\node[left] at (0, 1) {\small $1$};
\draw[color = blue, fill = white] (1, 0.4) circle(0.5pt);
\draw[color = blue, fill = blue] (1, 1) circle(0.5pt);
\draw[color = blue, fill = white] (-1, 0) circle(0.5pt);
\draw[color = blue, fill = blue] (-1, 0.4) circle(0.5pt);
\end{tikzpicture}
\caption{Demonstration for virtual value CDF $\Phi_i$ of triangular distribution $\tri(v_i, q_i)$.}
\label{fig:lem:spm_tri_best}
\end{figure}

\noindent
As \Cref{fig:lem:spm_tri_best} demonstrates, the resulting virtual value CDF $\Phi_i$ is given by
\[
\Phi_i(p) =
(1 - q_i) \cdot \mathbbm{1}\big\{p \geq -v_i q_i / (1 - q_i)\big\} + q_i \cdot \mathbbm{1}\big\{p \geq v_i\big\}, \quad\quad\quad\quad \forall p \in \RR,
\]
So, for any non-negative virtual value $p \in \RRP$ the product CDF $\Phi(\price) \eqdef \prod_{i = 1}^n \Phi_i(\price)$ equals
\[
\Phi(\price) = \prod_{i: v_i > p} (1 - q_i).
\]
Let $v_{n + 1} \eqdef 0$ for notational brevity. By the {\em revenue-equivalence theorem} \citep[see][]{M81}, {\sf Myerson Auction} gives a revenue of
\begin{align*}
\opt
& = \mbox{$\E_{\bids \sim \distrs} \Big[\max_{i \in [n]} \big(\varphi_i(b_i)\big)_+\Big]$}
= \mbox{$\Int{0}{\infty} \big(1 - \Phi(x)\big) \cdot \dd x$} \\
& = \sum_{i = 1}^n \big[1 - \prod_{j = 1}^{i} (1 - q_j)\big] \cdot (v_i - v_{i + 1})
= \sum_{i = 1}^n v_i q_i \cdot \prod_{j = 1}^{i - 1} (1 - q_j).
\end{align*}
We then prove that the {\sf Sequential Posted-Pricing} proposed in Part 2 of \Cref{lem:spm_tri_best} actually extracts the same revenue. This follows from the next two observations: when each buyer $k \in [n]$ comes,
\begin{itemize}
\item The item remains unsold with a probability of $\prod_{j = 1}^{k - 1} (1 - q_j)$.
\item If so, buyer $k$ purchases the item at the posted price of $p_k = v_k$ with a probability of $q_k$.
\end{itemize}
In that this {\sf Sequential Posted-Pricing} generates the same revenue as {\sf Myerson Auction}, its optimality trivially holds. This completes the proof of \Cref{lem:spm_tri_best}.
\end{proof}

\section{Missing Proofs in Section~\ref{sec:spm_ap}}
\label{app:spm_ap}
\subsection{Numeric Calculation about Constant \texorpdfstring{$\C \approx 2.6202$}{}}
\label{subapp:spm_ap_calculations}

Define function ${\cal F}(z) \eqdef z^{-2} - (z^{-2} - 1) \cdot e^{\frac{1}{2} \cdot \sum_{k = 1}^{\infty} k^{-2} \cdot z^{2k}}$ on interval $z \in (0, 1)$. Recall \Cref{thm:spm_ap} for function $\Q$ and constant $\C$. Applying integration by substitution (i.e.\ $z = x^{-1}$), we have
\[
\C
= \mbox{$2 + \Int{1}{\infty} \big(1 - e^{-\Q(x)}\big) \cdot \dd x$}
= \mbox{$2 + \Int{0}{1} {\cal F}(z) \cdot \dd z$}.
\]
To make things mimic, we demonstrate the both integrands in the following figure; it is easy to infer from \Cref{fig:spm_ap2} that the integral is finite.

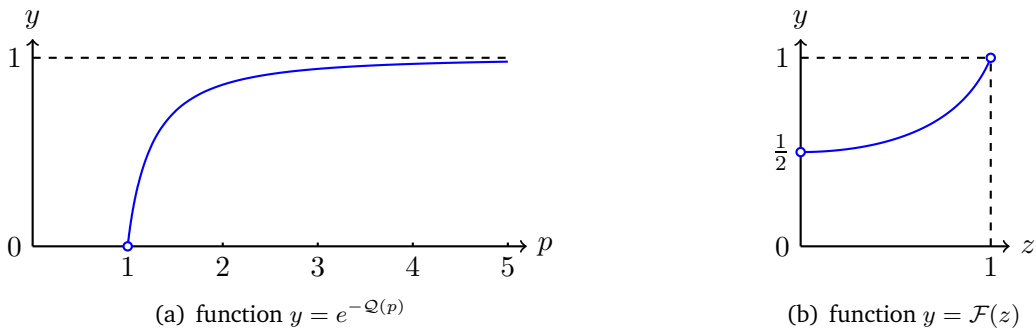
\begin{figure}[h!]
\centering
\subfigure[function $y = e^{-\Q(p)}$]{
\begin{tikzpicture}[thick, smooth, scale = 2.5]
\draw[->] (0, 0) -- (2.6, 0);
\draw[->] (0, 0) -- (0, 1.1);
\node[above] at (0, 1.1) {$y$};
\node[right] at (2.6, 0) {$p$};
\node[left] at (0, 1) {$1$};
\node[left] at (0, 0) {$0$};
\draw[dashed] (0, 1) -- (2.5, 1);
\draw[thick, color = blue] (0.500, 0.0000) -- (0.502, 0.0220) -- (0.505, 0.0427) -- (0.507, 0.0624) -- (0.510, 0.0813) -- (0.512, 0.0993) -- (0.515, 0.1166) -- (0.517, 0.1333) -- (0.520, 0.1493) -- (0.522, 0.1648) -- (0.525, 0.1797) -- (0.527, 0.1941) -- (0.530, 0.2080) -- (0.532, 0.2215) -- (0.535, 0.2346) -- (0.537, 0.2473) -- (0.540, 0.2596) -- (0.542, 0.2715) -- (0.545, 0.2831) -- (0.547, 0.2943) -- (0.550, 0.3053) -- (0.552, 0.3159) -- (0.555, 0.3263) -- (0.558, 0.3363) -- (0.560, 0.3461) -- (0.563, 0.3557) -- (0.565, 0.3650) -- (0.568, 0.3741) -- (0.570, 0.3829) -- (0.573, 0.3916) -- (0.575, 0.4000) -- (0.578, 0.4083) -- (0.580, 0.4163) -- (0.583, 0.4241) -- (0.585, 0.4318) -- (0.588, 0.4393) -- (0.590, 0.4466) -- (0.593, 0.4538) -- (0.595, 0.4608) -- (0.598, 0.4677) -- (0.600, 0.4744) -- (0.603, 0.4810) -- (0.605, 0.4874) -- (0.608, 0.4937) -- (0.610, 0.4999) -- (0.613, 0.5059) -- (0.615, 0.5118) -- (0.618, 0.5176) -- (0.620, 0.5233) -- (0.623, 0.5289) -- (0.625, 0.5344) -- (0.627, 0.5397) -- (0.630, 0.5450) -- (0.632, 0.5502) -- (0.635, 0.5552) -- (0.637, 0.5602) -- (0.640, 0.5651) -- (0.642, 0.5699) -- (0.645, 0.5746) -- (0.647, 0.5792) -- (0.650, 0.5837) -- (0.652, 0.5882) -- (0.655, 0.5926) -- (0.657, 0.5969) -- (0.660, 0.6011) -- (0.662, 0.6053) -- (0.665, 0.6093) -- (0.667, 0.6134) -- (0.670, 0.6173) -- (0.672, 0.6212) -- (0.675, 0.6250) -- (0.677, 0.6288) -- (0.680, 0.6325) -- (0.683, 0.6361) -- (0.685, 0.6397) -- (0.688, 0.6432) -- (0.690, 0.6467) -- (0.693, 0.6501) -- (0.695, 0.6534) -- (0.698, 0.6567) -- (0.700, 0.6600) -- (0.703, 0.6632) -- (0.705, 0.6663) -- (0.708, 0.6694) -- (0.710, 0.6725) -- (0.713, 0.6755) -- (0.715, 0.6785) -- (0.718, 0.6814) -- (0.720, 0.6843) -- (0.723, 0.6871) -- (0.725, 0.6899) -- (0.728, 0.6927) -- (0.730, 0.6954) -- (0.733, 0.6981) -- (0.735, 0.7008) -- (0.738, 0.7034) -- (0.740, 0.7059) -- (0.743, 0.7085) -- (0.745, 0.7110) -- (0.748, 0.7134) -- (0.750, 0.7159) -- (0.752, 0.7183) -- (0.755, 0.7206) -- (0.757, 0.7230) -- (0.760, 0.7253) -- (0.762, 0.7275) -- (0.765, 0.7298) -- (0.767, 0.7320) -- (0.770, 0.7342) -- (0.772, 0.7363) -- (0.775, 0.7385) -- (0.777, 0.7406) -- (0.780, 0.7426) -- (0.782, 0.7447) -- (0.785, 0.7467) -- (0.787, 0.7487) -- (0.790, 0.7507) -- (0.792, 0.7526) -- (0.795, 0.7545) -- (0.797, 0.7564) -- (0.800, 0.7583) -- (0.802, 0.7602) -- (0.805, 0.7620) -- (0.808, 0.7638) -- (0.810, 0.7656) -- (0.813, 0.7674) -- (0.815, 0.7691) -- (0.818, 0.7708) -- (0.820, 0.7725) -- (0.823, 0.7742) -- (0.825, 0.7759) -- (0.828, 0.7775) -- (0.830, 0.7792) -- (0.833, 0.7808) -- (0.835, 0.7823) -- (0.838, 0.7839) -- (0.840, 0.7855) -- (0.843, 0.7870) -- (0.845, 0.7885) -- (0.848, 0.7900) -- (0.850, 0.7915) -- (0.853, 0.7930) -- (0.855, 0.7944) -- (0.858, 0.7959) -- (0.860, 0.7973) -- (0.863, 0.7987) -- (0.865, 0.8001) -- (0.868, 0.8014) -- (0.870, 0.8028) -- (0.873, 0.8041) -- (0.875, 0.8055) -- (0.877, 0.8068) -- (0.880, 0.8081) -- (0.882, 0.8094) -- (0.885, 0.8106) -- (0.887, 0.8119) -- (0.890, 0.8131) -- (0.892, 0.8144) -- (0.895, 0.8156) -- (0.897, 0.8168) -- (0.900, 0.8180) -- (0.902, 0.8192) -- (0.905, 0.8204) -- (0.907, 0.8215) -- (0.910, 0.8227) -- (0.912, 0.8238) -- (0.915, 0.8249) -- (0.917, 0.8261) -- (0.920, 0.8272) -- (0.922, 0.8282) -- (0.925, 0.8293) -- (0.927, 0.8304) -- (0.930, 0.8315) -- (0.933, 0.8325) -- (0.935, 0.8335) -- (0.938, 0.8346) -- (0.940, 0.8356) -- (0.943, 0.8366) -- (0.945, 0.8376) -- (0.948, 0.8386) -- (0.950, 0.8396) -- (0.953, 0.8405) -- (0.955, 0.8415) -- (0.958, 0.8425) -- (0.960, 0.8434) -- (0.963, 0.8443) -- (0.965, 0.8453) -- (0.968, 0.8462) -- (0.970, 0.8471) -- (0.973, 0.8480) -- (0.975, 0.8489) -- (0.978, 0.8498) -- (0.980, 0.8507) -- (0.983, 0.8515) -- (0.985, 0.8524) -- (0.988, 0.8532) -- (0.990, 0.8541) -- (0.993, 0.8549) -- (0.995, 0.8558) -- (0.998, 0.8566) -- (1.000, 0.8574) -- (1.002, 0.8582) -- (1.005, 0.8590) -- (1.008, 0.8598) -- (1.010, 0.8606) -- (1.013, 0.8614) -- (1.015, 0.8621) -- (1.018, 0.8629) -- (1.020, 0.8637) -- (1.023, 0.8644) -- (1.025, 0.8652) -- (1.028, 0.8659) -- (1.030, 0.8666) -- (1.033, 0.8674) -- (1.035, 0.8681) -- (1.038, 0.8688) -- (1.040, 0.8695) -- (1.043, 0.8702) -- (1.045, 0.8709) -- (1.048, 0.8716) -- (1.050, 0.8723) -- (1.053, 0.8730) -- (1.055, 0.8737) -- (1.058, 0.8743) -- (1.060, 0.8750) -- (1.063, 0.8757) -- (1.065, 0.8763) -- (1.067, 0.8770) -- (1.070, 0.8776) -- (1.073, 0.8782) -- (1.075, 0.8789) -- (1.077, 0.8795) -- (1.080, 0.8801) -- (1.083, 0.8807) -- (1.085, 0.8814) -- (1.087, 0.8820) -- (1.090, 0.8826) -- (1.093, 0.8832) -- (1.095, 0.8838) -- (1.097, 0.8843) -- (1.100, 0.8849) -- (1.103, 0.8855) -- (1.105, 0.8861) -- (1.107, 0.8867) -- (1.110, 0.8872) -- (1.113, 0.8878) -- (1.115, 0.8883) -- (1.117, 0.8889) -- (1.120, 0.8894) -- (1.123, 0.8900) -- (1.125, 0.8905) -- (1.127, 0.8911) -- (1.130, 0.8916) -- (1.133, 0.8921) -- (1.135, 0.8927) -- (1.138, 0.8932) -- (1.140, 0.8937) -- (1.143, 0.8942) -- (1.145, 0.8947) -- (1.148, 0.8952) -- (1.150, 0.8957) -- (1.153, 0.8962) -- (1.155, 0.8967) -- (1.158, 0.8972) -- (1.160, 0.8977) -- (1.163, 0.8982) -- (1.165, 0.8987) -- (1.168, 0.8991) -- (1.170, 0.8996) -- (1.173, 0.9001) -- (1.175, 0.9005) -- (1.178, 0.9010) -- (1.180, 0.9015) -- (1.183, 0.9019) -- (1.185, 0.9024) -- (1.188, 0.9028) -- (1.190, 0.9033) -- (1.192, 0.9037) -- (1.195, 0.9042) -- (1.198, 0.9046) -- (1.200, 0.9050) -- (1.202, 0.9055) -- (1.205, 0.9059) -- (1.208, 0.9063) -- (1.210, 0.9067) -- (1.212, 0.9071) -- (1.215, 0.9076) -- (1.218, 0.9080) -- (1.220, 0.9084) -- (1.222, 0.9088) -- (1.225, 0.9092) -- (1.228, 0.9096) -- (1.230, 0.9100) -- (1.232, 0.9104) -- (1.235, 0.9108) -- (1.238, 0.9112) -- (1.240, 0.9116) -- (1.242, 0.9120) -- (1.245, 0.9123) -- (1.248, 0.9127) -- (1.250, 0.9131) -- (1.252, 0.9135) -- (1.255, 0.9139) -- (1.258, 0.9142) -- (1.260, 0.9146) -- (1.263, 0.9150) -- (1.265, 0.9153) -- (1.268, 0.9157) -- (1.270, 0.9160) -- (1.273, 0.9164) -- (1.275, 0.9168) -- (1.278, 0.9171) -- (1.280, 0.9175) -- (1.283, 0.9178) -- (1.285, 0.9182) -- (1.288, 0.9185) -- (1.290, 0.9188) -- (1.293, 0.9192) -- (1.295, 0.9195) -- (1.298, 0.9198) -- (1.300, 0.9202) -- (1.303, 0.9205) -- (1.305, 0.9208) -- (1.308, 0.9212) -- (1.310, 0.9215) -- (1.313, 0.9218) -- (1.315, 0.9221) -- (1.317, 0.9224) -- (1.320, 0.9228) -- (1.323, 0.9231) -- (1.325, 0.9234) -- (1.327, 0.9237) -- (1.330, 0.9240) -- (1.333, 0.9243) -- (1.335, 0.9246) -- (1.337, 0.9249) -- (1.340, 0.9252) -- (1.343, 0.9255) -- (1.345, 0.9258) -- (1.347, 0.9261) -- (1.350, 0.9264) -- (1.353, 0.9267) -- (1.355, 0.9270) -- (1.357, 0.9273) -- (1.360, 0.9276) -- (1.363, 0.9278) -- (1.365, 0.9281) -- (1.367, 0.9284) -- (1.370, 0.9287) -- (1.373, 0.9290) -- (1.375, 0.9292) -- (1.377, 0.9295) -- (1.380, 0.9298) -- (1.383, 0.9301) -- (1.385, 0.9303) -- (1.388, 0.9306) -- (1.390, 0.9309) -- (1.393, 0.9311) -- (1.395, 0.9314) -- (1.398, 0.9317) -- (1.400, 0.9319) -- (1.403, 0.9322) -- (1.405, 0.9324) -- (1.408, 0.9327) -- (1.410, 0.9329) -- (1.413, 0.9332) -- (1.415, 0.9334) -- (1.418, 0.9337) -- (1.420, 0.9339) -- (1.423, 0.9342) -- (1.425, 0.9344) -- (1.428, 0.9347) -- (1.430, 0.9349) -- (1.433, 0.9352) -- (1.435, 0.9354) -- (1.438, 0.9356) -- (1.440, 0.9359) -- (1.442, 0.9361) -- (1.445, 0.9363) -- (1.448, 0.9366) -- (1.450, 0.9368) -- (1.452, 0.9370) -- (1.455, 0.9373) -- (1.458, 0.9375) -- (1.460, 0.9377) -- (1.462, 0.9380) -- (1.465, 0.9382) -- (1.468, 0.9384) -- (1.470, 0.9386) -- (1.472, 0.9388) -- (1.475, 0.9391) -- (1.478, 0.9393) -- (1.480, 0.9395) -- (1.482, 0.9397) -- (1.485, 0.9399) -- (1.488, 0.9401) -- (1.490, 0.9404) -- (1.492, 0.9406) -- (1.495, 0.9408) -- (1.498, 0.9410) -- (1.500, 0.9412) -- (1.502, 0.9414) -- (1.505, 0.9416) -- (1.508, 0.9418) -- (1.510, 0.9420) -- (1.513, 0.9422) -- (1.515, 0.9424) -- (1.518, 0.9426) -- (1.520, 0.9428) -- (1.523, 0.9430) -- (1.525, 0.9432) -- (1.528, 0.9434) -- (1.530, 0.9436) -- (1.533, 0.9438) -- (1.535, 0.9440) -- (1.538, 0.9442) -- (1.540, 0.9444) -- (1.543, 0.9446) -- (1.545, 0.9448) -- (1.548, 0.9449) -- (1.550, 0.9451) -- (1.553, 0.9453) -- (1.555, 0.9455) -- (1.558, 0.9457) -- (1.560, 0.9459) -- (1.563, 0.9461) -- (1.565, 0.9462) -- (1.567, 0.9464) -- (1.570, 0.9466) -- (1.573, 0.9468) -- (1.575, 0.9470) -- (1.577, 0.9471) -- (1.580, 0.9473) -- (1.583, 0.9475) -- (1.585, 0.9477) -- (1.587, 0.9478) -- (1.590, 0.9480) -- (1.593, 0.9482) -- (1.595, 0.9483) -- (1.597, 0.9485) -- (1.600, 0.9487) -- (1.603, 0.9488) -- (1.605, 0.9490) -- (1.607, 0.9492) -- (1.610, 0.9493) -- (1.613, 0.9495) -- (1.615, 0.9497) -- (1.617, 0.9498) -- (1.620, 0.9500) -- (1.623, 0.9502) -- (1.625, 0.9503) -- (1.627, 0.9505) -- (1.630, 0.9506) -- (1.633, 0.9508) -- (1.635, 0.9510) -- (1.638, 0.9511) -- (1.640, 0.9513) -- (1.643, 0.9514) -- (1.645, 0.9516) -- (1.648, 0.9517) -- (1.650, 0.9519) -- (1.653, 0.9520) -- (1.655, 0.9522) -- (1.658, 0.9523) -- (1.660, 0.9525) -- (1.663, 0.9526) -- (1.665, 0.9528) -- (1.668, 0.9529) -- (1.670, 0.9531) -- (1.673, 0.9532) -- (1.675, 0.9534) -- (1.678, 0.9535) -- (1.680, 0.9537) -- (1.683, 0.9538) -- (1.685, 0.9540) -- (1.688, 0.9541) -- (1.690, 0.9542) -- (1.692, 0.9544) -- (1.695, 0.9545) -- (1.698, 0.9547) -- (1.700, 0.9548) -- (1.702, 0.9549) -- (1.705, 0.9551) -- (1.708, 0.9552) -- (1.710, 0.9554) -- (1.712, 0.9555) -- (1.715, 0.9556) -- (1.718, 0.9558) -- (1.720, 0.9559) -- (1.722, 0.9560) -- (1.725, 0.9562) -- (1.728, 0.9563) -- (1.730, 0.9564) -- (1.732, 0.9566) -- (1.735, 0.9567) -- (1.738, 0.9568) -- (1.740, 0.9569) -- (1.742, 0.9571) -- (1.745, 0.9572) -- (1.748, 0.9573) -- (1.750, 0.9575) -- (1.752, 0.9576) -- (1.755, 0.9577) -- (1.758, 0.9578) -- (1.760, 0.9580) -- (1.763, 0.9581) -- (1.765, 0.9582) -- (1.768, 0.9583) -- (1.770, 0.9585) -- (1.773, 0.9586) -- (1.775, 0.9587) -- (1.778, 0.9588) -- (1.780, 0.9589) -- (1.783, 0.9591) -- (1.785, 0.9592) -- (1.788, 0.9593) -- (1.790, 0.9594) -- (1.793, 0.9595) -- (1.795, 0.9596) -- (1.798, 0.9598) -- (1.800, 0.9599) -- (1.803, 0.9600) -- (1.805, 0.9601) -- (1.808, 0.9602) -- (1.810, 0.9603) -- (1.813, 0.9605) -- (1.815, 0.9606) -- (1.817, 0.9607) -- (1.820, 0.9608) -- (1.823, 0.9609) -- (1.825, 0.9610) -- (1.827, 0.9611) -- (1.830, 0.9612) -- (1.833, 0.9613) -- (1.835, 0.9615) -- (1.837, 0.9616) -- (1.840, 0.9617) -- (1.843, 0.9618) -- (1.845, 0.9619) -- (1.847, 0.9620) -- (1.850, 0.9621) -- (1.853, 0.9622) -- (1.855, 0.9623) -- (1.857, 0.9624) -- (1.860, 0.9625) -- (1.863, 0.9626) -- (1.865, 0.9627) -- (1.867, 0.9628) -- (1.870, 0.9629) -- (1.873, 0.9630) -- (1.875, 0.9631) -- (1.877, 0.9632) -- (1.880, 0.9633) -- (1.883, 0.9634) -- (1.885, 0.9635) -- (1.888, 0.9636) -- (1.890, 0.9637) -- (1.893, 0.9638) -- (1.895, 0.9639) -- (1.898, 0.9640) -- (1.900, 0.9641) -- (1.903, 0.9642) -- (1.905, 0.9643) -- (1.908, 0.9644) -- (1.910, 0.9645) -- (1.913, 0.9646) -- (1.915, 0.9647) -- (1.918, 0.9648) -- (1.920, 0.9649) -- (1.923, 0.9650) -- (1.925, 0.9651) -- (1.928, 0.9652) -- (1.930, 0.9653) -- (1.933, 0.9654) -- (1.935, 0.9655) -- (1.938, 0.9656) -- (1.940, 0.9657) -- (1.942, 0.9657) -- (1.945, 0.9658) -- (1.948, 0.9659) -- (1.950, 0.9660) -- (1.952, 0.9661) -- (1.955, 0.9662) -- (1.958, 0.9663) -- (1.960, 0.9664) -- (1.962, 0.9665) -- (1.965, 0.9665) -- (1.968, 0.9666) -- (1.970, 0.9667) -- (1.972, 0.9668) -- (1.975, 0.9669) -- (1.978, 0.9670) -- (1.980, 0.9671) -- (1.982, 0.9672) -- (1.985, 0.9672) -- (1.988, 0.9673) -- (1.990, 0.9674) -- (1.992, 0.9675) -- (1.995, 0.9676) -- (1.998, 0.9677) -- (2.000, 0.9677) -- (2.002, 0.9678) -- (2.005, 0.9679) -- (2.007, 0.9680) -- (2.010, 0.9681) -- (2.013, 0.9682) -- (2.015, 0.9682) -- (2.018, 0.9683) -- (2.020, 0.9684) -- (2.023, 0.9685) -- (2.025, 0.9686) -- (2.027, 0.9686) -- (2.030, 0.9687) -- (2.033, 0.9688) -- (2.035, 0.9689) -- (2.038, 0.9690) -- (2.040, 0.9690) -- (2.043, 0.9691) -- (2.045, 0.9692) -- (2.047, 0.9693) -- (2.050, 0.9693) -- (2.053, 0.9694) -- (2.055, 0.9695) -- (2.058, 0.9696) -- (2.060, 0.9697) -- (2.063, 0.9697) -- (2.065, 0.9698) -- (2.067, 0.9699) -- (2.070, 0.9700) -- (2.072, 0.9700) -- (2.075, 0.9701) -- (2.078, 0.9702) -- (2.080, 0.9703) -- (2.083, 0.9703) -- (2.085, 0.9704) -- (2.087, 0.9705) -- (2.090, 0.9705) -- (2.092, 0.9706) -- (2.095, 0.9707) -- (2.098, 0.9708) -- (2.100, 0.9708) -- (2.103, 0.9709) -- (2.105, 0.9710) -- (2.107, 0.9710) -- (2.110, 0.9711) -- (2.112, 0.9712) -- (2.115, 0.9713) -- (2.118, 0.9713) -- (2.120, 0.9714) -- (2.123, 0.9715) -- (2.125, 0.9715) -- (2.127, 0.9716) -- (2.130, 0.9717) -- (2.132, 0.9717) -- (2.135, 0.9718) -- (2.138, 0.9719) -- (2.140, 0.9719) -- (2.143, 0.9720) -- (2.145, 0.9721) -- (2.148, 0.9721) -- (2.150, 0.9722) -- (2.152, 0.9723) -- (2.155, 0.9723) -- (2.158, 0.9724) -- (2.160, 0.9725) -- (2.163, 0.9725) -- (2.165, 0.9726) -- (2.168, 0.9727) -- (2.170, 0.9727) -- (2.172, 0.9728) -- (2.175, 0.9729) -- (2.178, 0.9729) -- (2.180, 0.9730) -- (2.183, 0.9731) -- (2.185, 0.9731) -- (2.188, 0.9732) -- (2.190, 0.9732) -- (2.192, 0.9733) -- (2.195, 0.9734) -- (2.197, 0.9734) -- (2.200, 0.9735) -- (2.203, 0.9736) -- (2.205, 0.9736) -- (2.208, 0.9737) -- (2.210, 0.9737) -- (2.212, 0.9738) -- (2.215, 0.9739) -- (2.217, 0.9739) -- (2.220, 0.9740) -- (2.223, 0.9740) -- (2.225, 0.9741) -- (2.228, 0.9742) -- (2.230, 0.9742) -- (2.232, 0.9743) -- (2.235, 0.9743) -- (2.237, 0.9744) -- (2.240, 0.9745) -- (2.243, 0.9745) -- (2.245, 0.9746) -- (2.248, 0.9746) -- (2.250, 0.9747) -- (2.252, 0.9747) -- (2.255, 0.9748) -- (2.257, 0.9749) -- (2.260, 0.9749) -- (2.263, 0.9750) -- (2.265, 0.9750) -- (2.268, 0.9751) -- (2.270, 0.9751) -- (2.273, 0.9752) -- (2.275, 0.9753) -- (2.277, 0.9753) -- (2.280, 0.9754) -- (2.283, 0.9754) -- (2.285, 0.9755) -- (2.288, 0.9755) -- (2.290, 0.9756) -- (2.293, 0.9756) -- (2.295, 0.9757) -- (2.297, 0.9757) -- (2.300, 0.9758) -- (2.303, 0.9759) -- (2.305, 0.9759) -- (2.308, 0.9760) -- (2.310, 0.9760) -- (2.313, 0.9761) -- (2.315, 0.9761) -- (2.317, 0.9762) -- (2.320, 0.9762) -- (2.322, 0.9763) -- (2.325, 0.9763) -- (2.328, 0.9764) -- (2.330, 0.9764) -- (2.333, 0.9765) -- (2.335, 0.9765) -- (2.337, 0.9766) -- (2.340, 0.9766) -- (2.342, 0.9767) -- (2.345, 0.9767) -- (2.348, 0.9768) -- (2.350, 0.9768) -- (2.353, 0.9769) -- (2.355, 0.9769) -- (2.357, 0.9770) -- (2.360, 0.9770) -- (2.362, 0.9771) -- (2.365, 0.9771) -- (2.368, 0.9772) -- (2.370, 0.9772) -- (2.373, 0.9773) -- (2.375, 0.9773) -- (2.377, 0.9774) -- (2.380, 0.9774) -- (2.382, 0.9775) -- (2.385, 0.9775) -- (2.388, 0.9776) -- (2.390, 0.9776) -- (2.393, 0.9777) -- (2.395, 0.9777) -- (2.398, 0.9778) -- (2.400, 0.9778) -- (2.402, 0.9779) -- (2.405, 0.9779) -- (2.408, 0.9780) -- (2.410, 0.9780) -- (2.413, 0.9781) -- (2.415, 0.9781) -- (2.418, 0.9781) -- (2.420, 0.9782) -- (2.422, 0.9782) -- (2.425, 0.9783) -- (2.428, 0.9783) -- (2.430, 0.9784) -- (2.433, 0.9784) -- (2.435, 0.9785) -- (2.438, 0.9785) -- (2.440, 0.9786) -- (2.442, 0.9786) -- (2.445, 0.9786) -- (2.447, 0.9787) -- (2.450, 0.9787) -- (2.453, 0.9788) -- (2.455, 0.9788) -- (2.458, 0.9789) -- (2.460, 0.9789) -- (2.462, 0.9790) -- (2.465, 0.9790) -- (2.467, 0.9790) -- (2.470, 0.9791) -- (2.473, 0.9791) -- (2.475, 0.9792) -- (2.478, 0.9792) -- (2.480, 0.9793) -- (2.482, 0.9793) -- (2.485, 0.9793) -- (2.487, 0.9794) -- (2.490, 0.9794) -- (2.493, 0.9795) -- (2.495, 0.9795) -- (2.498, 0.9796) -- (2.500, 0.9796);
\draw[thick] (0.5, 0) -- (0.5, 0.02);
\node[below] at (0.5, 0) {$1$};
\draw[thick] (1.0, 0) -- (1.0, 0.02);
\node[below] at (1.0, 0) {$2$};
\draw[thick] (1.5, 0) -- (1.5, 0.02);
\node[below] at (1.5, 0) {$3$};
\draw[thick] (2.0, 0) -- (2.0, 0.02);
\node[below] at (2.0, 0) {$4$};
\draw[thick] (2.5, 0) -- (2.5, 0.02);
\node[below] at (2.5, 0) {$5$};
\draw[color = blue, fill = white, thick] (0.5, 0) circle(0.625pt);
\end{tikzpicture}
\label{fig:spm_ap1}
}
\quad\quad\quad\quad\quad\quad
\subfigure[function $y = {\cal F}(z)$]{
\begin{tikzpicture}[thick, smooth, scale = 2.5]
\draw[->] (0, 0) -- (1.1, 0);
\draw[->] (0, 0) -- (0, 1.1);
\node[above] at (0, 1.1) {$y$};
\node[right] at (1.1, 0) {$z$};
\node[left] at (0, 1) {$1$};
\node[left] at (0, 0.5) {$\frac{1}{2}$};
\node[left] at (0, 0) {$0$};
\node[below] at (1, 0) {$1$};
\draw[dashed] (0, 1) -- (1, 1);
\draw[dashed] (1, 0) -- (1, 1);
\draw[thick, color = blue] (0.000, 0.5000) -- (0.001, 0.5000) -- (0.002, 0.5000) -- (0.003, 0.5000) -- (0.004, 0.5000) -- (0.005, 0.5000) -- (0.006, 0.5000) -- (0.007, 0.5000) -- (0.008, 0.5000) -- (0.009, 0.5000) -- (0.010, 0.5000) -- (0.011, 0.5000) -- (0.012, 0.5000) -- (0.013, 0.5000) -- (0.014, 0.5000) -- (0.015, 0.5001) -- (0.016, 0.5001) -- (0.017, 0.5001) -- (0.018, 0.5001) -- (0.019, 0.5001) -- (0.020, 0.5001) -- (0.021, 0.5001) -- (0.022, 0.5001) -- (0.023, 0.5001) -- (0.024, 0.5001) -- (0.025, 0.5002) -- (0.026, 0.5002) -- (0.027, 0.5002) -- (0.028, 0.5002) -- (0.029, 0.5002) -- (0.030, 0.5002) -- (0.031, 0.5002) -- (0.032, 0.5003) -- (0.033, 0.5003) -- (0.034, 0.5003) -- (0.035, 0.5003) -- (0.036, 0.5003) -- (0.037, 0.5003) -- (0.038, 0.5004) -- (0.039, 0.5004) -- (0.040, 0.5004) -- (0.041, 0.5004) -- (0.042, 0.5004) -- (0.043, 0.5005) -- (0.044, 0.5005) -- (0.045, 0.5005) -- (0.046, 0.5005) -- (0.047, 0.5006) -- (0.048, 0.5006) -- (0.049, 0.5006) -- (0.050, 0.5006) -- (0.051, 0.5007) -- (0.052, 0.5007) -- (0.053, 0.5007) -- (0.054, 0.5007) -- (0.055, 0.5008) -- (0.056, 0.5008) -- (0.057, 0.5008) -- (0.058, 0.5008) -- (0.059, 0.5009) -- (0.060, 0.5009) -- (0.061, 0.5009) -- (0.062, 0.5010) -- (0.063, 0.5010) -- (0.064, 0.5010) -- (0.065, 0.5011) -- (0.066, 0.5011) -- (0.067, 0.5011) -- (0.068, 0.5012) -- (0.069, 0.5012) -- (0.070, 0.5012) -- (0.071, 0.5013) -- (0.072, 0.5013) -- (0.073, 0.5013) -- (0.074, 0.5014) -- (0.075, 0.5014) -- (0.076, 0.5014) -- (0.077, 0.5015) -- (0.078, 0.5015) -- (0.079, 0.5016) -- (0.080, 0.5016) -- (0.081, 0.5016) -- (0.082, 0.5017) -- (0.083, 0.5017) -- (0.084, 0.5018) -- (0.085, 0.5018) -- (0.086, 0.5019) -- (0.087, 0.5019) -- (0.088, 0.5019) -- (0.089, 0.5020) -- (0.090, 0.5020) -- (0.091, 0.5021) -- (0.092, 0.5021) -- (0.093, 0.5022) -- (0.094, 0.5022) -- (0.095, 0.5023) -- (0.096, 0.5023) -- (0.097, 0.5024) -- (0.098, 0.5024) -- (0.099, 0.5025) -- (0.100, 0.5025) -- (0.101, 0.5026) -- (0.102, 0.5026) -- (0.103, 0.5027) -- (0.104, 0.5027) -- (0.105, 0.5028) -- (0.106, 0.5028) -- (0.107, 0.5029) -- (0.108, 0.5029) -- (0.109, 0.5030) -- (0.110, 0.5030) -- (0.111, 0.5031) -- (0.112, 0.5032) -- (0.113, 0.5032) -- (0.114, 0.5033) -- (0.115, 0.5033) -- (0.116, 0.5034) -- (0.117, 0.5034) -- (0.118, 0.5035) -- (0.119, 0.5036) -- (0.120, 0.5036) -- (0.121, 0.5037) -- (0.122, 0.5037) -- (0.123, 0.5038) -- (0.124, 0.5039) -- (0.125, 0.5039) -- (0.126, 0.5040) -- (0.127, 0.5041) -- (0.128, 0.5041) -- (0.129, 0.5042) -- (0.130, 0.5043) -- (0.131, 0.5043) -- (0.132, 0.5044) -- (0.133, 0.5045) -- (0.134, 0.5045) -- (0.135, 0.5046) -- (0.136, 0.5047) -- (0.137, 0.5047) -- (0.138, 0.5048) -- (0.139, 0.5049) -- (0.140, 0.5049) -- (0.141, 0.5050) -- (0.142, 0.5051) -- (0.143, 0.5052) -- (0.144, 0.5052) -- (0.145, 0.5053) -- (0.146, 0.5054) -- (0.147, 0.5055) -- (0.148, 0.5055) -- (0.149, 0.5056) -- (0.150, 0.5057) -- (0.151, 0.5058) -- (0.152, 0.5058) -- (0.153, 0.5059) -- (0.154, 0.5060) -- (0.155, 0.5061) -- (0.156, 0.5062) -- (0.157, 0.5062) -- (0.158, 0.5063) -- (0.159, 0.5064) -- (0.160, 0.5065) -- (0.161, 0.5066) -- (0.162, 0.5066) -- (0.163, 0.5067) -- (0.164, 0.5068) -- (0.165, 0.5069) -- (0.166, 0.5070) -- (0.167, 0.5071) -- (0.168, 0.5071) -- (0.169, 0.5072) -- (0.170, 0.5073) -- (0.171, 0.5074) -- (0.172, 0.5075) -- (0.173, 0.5076) -- (0.174, 0.5077) -- (0.175, 0.5078) -- (0.176, 0.5079) -- (0.177, 0.5079) -- (0.178, 0.5080) -- (0.179, 0.5081) -- (0.180, 0.5082) -- (0.181, 0.5083) -- (0.182, 0.5084) -- (0.183, 0.5085) -- (0.184, 0.5086) -- (0.185, 0.5087) -- (0.186, 0.5088) -- (0.187, 0.5089) -- (0.188, 0.5090) -- (0.189, 0.5091) -- (0.190, 0.5092) -- (0.191, 0.5093) -- (0.192, 0.5094) -- (0.193, 0.5095) -- (0.194, 0.5096) -- (0.195, 0.5097) -- (0.196, 0.5098) -- (0.197, 0.5099) -- (0.198, 0.5100) -- (0.199, 0.5101) -- (0.200, 0.5102) -- (0.201, 0.5103) -- (0.202, 0.5104) -- (0.203, 0.5105) -- (0.204, 0.5106) -- (0.205, 0.5107) -- (0.206, 0.5108) -- (0.207, 0.5109) -- (0.208, 0.5110) -- (0.209, 0.5111) -- (0.210, 0.5112) -- (0.211, 0.5114) -- (0.212, 0.5115) -- (0.213, 0.5116) -- (0.214, 0.5117) -- (0.215, 0.5118) -- (0.216, 0.5119) -- (0.217, 0.5120) -- (0.218, 0.5121) -- (0.219, 0.5123) -- (0.220, 0.5124) -- (0.221, 0.5125) -- (0.222, 0.5126) -- (0.223, 0.5127) -- (0.224, 0.5128) -- (0.225, 0.5129) -- (0.226, 0.5131) -- (0.227, 0.5132) -- (0.228, 0.5133) -- (0.229, 0.5134) -- (0.230, 0.5135) -- (0.231, 0.5137) -- (0.232, 0.5138) -- (0.233, 0.5139) -- (0.234, 0.5140) -- (0.235, 0.5142) -- (0.236, 0.5143) -- (0.237, 0.5144) -- (0.238, 0.5145) -- (0.239, 0.5147) -- (0.240, 0.5148) -- (0.241, 0.5149) -- (0.242, 0.5150) -- (0.243, 0.5152) -- (0.244, 0.5153) -- (0.245, 0.5154) -- (0.246, 0.5155) -- (0.247, 0.5157) -- (0.248, 0.5158) -- (0.249, 0.5159) -- (0.250, 0.5161) -- (0.251, 0.5162) -- (0.252, 0.5163) -- (0.253, 0.5165) -- (0.254, 0.5166) -- (0.255, 0.5167) -- (0.256, 0.5169) -- (0.257, 0.5170) -- (0.258, 0.5171) -- (0.259, 0.5173) -- (0.260, 0.5174) -- (0.261, 0.5176) -- (0.262, 0.5177) -- (0.263, 0.5178) -- (0.264, 0.5180) -- (0.265, 0.5181) -- (0.266, 0.5183) -- (0.267, 0.5184) -- (0.268, 0.5185) -- (0.269, 0.5187) -- (0.270, 0.5188) -- (0.271, 0.5190) -- (0.272, 0.5191) -- (0.273, 0.5193) -- (0.274, 0.5194) -- (0.275, 0.5196) -- (0.276, 0.5197) -- (0.277, 0.5199) -- (0.278, 0.5200) -- (0.279, 0.5202) -- (0.280, 0.5203) -- (0.281, 0.5205) -- (0.282, 0.5206) -- (0.283, 0.5208) -- (0.284, 0.5209) -- (0.285, 0.5211) -- (0.286, 0.5212) -- (0.287, 0.5214) -- (0.288, 0.5215) -- (0.289, 0.5217) -- (0.290, 0.5218) -- (0.291, 0.5220) -- (0.292, 0.5222) -- (0.293, 0.5223) -- (0.294, 0.5225) -- (0.295, 0.5226) -- (0.296, 0.5228) -- (0.297, 0.5230) -- (0.298, 0.5231) -- (0.299, 0.5233) -- (0.300, 0.5234) -- (0.301, 0.5236) -- (0.302, 0.5238) -- (0.303, 0.5239) -- (0.304, 0.5241) -- (0.305, 0.5243) -- (0.306, 0.5244) -- (0.307, 0.5246) -- (0.308, 0.5248) -- (0.309, 0.5249) -- (0.310, 0.5251) -- (0.311, 0.5253) -- (0.312, 0.5254) -- (0.313, 0.5256) -- (0.314, 0.5258) -- (0.315, 0.5260) -- (0.316, 0.5261) -- (0.317, 0.5263) -- (0.318, 0.5265) -- (0.319, 0.5267) -- (0.320, 0.5268) -- (0.321, 0.5270) -- (0.322, 0.5272) -- (0.323, 0.5274) -- (0.324, 0.5275) -- (0.325, 0.5277) -- (0.326, 0.5279) -- (0.327, 0.5281) -- (0.328, 0.5283) -- (0.329, 0.5284) -- (0.330, 0.5286) -- (0.331, 0.5288) -- (0.332, 0.5290) -- (0.333, 0.5292) -- (0.334, 0.5294) -- (0.335, 0.5295) -- (0.336, 0.5297) -- (0.337, 0.5299) -- (0.338, 0.5301) -- (0.339, 0.5303) -- (0.340, 0.5305) -- (0.341, 0.5307) -- (0.342, 0.5309) -- (0.343, 0.5310) -- (0.344, 0.5312) -- (0.345, 0.5314) -- (0.346, 0.5316) -- (0.347, 0.5318) -- (0.348, 0.5320) -- (0.349, 0.5322) -- (0.350, 0.5324) -- (0.351, 0.5326) -- (0.352, 0.5328) -- (0.353, 0.5330) -- (0.354, 0.5332) -- (0.355, 0.5334) -- (0.356, 0.5336) -- (0.357, 0.5338) -- (0.358, 0.5340) -- (0.359, 0.5342) -- (0.360, 0.5344) -- (0.361, 0.5346) -- (0.362, 0.5348) -- (0.363, 0.5350) -- (0.364, 0.5352) -- (0.365, 0.5354) -- (0.366, 0.5356) -- (0.367, 0.5358) -- (0.368, 0.5360) -- (0.369, 0.5362) -- (0.370, 0.5365) -- (0.371, 0.5367) -- (0.372, 0.5369) -- (0.373, 0.5371) -- (0.374, 0.5373) -- (0.375, 0.5375) -- (0.376, 0.5377) -- (0.377, 0.5379) -- (0.378, 0.5382) -- (0.379, 0.5384) -- (0.380, 0.5386) -- (0.381, 0.5388) -- (0.382, 0.5390) -- (0.383, 0.5392) -- (0.384, 0.5395) -- (0.385, 0.5397) -- (0.386, 0.5399) -- (0.387, 0.5401) -- (0.388, 0.5404) -- (0.389, 0.5406) -- (0.390, 0.5408) -- (0.391, 0.5410) -- (0.392, 0.5413) -- (0.393, 0.5415) -- (0.394, 0.5417) -- (0.395, 0.5419) -- (0.396, 0.5422) -- (0.397, 0.5424) -- (0.398, 0.5426) -- (0.399, 0.5429) -- (0.400, 0.5431) -- (0.401, 0.5433) -- (0.402, 0.5436) -- (0.403, 0.5438) -- (0.404, 0.5440) -- (0.405, 0.5443) -- (0.406, 0.5445) -- (0.407, 0.5447) -- (0.408, 0.5450) -- (0.409, 0.5452) -- (0.410, 0.5454) -- (0.411, 0.5457) -- (0.412, 0.5459) -- (0.413, 0.5462) -- (0.414, 0.5464) -- (0.415, 0.5467) -- (0.416, 0.5469) -- (0.417, 0.5471) -- (0.418, 0.5474) -- (0.419, 0.5476) -- (0.420, 0.5479) -- (0.421, 0.5481) -- (0.422, 0.5484) -- (0.423, 0.5486) -- (0.424, 0.5489) -- (0.425, 0.5491) -- (0.426, 0.5494) -- (0.427, 0.5496) -- (0.428, 0.5499) -- (0.429, 0.5501) -- (0.430, 0.5504) -- (0.431, 0.5507) -- (0.432, 0.5509) -- (0.433, 0.5512) -- (0.434, 0.5514) -- (0.435, 0.5517) -- (0.436, 0.5520) -- (0.437, 0.5522) -- (0.438, 0.5525) -- (0.439, 0.5527) -- (0.440, 0.5530) -- (0.441, 0.5533) -- (0.442, 0.5535) -- (0.443, 0.5538) -- (0.444, 0.5541) -- (0.445, 0.5543) -- (0.446, 0.5546) -- (0.447, 0.5549) -- (0.448, 0.5551) -- (0.449, 0.5554) -- (0.450, 0.5557) -- (0.451, 0.5560) -- (0.452, 0.5562) -- (0.453, 0.5565) -- (0.454, 0.5568) -- (0.455, 0.5571) -- (0.456, 0.5573) -- (0.457, 0.5576) -- (0.458, 0.5579) -- (0.459, 0.5582) -- (0.460, 0.5585) -- (0.461, 0.5587) -- (0.462, 0.5590) -- (0.463, 0.5593) -- (0.464, 0.5596) -- (0.465, 0.5599) -- (0.466, 0.5602) -- (0.467, 0.5604) -- (0.468, 0.5607) -- (0.469, 0.5610) -- (0.470, 0.5613) -- (0.471, 0.5616) -- (0.472, 0.5619) -- (0.473, 0.5622) -- (0.474, 0.5625) -- (0.475, 0.5628) -- (0.476, 0.5631) -- (0.477, 0.5634) -- (0.478, 0.5637) -- (0.479, 0.5640) -- (0.480, 0.5643) -- (0.481, 0.5645) -- (0.482, 0.5649) -- (0.483, 0.5652) -- (0.484, 0.5655) -- (0.485, 0.5658) -- (0.486, 0.5661) -- (0.487, 0.5664) -- (0.488, 0.5667) -- (0.489, 0.5670) -- (0.490, 0.5673) -- (0.491, 0.5676) -- (0.492, 0.5679) -- (0.493, 0.5682) -- (0.494, 0.5685) -- (0.495, 0.5688) -- (0.496, 0.5692) -- (0.497, 0.5695) -- (0.498, 0.5698) -- (0.499, 0.5701) -- (0.500, 0.5704) -- (0.501, 0.5707) -- (0.502, 0.5711) -- (0.503, 0.5714) -- (0.504, 0.5717) -- (0.505, 0.5720) -- (0.506, 0.5723) -- (0.507, 0.5727) -- (0.508, 0.5730) -- (0.509, 0.5733) -- (0.510, 0.5736) -- (0.511, 0.5740) -- (0.512, 0.5743) -- (0.513, 0.5746) -- (0.514, 0.5750) -- (0.515, 0.5753) -- (0.516, 0.5756) -- (0.517, 0.5760) -- (0.518, 0.5763) -- (0.519, 0.5766) -- (0.520, 0.5770) -- (0.521, 0.5773) -- (0.522, 0.5776) -- (0.523, 0.5780) -- (0.524, 0.5783) -- (0.525, 0.5787) -- (0.526, 0.5790) -- (0.527, 0.5794) -- (0.528, 0.5797) -- (0.529, 0.5801) -- (0.530, 0.5804) -- (0.531, 0.5807) -- (0.532, 0.5811) -- (0.533, 0.5814) -- (0.534, 0.5818) -- (0.535, 0.5822) -- (0.536, 0.5825) -- (0.537, 0.5829) -- (0.538, 0.5832) -- (0.539, 0.5836) -- (0.540, 0.5839) -- (0.541, 0.5843) -- (0.542, 0.5847) -- (0.543, 0.5850) -- (0.544, 0.5854) -- (0.545, 0.5857) -- (0.546, 0.5861) -- (0.547, 0.5865) -- (0.548, 0.5868) -- (0.549, 0.5872) -- (0.550, 0.5876) -- (0.551, 0.5879) -- (0.552, 0.5883) -- (0.553, 0.5887) -- (0.554, 0.5891) -- (0.555, 0.5894) -- (0.556, 0.5898) -- (0.557, 0.5902) -- (0.558, 0.5906) -- (0.559, 0.5909) -- (0.560, 0.5913) -- (0.561, 0.5917) -- (0.562, 0.5921) -- (0.563, 0.5925) -- (0.564, 0.5929) -- (0.565, 0.5932) -- (0.566, 0.5936) -- (0.567, 0.5940) -- (0.568, 0.5944) -- (0.569, 0.5948) -- (0.570, 0.5952) -- (0.571, 0.5956) -- (0.572, 0.5960) -- (0.573, 0.5964) -- (0.574, 0.5968) -- (0.575, 0.5972) -- (0.576, 0.5976) -- (0.577, 0.5980) -- (0.578, 0.5984) -- (0.579, 0.5988) -- (0.580, 0.5992) -- (0.581, 0.5996) -- (0.582, 0.6000) -- (0.583, 0.6004) -- (0.584, 0.6008) -- (0.585, 0.6012) -- (0.586, 0.6016) -- (0.587, 0.6020) -- (0.588, 0.6025) -- (0.589, 0.6029) -- (0.590, 0.6033) -- (0.591, 0.6037) -- (0.592, 0.6041) -- (0.593, 0.6045) -- (0.594, 0.6050) -- (0.595, 0.6054) -- (0.596, 0.6058) -- (0.597, 0.6062) -- (0.598, 0.6067) -- (0.599, 0.6071) -- (0.600, 0.6075) -- (0.601, 0.6080) -- (0.602, 0.6084) -- (0.603, 0.6088) -- (0.604, 0.6093) -- (0.605, 0.6097) -- (0.606, 0.6101) -- (0.607, 0.6106) -- (0.608, 0.6110) -- (0.609, 0.6114) -- (0.610, 0.6119) -- (0.611, 0.6123) -- (0.612, 0.6128) -- (0.613, 0.6132) -- (0.614, 0.6137) -- (0.615, 0.6141) -- (0.616, 0.6146) -- (0.617, 0.6150) -- (0.618, 0.6155) -- (0.619, 0.6159) -- (0.620, 0.6164) -- (0.621, 0.6168) -- (0.622, 0.6173) -- (0.623, 0.6178) -- (0.624, 0.6182) -- (0.625, 0.6187) -- (0.626, 0.6192) -- (0.627, 0.6196) -- (0.628, 0.6201) -- (0.629, 0.6206) -- (0.630, 0.6210) -- (0.631, 0.6215) -- (0.632, 0.6220) -- (0.633, 0.6224) -- (0.634, 0.6229) -- (0.635, 0.6234) -- (0.636, 0.6239) -- (0.637, 0.6244) -- (0.638, 0.6248) -- (0.639, 0.6253) -- (0.640, 0.6258) -- (0.641, 0.6263) -- (0.642, 0.6268) -- (0.643, 0.6273) -- (0.644, 0.6278) -- (0.645, 0.6283) -- (0.646, 0.6288) -- (0.647, 0.6292) -- (0.648, 0.6297) -- (0.649, 0.6302) -- (0.650, 0.6307) -- (0.651, 0.6312) -- (0.652, 0.6317) -- (0.653, 0.6323) -- (0.654, 0.6328) -- (0.655, 0.6333) -- (0.656, 0.6338) -- (0.657, 0.6343) -- (0.658, 0.6348) -- (0.659, 0.6353) -- (0.660, 0.6358) -- (0.661, 0.6363) -- (0.662, 0.6369) -- (0.663, 0.6374) -- (0.664, 0.6379) -- (0.665, 0.6384) -- (0.666, 0.6390) -- (0.667, 0.6395) -- (0.668, 0.6400) -- (0.669, 0.6405) -- (0.670, 0.6411) -- (0.671, 0.6416) -- (0.672, 0.6422) -- (0.673, 0.6427) -- (0.674, 0.6432) -- (0.675, 0.6438) -- (0.676, 0.6443) -- (0.677, 0.6449) -- (0.678, 0.6454) -- (0.679, 0.6459) -- (0.680, 0.6465) -- (0.681, 0.6470) -- (0.682, 0.6476) -- (0.683, 0.6482) -- (0.684, 0.6487) -- (0.685, 0.6493) -- (0.686, 0.6498) -- (0.687, 0.6504) -- (0.688, 0.6510) -- (0.689, 0.6515) -- (0.690, 0.6521) -- (0.691, 0.6527) -- (0.692, 0.6532) -- (0.693, 0.6538) -- (0.694, 0.6544) -- (0.695, 0.6549) -- (0.696, 0.6555) -- (0.697, 0.6561) -- (0.698, 0.6567) -- (0.699, 0.6573) -- (0.700, 0.6579) -- (0.701, 0.6584) -- (0.702, 0.6590) -- (0.703, 0.6596) -- (0.704, 0.6602) -- (0.705, 0.6608) -- (0.706, 0.6614) -- (0.707, 0.6620) -- (0.708, 0.6626) -- (0.709, 0.6632) -- (0.710, 0.6638) -- (0.711, 0.6644) -- (0.712, 0.6650) -- (0.713, 0.6656) -- (0.714, 0.6663) -- (0.715, 0.6669) -- (0.716, 0.6675) -- (0.717, 0.6681) -- (0.718, 0.6687) -- (0.719, 0.6694) -- (0.720, 0.6700) -- (0.721, 0.6706) -- (0.722, 0.6712) -- (0.723, 0.6719) -- (0.724, 0.6725) -- (0.725, 0.6731) -- (0.726, 0.6738) -- (0.727, 0.6744) -- (0.728, 0.6750) -- (0.729, 0.6757) -- (0.730, 0.6763) -- (0.731, 0.6770) -- (0.732, 0.6776) -- (0.733, 0.6783) -- (0.734, 0.6789) -- (0.735, 0.6796) -- (0.736, 0.6803) -- (0.737, 0.6809) -- (0.738, 0.6816) -- (0.739, 0.6822) -- (0.740, 0.6829) -- (0.741, 0.6836) -- (0.742, 0.6843) -- (0.743, 0.6849) -- (0.744, 0.6856) -- (0.745, 0.6863) -- (0.746, 0.6870) -- (0.747, 0.6877) -- (0.748, 0.6883) -- (0.749, 0.6890) -- (0.750, 0.6897) -- (0.751, 0.6904) -- (0.752, 0.6911) -- (0.753, 0.6918) -- (0.754, 0.6925) -- (0.755, 0.6932) -- (0.756, 0.6939) -- (0.757, 0.6946) -- (0.758, 0.6953) -- (0.759, 0.6960) -- (0.760, 0.6968) -- (0.761, 0.6975) -- (0.762, 0.6982) -- (0.763, 0.6989) -- (0.764, 0.6996) -- (0.765, 0.7004) -- (0.766, 0.7011) -- (0.767, 0.7018) -- (0.768, 0.7026) -- (0.769, 0.7033) -- (0.770, 0.7040) -- (0.771, 0.7048) -- (0.772, 0.7055) -- (0.773, 0.7063) -- (0.774, 0.7070) -- (0.775, 0.7078) -- (0.776, 0.7085) -- (0.777, 0.7093) -- (0.778, 0.7101) -- (0.779, 0.7108) -- (0.780, 0.7116) -- (0.781, 0.7124) -- (0.782, 0.7131) -- (0.783, 0.7139) -- (0.784, 0.7147) -- (0.785, 0.7155) -- (0.786, 0.7163) -- (0.787, 0.7170) -- (0.788, 0.7178) -- (0.789, 0.7186) -- (0.790, 0.7194) -- (0.791, 0.7202) -- (0.792, 0.7210) -- (0.793, 0.7218) -- (0.794, 0.7226) -- (0.795, 0.7234) -- (0.796, 0.7242) -- (0.797, 0.7251) -- (0.798, 0.7259) -- (0.799, 0.7267) -- (0.800, 0.7275) -- (0.801, 0.7284) -- (0.802, 0.7292) -- (0.803, 0.7300) -- (0.804, 0.7309) -- (0.805, 0.7317) -- (0.806, 0.7325) -- (0.807, 0.7334) -- (0.808, 0.7342) -- (0.809, 0.7351) -- (0.810, 0.7359) -- (0.811, 0.7368) -- (0.812, 0.7377) -- (0.813, 0.7385) -- (0.814, 0.7394) -- (0.815, 0.7403) -- (0.816, 0.7411) -- (0.817, 0.7420) -- (0.818, 0.7429) -- (0.819, 0.7438) -- (0.820, 0.7447) -- (0.821, 0.7456) -- (0.822, 0.7465) -- (0.823, 0.7474) -- (0.824, 0.7483) -- (0.825, 0.7492) -- (0.826, 0.7501) -- (0.827, 0.7510) -- (0.828, 0.7519) -- (0.829, 0.7528) -- (0.830, 0.7538) -- (0.831, 0.7547) -- (0.832, 0.7556) -- (0.833, 0.7565) -- (0.834, 0.7575) -- (0.835, 0.7584) -- (0.836, 0.7594) -- (0.837, 0.7603) -- (0.838, 0.7613) -- (0.839, 0.7622) -- (0.840, 0.7632) -- (0.841, 0.7642) -- (0.842, 0.7651) -- (0.843, 0.7661) -- (0.844, 0.7671) -- (0.845, 0.7681) -- (0.846, 0.7690) -- (0.847, 0.7700) -- (0.848, 0.7710) -- (0.849, 0.7720) -- (0.850, 0.7730) -- (0.851, 0.7740) -- (0.852, 0.7750) -- (0.853, 0.7760) -- (0.854, 0.7771) -- (0.855, 0.7781) -- (0.856, 0.7791) -- (0.857, 0.7801) -- (0.858, 0.7812) -- (0.859, 0.7822) -- (0.860, 0.7833) -- (0.861, 0.7843) -- (0.862, 0.7854) -- (0.863, 0.7864) -- (0.864, 0.7875) -- (0.865, 0.7885) -- (0.866, 0.7896) -- (0.867, 0.7907) -- (0.868, 0.7918) -- (0.869, 0.7929) -- (0.870, 0.7939) -- (0.871, 0.7950) -- (0.872, 0.7961) -- (0.873, 0.7972) -- (0.874, 0.7983) -- (0.875, 0.7995) -- (0.876, 0.8006) -- (0.877, 0.8017) -- (0.878, 0.8028) -- (0.879, 0.8040) -- (0.880, 0.8051) -- (0.881, 0.8062) -- (0.882, 0.8074) -- (0.883, 0.8085) -- (0.884, 0.8097) -- (0.885, 0.8109) -- (0.886, 0.8120) -- (0.887, 0.8132) -- (0.888, 0.8144) -- (0.889, 0.8156) -- (0.890, 0.8168) -- (0.891, 0.8180) -- (0.892, 0.8192) -- (0.893, 0.8204) -- (0.894, 0.8216) -- (0.895, 0.8228) -- (0.896, 0.8240) -- (0.897, 0.8253) -- (0.898, 0.8265) -- (0.899, 0.8277) -- (0.900, 0.8290) -- (0.901, 0.8303) -- (0.902, 0.8315) -- (0.903, 0.8328) -- (0.904, 0.8341) -- (0.905, 0.8353) -- (0.906, 0.8366) -- (0.907, 0.8379) -- (0.908, 0.8392) -- (0.909, 0.8405) -- (0.910, 0.8418) -- (0.911, 0.8431) -- (0.912, 0.8445) -- (0.913, 0.8458) -- (0.914, 0.8471) -- (0.915, 0.8485) -- (0.916, 0.8498) -- (0.917, 0.8512) -- (0.918, 0.8526) -- (0.919, 0.8539) -- (0.920, 0.8553) -- (0.921, 0.8567) -- (0.922, 0.8581) -- (0.923, 0.8595) -- (0.924, 0.8609) -- (0.925, 0.8623) -- (0.926, 0.8637) -- (0.927, 0.8652) -- (0.928, 0.8666) -- (0.929, 0.8681) -- (0.930, 0.8695) -- (0.931, 0.8710) -- (0.932, 0.8725) -- (0.933, 0.8739) -- (0.934, 0.8754) -- (0.935, 0.8769) -- (0.936, 0.8784) -- (0.937, 0.8799) -- (0.938, 0.8815) -- (0.939, 0.8830) -- (0.940, 0.8845) -- (0.941, 0.8861) -- (0.942, 0.8876) -- (0.943, 0.8892) -- (0.944, 0.8908) -- (0.945, 0.8924) -- (0.946, 0.8940) -- (0.947, 0.8956) -- (0.948, 0.8972) -- (0.949, 0.8988) -- (0.950, 0.9005) -- (0.951, 0.9021) -- (0.952, 0.9038) -- (0.953, 0.9054) -- (0.954, 0.9071) -- (0.955, 0.9088) -- (0.956, 0.9105) -- (0.957, 0.9122) -- (0.958, 0.9139) -- (0.959, 0.9157) -- (0.960, 0.9174) -- (0.961, 0.9192) -- (0.962, 0.9209) -- (0.963, 0.9227) -- (0.964, 0.9245) -- (0.965, 0.9263) -- (0.966, 0.9281) -- (0.967, 0.9299) -- (0.968, 0.9318) -- (0.969, 0.9337) -- (0.970, 0.9355) -- (0.971, 0.9374) -- (0.972, 0.9393) -- (0.973, 0.9412) -- (0.974, 0.9432) -- (0.975, 0.9451) -- (0.976, 0.9471) -- (0.977, 0.9490) -- (0.978, 0.9510) -- (0.979, 0.9530) -- (0.980, 0.9551) -- (0.981, 0.9571) -- (0.982, 0.9592) -- (0.983, 0.9612) -- (0.984, 0.9633) -- (0.985, 0.9654) -- (0.986, 0.9676) -- (0.987, 0.9697) -- (0.988, 0.9719) -- (0.989, 0.9741) -- (0.990, 0.9763) -- (0.991, 0.9786) -- (0.992, 0.9808) -- (0.993, 0.9831) -- (0.994, 0.9854) -- (0.995, 0.9878) -- (0.996, 0.9902) -- (0.997, 0.9926) -- (0.998, 0.9950) -- (0.999, 0.9975) -- (1.000, 1.0000);
\draw[color = blue, fill = white, thick] (0, 0.5) circle(0.625pt);
\draw[color = blue, fill = white, thick] (1, 1) circle(0.625pt);
\end{tikzpicture}
\label{fig:spm_ap2}
}
\caption{Demonstration for numeric calculations about constant $\C \approx 2.6202$.}
\label{fig:spm_ap}
\end{figure}

\subsection{Proof of \texorpdfstring{\Cref{lem:spm_ap_reduction3}}{}}
\label{subapp:spm_ap_reductions}

\begin{lemma}
\label{lem:spm_ap_reduction3}
W.l.o.g.\ any worst-case instance of Program~\eqref{prog:spm_ap0.5} satisfies that $v_1 > v_2 > \cdots > v_n$.
\end{lemma}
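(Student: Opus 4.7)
\medskip

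\noindent\textbf{Proof proposal.}
The plan is to show that whenever two consecutive distributions in a triangular instance share the same monopoly price, we can merge them into a single triangular distribution without decreasing the objective value of Program~\eqref{prog:spm_ap0.5} and without violating constraint~\eqref{cstr:spm_ap0.5}; iterating this merging operation at most $n-1$ times yields a strictly decreasing sequence $v_1 > v_2 > \cdots$. So fix any index $i$ with $v_i = v_{i+1} = v$, and consider replacing $\tri(v,q_i)$ and $\tri(v,q_{i+1})$ with the single triangular distribution $\tri(v, q')$ where
\[
q' \;\eqdef\; 1 - (1 - q_i)(1 - q_{i + 1}).
\]

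First I would verify that the $\spm$ objective is unchanged. The combined contribution of indices $i$ and $i+1$ to $\spm = \sum_k v_k q_k \prod_{j<k}(1-q_j)$ equals
\[
v \cdot \Bigl[q_i + q_{i+1}(1 - q_i)\Bigr] \cdot \prod_{j = 1}^{i - 1}(1 - q_j) \;=\; v q' \cdot \prod_{j = 1}^{i - 1}(1 - q_j),
\]
which is exactly the contribution of the new merged distribution $\tri(v, q')$. Moreover $1 - q' = (1-q_i)(1-q_{i+1})$, so the cumulative product $\prod_{j \leq i+1}(1 - q_j)$ is preserved, hence contributions of all later indices $k > i+1$ stay intact.

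Next I would show that constraint~\eqref{cstr:spm_ap0.5} is preserved, and in fact the LHS weakly decreases pointwise. For any $p \in (1, \infty)$, only the terms with $v_k \geq p$ appear, so there are two cases. If $p > v$ then neither the old pair nor the merged distribution contributes. If $p \leq v$ then the old pair contributes $\ln\!\bigl[(1 + a)(1 + b)\bigr]$ and the merged one contributes $\ln(1 + c)$, where
\[
a = \tfrac{v q_i}{(1 - q_i) p}, \qquad b = \tfrac{v q_{i+1}}{(1 - q_{i + 1}) p}, \qquad c = \tfrac{v q'}{(1 - q') p}.
\]
A direct computation using $\tfrac{q'}{1 - q'} = \tfrac{q_i}{1 - q_i} + \tfrac{q_{i+1}}{1 - q_{i+1}} + \tfrac{q_i q_{i+1}}{(1 - q_i)(1 - q_{i+1})}$ yields $1 + c = 1 + a + b + \tfrac{p}{v} \cdot ab$, whereas $(1+a)(1+b) = 1 + a + b + ab$. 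Because $p \leq v$ gives $\tfrac{p}{v} \leq 1$, we obtain $1 + c \leq (1+a)(1+b)$, so $\ln(1+c) \leq \ln(1+a) + \ln(1+b)$, as required.

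I expect the inequality analysis in the second step to be the only real obstacle; the rest is bookkeeping. Once the merging operation is shown to weakly improve the objective and preserve feasibility, iterating it finitely many times produces a worst-case instance with $v_1 > v_2 > \cdots > v_n$, which is exactly what \Cref{lem:spm_ap_reduction3} claims.
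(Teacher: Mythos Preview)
Your proposal is correct and follows essentially the same approach as the paper: merge two equal-$v$ triangular distributions into one with $q' = 1-(1-q_i)(1-q_{i+1})$, verify the $\spm$ objective is preserved via $vq' = vq_i + vq_{i+1}(1-q_i)$ and $1-q' = (1-q_i)(1-q_{i+1})$, and check feasibility by the same inequality. The paper phrases the feasibility step as $\overline{F}_k(p) \geq F_k(p)F_{k+1}(p)$ and expands $(\overline{F}_k(p))^{-1} - (F_k(p)F_{k+1}(p))^{-1}$ directly, which is algebraically equivalent to your comparison of $1+c$ with $(1+a)(1+b)$.
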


\begin{proof}
W.l.o.g.\ suppose that $v_k = v_{k + 1}$ for some $k \in [n - 1]$, then consider the following new triangular instance $\{\tri(\vv_i, \qq_i)\}_{i = 1}^{n - 1}$:
\begin{align}
\notag
& \vv_i \eqdef v_i, && \qq_i \eqdef q_i, && \forall i \in [k - 1]; \\
\label{eq:distinct_v}
& \vv_k \eqdef v_k = v_{k + 1}, && \qq_k \eqdef q_k + q_{k + 1} - q_k \cdot q_{k + 1}; \\
\notag
& \vv_i \eqdef v_{i + 1}, && \qq_i \eqdef q_{i + 1}, && \forall i \in [k + 1: n - 1].
\end{align}
To establish \Cref{lem:spm_ap_reduction3}, it suffices to prove the following:
\begin{itemize}
\item $\spm\big(\{\tri(\vv_i, \qq_i)\}_{i = 1}^{n - 1}\big) = \spm\big(\{\tri(v_i, q_i)\}_{i = 1}^n\big)$;
\item $\ap\big(p, \{\tri(\vv_i, \qq_i)\}_{i = 1}^{n - 1}\big) \leq \ap\big(p, \{\tri(v_i, q_i)\}_{i = 1}^n\big)$ for any $p \in \RRP$.
\end{itemize}
Recall the $\spm$ revenue formula in \Cref{lem:spm_tri_best}. The first claim trivially holds by construction, i.e.\ $\vv_k \qq_k \overset{\eqref{eq:distinct_v}}{=} v_k q_k + v_{k + 1} q_{k + 1} \cdot (1 - q_k)$ and $1 - \qq_k \overset{\eqref{eq:distinct_v}}{=} (1 - q_k) \cdot (1 - q_{k + 1})$. For the second claim, recall the $\ap(p)$ revenue formula in \Cref{subsec:prelim:mech}. It suffices to justify that $\FF_k(p) \geq F_k(p) \cdot F_{k + 1}(p)$. This follows as $\FF_k(p) = F_k(p) = F_{k + 1}(p) = 1$ when $p \in [v_k, \infty)$, and
\begin{align*}
\big(\FF_k(p)\big)^{-1} - \big(F_k(p) \cdot F_{k + 1}(p)\big)^{-1}
& = 1 + \frac{\vv_k \qq_k}{1 - \qq_k} \cdot \frac{1}{p} - \Big(1 + \frac{v_k q_k}{1 - q_k} \cdot \frac{1}{p}\Big) \cdot \Big(1 + \frac{v_{k + 1} q_{k + 1}}{1 - q_{k + 1}} \cdot \frac{1}{p}\Big) \\
& \overset{\eqref{eq:distinct_v}}{=} -\frac{q_k}{1 - q_k} \cdot \frac{q_{k + 1}}{1 - q_{k + 1}} \cdot \Big(\frac{v_k}{p} - 1\Big) \cdot \frac{v_k}{p} \leq 0,
\end{align*}
when $p \in (0, v_k)$. This completes the proof of \Cref{lem:spm_ap_reduction3}.
\end{proof}

\subsection{Proof of \texorpdfstring{\Cref{lem:spm_ap_reduction4}}{}}
\label{subapp:spm_ap_reduction4}

\begin{lemma}
\label{lem:spm_ap_reduction4}
W.l.o.g.\ any worst-case instance of Program~\eqref{prog:spm_ap0.5} includes distribution $\tri(\infty)$, i.e.\ CDF $F_0(p) = \frac{p}{p + 1}$ for all $p \in \RRP$.
\end{lemma}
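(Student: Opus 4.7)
My plan is to parallel the reduction of \citet[Lemma~4.2]{AHNPY15}, which established the analogous inclusion for the {\sf Ex-Ante Relaxation} vs.\ {\sf Anonymous Pricing} problem. Given any worst-case feasible triangular instance $\{\tri(v_i, q_i)\}_{i = 1}^n$ of Program~\eqref{prog:spm_ap0.5}, I aim to produce another feasible instance that contains $\tri(\infty)$ and whose $\spm$ revenue is at least as large. The plan is to work with the approximation $\tri(N, 1/N)$ for a parameter $N \gg v_1$: if I can construct, for every such $N$, a feasible instance $\mathcal{I}^{(N)} = \{\tri(N, 1/N)\} \cup \{\tri(v_i, q_i^{(N)})\}_{i = 1}^n$ whose $\spm$ revenue exceeds $\spm\big(\{\tri(v_i, q_i)\}_{i = 1}^n\big) + 1 - o_N(1)$, then sending $N \to \infty$ and invoking continuity of the $\ap$ constraint and the $\spm$ objective in the triangular parameters yields a feasible limit instance that contains $\tri(\infty)$ and meets the required bound.

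The reason this adaptation works almost verbatim is \Cref{lem:spm_tri_best}, which identifies $\spm$ with the Myerson revenue on triangular instances. Concretely, placing $\tri(N, 1/N)$ as the first buyer (since $v_0 = N > v_1$) yields the $\spm$-revenue decomposition $\spm(\mathcal{I}^{(N)}) = v_0 q_0 + (1 - q_0) \cdot \spm\big(\{\tri(v_i, q_i^{(N)})\}_{i = 1}^n\big) = 1 + (1 - 1/N) \cdot \spm\big(\{\tri(v_i, q_i^{(N)})\}_{i = 1}^n\big)$. The difficulty is that naive adjunction -- leaving each $q_i^{(N)} = q_i$ -- violates constraint~\eqref{cstr:spm_ap0.5}: the effective upper bound on the left-hand side of~\eqref{cstr:spm_ap0.5} seen by buyers $i \geq 1$ tightens in the $N \to \infty$ limit from $-\ln(1 - p^{-1})$ to $-\ln(1 - p^{-2})$, a change of $\ln((p+1)/p)$, which a tight original instance cannot absorb. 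The perturbation $q_i \mapsto q_i^{(N)} \leq q_i$ must therefore be designed to compensate this residual gap pointwise in $p$.

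The main obstacle will be choosing the trajectory $\{q_i^{(N)}\}$ so that feasibility holds at every $p \in (1, \infty)$ -- especially at the discontinuities $p = v_i$ where the active index set $\{i : v_i \geq p\}$ changes -- while the aggregate loss in $\spm$ remains $O(1/N)$. I would approach this by expressing the required tightening as a function of $p$, solving the resulting family of equations (one per active interval $p \in [v_{i + 1}, v_i)$) backward from $i = n$ down to $i = 1$, and verifying that the resulting $q_i^{(N)}$ satisfies $|q_i - q_i^{(N)}| = O(1/N)$ uniformly in $i$. The closed form $\spm = \sum_{i = 1}^n v_i q_i \prod_{j < i}(1 - q_j)$ from \Cref{lem:spm_tri_best} then yields the desired $O(1/N)$ bound on the $\spm$ loss via a direct Lipschitz argument, and a concluding invocation of \Cref{lem:spm_ap_reduction3} removes any residual ties among the $v_i$'s in the limit instance.
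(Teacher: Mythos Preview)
Your plan has a genuine gap at the quantitative step. You correctly observe that adjoining $\tri(N,1/N)$ tightens the budget available to buyers $i\ge 1$ in constraint~\eqref{cstr:spm_ap0.5} by $\ln\big(\tfrac{p+1}{p}\big)+O(1/N)$, but this tightening is $\Theta(1)$ in the relevant price range, not $O(1/N)$. Hence for a worst-case (and therefore constraint-tight) instance the compensating perturbation $q_i\mapsto q_i^{(N)}$ cannot satisfy $|q_i-q_i^{(N)}|=O(1/N)$, and your Lipschitz step yields only an $O(1)$ bound on the $\spm$ loss from buyers $i\ge 1$. In fact your stated target $\spm(\mathcal{I}^{(N)})\ge \spm\big(\{\tri(v_i,q_i)\}_{i=1}^n\big)+1-o_N(1)$ is unattainable: if the original instance already realizes the optimum of Program~\eqref{prog:spm_ap0.5}, your target would force the feasible instance $\mathcal{I}^{(N)}$ to beat that optimum by almost $1$, a contradiction.

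What the lemma needs is only $\spm(\mathcal{I}^{(N)})\ge \spm(\text{original})$, i.e.\ that the unit gained from $\tri(\infty)$ at least compensates the $\spm$ lost when shrinking the other buyers. This trade-off is the nontrivial content of the lemma and does not follow from a small-perturbation argument. The paper does \emph{not} shrink all $q_i$; instead it identifies the shortest prefix $\{1,\dots,k\}$ with $\sum_{i\le k} v_iq_i>1$, replaces that entire prefix by $\tri(\infty)$ together with a single residual buyer $\tri(v_k,\overline{q}_1)$ where $v_k\overline{q}_1=\sum_{i\le k}v_iq_i-1$, and leaves buyers $k+1,\dots,n$ untouched. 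Feasibility then reduces to the pointwise CDF domination $\overline{F}_0(p)\cdot\overline{F}_1(p)\ge \prod_{i\le k}F_i(p)$ for $p<v_k$ (an elementary algebraic check), and the $\spm$ comparison follows from $1+v_k\overline{q}_1=\sum_{i\le k}v_iq_i\ge \sum_{i\le k}v_iq_i\prod_{j<i}(1-q_j)$ together with $1-\overline{q}_1\ge 1-q_k\ge \prod_{i\le k}(1-q_i)$. This prefix-replacement idea is what your approach is missing.
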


\begin{proof}
Given a triangular instance $\{\tri(v_i, q_i)\}_{i = 1}^n$ feasible to Program~\eqref{prog:spm_ap0.5}, assume w.l.o.g.\ that $\sum_{i = 1}^n v_i q_i \geq \sum_{i = 1}^n v_i q_i \cdot \prod_{j = 1}^{i - 1} (1 - q_j) = \spm > 1$, then
\begin{enumerate*}[label = (\alph*), font = {\bfseries}]
\item $k \eqdef \argmin_{1 \leq i \leq n} \big\{\sum_{j = 1}^{i} v_j q_j > 1\big\}$ is well defined. Besides,
\item $v_k q_k \leq 1$, which follows from constraint~\eqref{cstr:spm_ap0.5} as
\end{enumerate*}
\[
1 \geq \ap(v_k) = v_k \cdot \Big(1 - \prod_{j = 1}^n F_i(v_k)\Big) \geq v_k \cdot \big(1 - F_k(v_k)\big) = v_k q_k.
\]
Now, consider the next triangular instance $\tri(\vv_i, \qq_i)_{i = 1}^{n - k + 1}$:
\begin{align}
\label{eq:lem_v0q0=1-2}
& \vv_1 \eqdef v_k, && \qq_1 \eqdef \frac{1}{v_k} \cdot \Big(\sum_{i = 1}^k v_i q_i - 1\Big) \overset{\bf (a)}{\leq} q_k; \\
\label{eq:lem_v0q0=1-3}
& \vv_i \eqdef v_{i + k - 1}, && \qq_i \eqdef q_{i + k - 1}, && \forall i \in [2: n - k + 1].
\end{align}
We can infer \Cref{lem:spm_ap_reduction4} from the following:
\begin{itemize}
\item $\spm\big(\{\tri(\infty) \cup \tri(\vv_i, \qq_i)\}_{i = 1}^{n - k + 1}\big) \leq \spm\big(\{\tri(v_i, q_i)\}_{i = 1}^n\big)$.
\item $\ap\big(p, \tri(\infty) \cup \{\tri(\vv_i, \qq_i)\}_{i = 1}^{n - k + 1}\big) \leq 1$ for any $p \in \RRP$.
\end{itemize}
Recall the $\spm$ revenue formula in \Cref{lem:spm_tri_best}. The firts claim trivially holds by construction, i.e.\ $1 + \vv_k \qq_k \overset{\eqref{eq:lem_v0q0=1-2}}{=} \sum_{i = 1}^k v_i q_i \geq \sum_{i = 1}^k v_i q_i \cdot \prod_{j = 1}^{i - 1} (1 - q_j)$ and $1 - \qq_1 \overset{\eqref{eq:lem_v0q0=1-2}}{\geq} 1 - q_k \geq \prod_{i = 1}^k (1 - q_k)$.

To see the second claim, recall the $\ap(p)$ revenue formula in \Cref{subsec:prelim:mech}. As formalized in Program~\eqref{prog:spm_ap0.5}, $v_1 \geq v_2 \geq \cdots \geq v_n$. By construction, $\vv_1 \geq \vv_2 \geq \cdots \geq \vv_{n - k + 1}$, and thus
\[
\ap\big(p, \tri(\infty) \cup \{\tri(\vv_i, \qq_i)\}_{i = 1}^{n - k + 1}\big) \leq \ap\big(p, \tri(\infty)\big) = p \cdot \Big(1 - \frac{p}{p + 1}\Big) = \frac{p}{p + 1} \leq 1,
\]
for any $p \in [\vv_1, \infty)$. In the other range of $p \in (0, \vv_1) = (0, v_k)$, we only need to reveal that $\ap\big(p, \tri(\infty) \cup \{\tri(\vv_i, \qq_i)\}_{i = 1}^{n - k + 1}\big) \leq \ap\big(p, \{\tri(v_i, q_i)\}_{i = 1}^n\big)$, which is implied by the following:
\begin{align*}
\big(\FF_0(p) \cdot \FF_1(p)\big)^{-1}
& \,\,=\,\, \Big(1 + \frac{1}{p}\Big) \cdot \Big(1 + \frac{\sum_{i = 1}^k v_i q_i - 1}{1 - \qq_k} \cdot \frac{1}{p}\Big) \\
& \,\,\overset{\eqref{eq:lem_v0q0=1-2}}{\leq}\,\, \Big(1 + \frac{1}{p}\Big) \cdot \Big(1 + \frac{\sum_{i = 1}^k v_i q_i - 1}{1 - q_k} \cdot \frac{1}{p}\Big) \quad\quad\quad\quad \mbox{\tt (as $\qq_k \leq q_k$)} \\
& \overset{{\bf (a,b)}}{\leq} \Big(1 + \frac{1}{p}\Big) \cdot \Big(1 + \frac{\sum_{i = 1}^k v_i q_i - 1}{1 - q_k} \cdot \frac{1}{p}\Big) + \frac{(1 - v_k q_k) \cdot (1 - \sum_{i = 1}^{k - 1} v_i q_i)}{p^2 \cdot (1 - q_k)} \\
& \,\,=\,\, \Big(1 + \frac{\sum_{i = 1}^{k - 1} v_i q_i}{p}\Big) \cdot \Big(1 + \frac{v_k q_k}{1 - q_k} \cdot \frac{1}{p}\Big) \\
& \,\,\leq\,\, \prod_{i = 1}^k \Big(1 + \frac{v_i q_i}{1 - q_i} \cdot \frac{1}{p}\Big)
= \prod_{i = 1}^k \big(F_i(p)\big)^{-1},
\end{align*}
where the last inequality follows as $1 + \sum_i z_i \leq \prod_i (1 + z_i)$ when $z_i$'s are all non-negative. This completes the proof of \Cref{lem:spm_ap_reduction4}.
\end{proof}

\subsection{Proof of \texorpdfstring{\Cref{lem:spm_ap_relexation}}{}}
\label{subapp:spm_ap_solution}

{\bf \Cref{lem:spm_ap_relexation}.}
{\em Given any triangular instance $\{\tri(v_i, q_i)\}_{i = 1}^n$ with constraint~\eqref{cstr:spm_ap4} loose for some $i \in [n]$, there exists another triangular instance $\{\tri(\vv_i, \qq_i)\}_{i = 1}^n$ satisfying the following:
\begin{enumerate}[font = {\em\bfseries}]
\item Constraint~\eqref{cstr:spm_ap4} still holds, and is tight for each $i \in [n]$.
\item $\spm\big(\{\tri(\vv_i, \qq_i)\}_{i = 1}^n\big) \geq \spm\big(\{\tri(v_i, q_i)\}_{i = 1}^n\big)$.
\end{enumerate}}

\begin{proof}
Let $v_0 \eqdef \infty$; recall Part 2 of \Cref{lem:spm_ap_RQ} that $\R(v_0) = 0$. W.l.o.g.\ let $k \in [n]$ be the smallest index for constraint~\eqref{cstr:spm_ap4} to be loose, then
\begin{align}
\notag
& \sum_{j = 1}^i \ln\Big(1 + \frac{v_j q_j}{1 - q_j}\Big) = \R(v_i) && \Rightarrow && \ln\Big(1 + \frac{v_j q_j}{1 - q_j}\Big) = \R(v_i) - \R(v_{i - 1}), && \forall i \in [k - 1]; \\
\label{eq:v_k}
& \sum_{j = 1}^k \ln\Big(1 + \frac{v_j q_j}{1 - q_j}\Big) < \R(v_k) && \Rightarrow && \ln\Big(1 + \frac{v_k q_k}{1 - q_k}\Big) < \R(v_k) - \R(v_{k - 1}).
\end{align}
Denote $\Delta_k \eqdef \ln\big(1 + \frac{v_k q_k}{1 - q_k}\big)$ for brevity, and consider this triangular instance $\{\tri(\vv_i, \qq_i)\}_{i = 1}^n$:
\begin{align}
\notag
& \vv_i \eqdef v_i, && \qq_i \eqdef q_i, && \forall i \in ([n] \setminus \{k\}); \\
\label{eq:bar_vk}
& \vv_k \eqdef \R^{-1}\big(\Delta_k + \R(v_{k - 1})\big), && \qq_k \eqdef \frac{e^{\R(\vv_k) - \R(v_{k - 1})} - 1}{\vv_k + e^{\R(\vv_k) - \R(v_{k - 1})} - 1}.
\end{align}
In what follows, we prove that
\begin{enumerate*}[label = (\alph*), font = {\bfseries}]
\item this instance makes constraint~\eqref{cstr:spm_ap4} tight for each $i \in [k]$; and
\item $\spm\big(\{\tri(\vv_i, \qq_i)\}_{i = 1}^n\big) \geq \spm\big(\{\tri(v_i, q_i)\}_{i = 1}^n\big)$.
\end{enumerate*}
To see so, these facts are useful:
\begin{itemize}
\item $v_k < \vv_k < v_{k-1}$, since $\R(\vv_k) \overset{\eqref{eq:bar_vk}}{=} \Delta_k + \R(v_{k - 1}) \overset{\eqref{eq:v_k}}{\in} \big(\R(v_{k-1}), \R(v_k)\big)$ and function $\R$ is a decreasing function (see Part 1 of \Cref{lem:spm_ap_RQ});
\item $\ln\big(1 + \frac{\vv_i \qq_i}{1 - \qq_i}\big) = \ln\big(1 + \frac{v_i q_i}{1 - q_i}\big)$ for each $i \in [n]$, by construction of instance $\{\tri(\vv_i, \qq_i)\}_{i = 1}^n$;
\item $\qq_k < q_k$ and $\vv_k \qq_k > v_k q_k$, as a consequence of the first and second facts.
\end{itemize}
Clearly, {\bf Claim~(a)} follows from the first and second facts and \Cref{eq:bar_vk}. Moreover, recall Program~\eqref{prog:spm_ap1} for the $\spm$ revenue formula. Here is a sufficient condition for {\bf Claim~(b)}:
\begin{align*}
& (\vv_i - 1) \cdot \qq_i \cdot \prod_{j = 1}^{i - 1} (1 - \qq_j) \geq (v_i - 1) \cdot q_i \cdot \prod_{j = 1}^{i - 1} (1 - q_j), && \forall i \in [n].
\end{align*}
From the third fact and \Cref{eq:bar_vk}, we know $(\vv_i - 1) \cdot \qq_i \geq (v_i - 1) \cdot q_i$ and $1 - \qq_j \geq 1 - q_j$ for all $i \in [n]$. Given these, the above inequality and thus {\bf Claim (b)} follows immediately.

Respecting {\bf Claims (a, b)}, we can easily construct a desired triangular instance by induction. This completes the proof of \Cref{lem:spm_ap_relexation}.
\end{proof}

\subsection{Proof of \texorpdfstring{\Cref{lem:spm_ap_inequality}}{}}
\label{subapp:spm:ineq}

We have proved \Cref{lem:ineq1,lem:ineq2} respectively in \Cref{subapp:ineq1,subapp:ineq2}. Both facts are useful for proving \Cref{lem:spm_ap_inequality}.

\vspace{.1in}
\noindent
{\bf \Cref{lem:ineq1}.}
{\em $G(x, y) \leq 0$ for any $y \geq x > 1$, where
\[
G(x, y) \eqdef (1 - x^{-1}) \cdot (e^{\R(x) - \R(y)} - 1) + \big(\R(y) - \R(x)\big) - \big(\Q(y) - \Q(x)\big).
\]}

\noindent
{\bf \Cref{lem:ineq2}.}
{\em $H(x, y) \geq 0$ for any $y \geq x > 1$, where
\[
	H(x, y) \eqdef x^{-1} \cdot (e^{\R(x) - \R(y)} - 1) - (e^{\Q(x) - \Q(y)} - 1).
\]}

\noindent
{\bf \Cref{lem:spm_ap_inequality}.}
{\em Given any triangular instance $\{\tri(v_i, q_i)\}_{i = 1}^n$ such that $v_1 > v_2 > \cdots > v_n \geq 1$ and the parameters $\{q_i\}_{i = 1}^n$ satisfying constraint~\eqref{cstr:spm_ap5}. The following holds for each $k \in [n]$:
\[
\mbox{$(v_k - 1) \cdot q_k \cdot \prod_{j = 1}^{k - 1} (1 - q_j) \leq \Int{v_k}{v_{k - 1}} (x - 1) \cdot \big(-\Q'(x)\big) \cdot e^{-\Q(x)} \cdot \dd x$}.
\]}

\vspace{-.1in}
\begin{proof}
It suffices to justify the following two inequalities:
\begin{align}
\label{ineq:lem_spm_ap1}
& (v_k - 1) \cdot q_k \cdot (1 - q_k)^{-1} \leq \mbox{$\Int{v_k}{v_{k - 1}} (x - 1) \cdot \big(-\Q'(x)\big) \cdot \dd x$}; \\
\label{ineq:lem_spm_ap2}
& \prod_{j = 1}^k (1 - q_j) \leq e^{-\Q(x)},
\hspace{2cm} \forall x \in [v_k, v_{k - 1}].
\end{align}
To see inequality~\eqref{ineq:lem_spm_ap1}, recall \Cref{lem:ineq1}, constraint~\eqref{cstr:spm_ap5} that $q_k = \frac{e^{\R(v_k) - \R(v_{k - 1})} - 1}{v_k + e^{\R(v_k) - \R(v_{k - 1})} - 1}$, and Part 1 of \Cref{lem:spm_ap_RQ} that $\R'(p) = p \cdot \Q'(p)$ for any $p \in (1, \infty)$. It follows that
\[
\begin{aligned}
\lhs \mbox{ of } \eqref{ineq:lem_spm_ap1}
& \overset{\eqref{cstr:spm_ap5}}{=} (1 - v_k^{-1}) \cdot (e^{\R(v_k) - \R(v_{k - 1})} - 1) \\
& \,\,\,\,\leq\,\,\,\, \big(\R(v_k) - \R(v_{k - 1})\big) - \big(\Q(v_k) - \Q(v_{k - 1})\big)
&& \mbox{\tt (by \Cref{lem:ineq1})} \\
& \,\,\,\,=\,\,\,\, \rhs \mbox{ of } \eqref{ineq:lem_spm_ap1}.
&& \mbox{\tt (by Part 1 of \Cref{lem:spm_ap_RQ})}
\end{aligned}
\]
Similarly, to deal with inequality~\eqref{ineq:lem_spm_ap1}, recall \Cref{lem:spm_ap_RQ} that $\Q$ is a decreasing function on interval $p \in (1, \infty)$, and that $\Q(v_0) = \Q(\infty) = 0$. Combining these arguments with \Cref{lem:ineq2} results in
\[
\begin{aligned}
\lhs \mbox{ of } \eqref{ineq:lem_spm_ap2}
& \overset{\eqref{cstr:spm_ap5}}{=} \prod_{j = 1}^k \big[1 + v_j^{-1} \cdot (e^{\R(v_j) - \R(v_{j - 1})} - 1)\big]^{-1} \\
& \,\,\,\,\leq\,\,\,\, \prod_{j = 1}^k \big(e^{\Q(v_j) - \Q(v_{j - 1})}\big)^{-1}
&& \mbox{\tt (by \Cref{lem:ineq2})} \\
& \,\,\,\,=\,\,\,\, e^{-\Q(v_k)}
\leq \rhs \mbox{ of } \eqref{ineq:lem_spm_ap2}.
&& \mbox{\tt (by \Cref{lem:spm_ap_RQ})}
\end{aligned}
\]
This completes the proof of \Cref{lem:spm_ap_inequality}.
\end{proof}

\subsection{Proof of \texorpdfstring{\Cref{lem:spm_ap_lower2}}{}}
\label{subapp:spm_ap_lower}

{\bf Lemma~\ref{lem:spm_ap_lower2}.}
{\em Given any constant $\eps \in (0, 1)$, let $a = \min \big\{(1 + \eps), \Q^{-1}(\ln\eps^{-1})\big\}$ and $b = (1 + \eps^{-1})$. Then, $2 + \Int{a}{b} (x - 1) \cdot \big(-\Q'(x)\big) \cdot e^{-\Q(x)} \cdot \dd x \geq \C - 4 \cdot \eps$.}

\begin{proof}
By the definition of $\C \approx 2.6202$, it suffices to justify the next two inequalities:
\begin{align}
\label{eq:spm_lower1}
& \mbox{$\Int{1}{a} (x - 1) \cdot \big(-\Q'(x)\big) \cdot e^{-\Q(x)} \cdot \dd x \leq 2 \cdot \eps$}; \\
\label{eq:spm_lower2}
& \mbox{$\Int{b}{\infty} (x - 1) \cdot \big(-\Q'(x)\big) \cdot e^{-\Q(x)} \cdot \dd x \leq 2 \cdot \eps$}.
\end{align}
Observe that
\begin{enumerate*}[label = (\alph*), font = {\bfseries}]
\item function $\Q$ is a decreasing function on interval $p \in (1, \infty)$;
\item $\Q(1^+) = \infty$; and
\item $\Q(p) = -\ln(1 - p^{-2}) - \frac{1}{2} \cdot \sum_{k = 1}^{\infty} k^{-2} \cdot p^{-2k} \leq -\ln(1 - p^{-2})$.
\end{enumerate*}
Given these, we have
\begin{align*}
\lhs \mbox{ of } \eqref{eq:spm_lower1}
& \leq \mbox{$\Int{1}{a} x \cdot \big(-\Q'(x)\big) \cdot e^{-\Q(x)} \cdot \dd x$} \\
& = \mbox{$\Int{1}{a} x \cdot \dd e^{-\Q(x)}$}
&& \mbox{\tt (integration by parts)} \\
& \leq a \cdot e^{-\Q(a)}
&& \mbox{\tt (by {\bf Claims~(a,b)} above)} \\
& \leq (1 + \eps) \cdot \eps
\leq \rhs \mbox{ of } \eqref{eq:spm_lower1};
&& \mbox{\tt (as $a = \min \big\{(1 + \eps), \Q^{-1}(\ln\eps^{-1})\big\}$)} \\
\lhs \mbox{ of } \eqref{eq:spm_lower2}
& = \mbox{$\Int{b}{\infty} x \cdot \big(-\Q'(x)\big) \cdot e^{-\Q(x)} \cdot \dd x$} \\
& = \mbox{$b \cdot \big(1 - e^{-\Q(b)}\big) + \Int{b}{\infty} \big(1 - e^{-\Q(x)}\big) \cdot \dd x$}
&& \mbox{\tt (integration by parts)} \\
& \leq \mbox{$b \cdot b^{-2} + \Int{b}{\infty} x^{-2} \cdot \dd x$}
&& \mbox{\tt (by {\bf Claim~(c)} above)} \\
& = 2 \cdot \eps / (1 + \eps)
\leq \rhs \mbox{ of } \eqref{eq:spm_lower2}.
&& \mbox{\tt (as $b = 1 + \eps^{-1}$)}
\end{align*}
This completes the proof of \Cref{lem:spm_ap_lower2}.
\end{proof}

\section{Proof of Lemma~\ref{lem:F_n_irregular}}
\label{subapp:ar_ap_lower_formula}
\begin{lemma}
\label{lem:F_n_irregular}
For any $n \geq 2$, distribution $F_n(p) \eqdef
\begin{cases}
0 & \forall p \in [0, 1] \\
\big(1 - \frac{1}{p}\big)^{\frac{1}{n}} & \forall p \in (1, \infty)
\end{cases}$ is irregular.
\end{lemma}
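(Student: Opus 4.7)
My plan is to use the revenue-quantile curve characterization: $F_n$ is regular iff $r_n(q) \eqdef q \cdot F_n^{-1}(1-q)$ is concave on $(0,1)$, so it suffices to exhibit a point where $r_n''>0$. First I would invert the CDF: setting $q = 1 - F_n(p) = (1-1/p)^{1/n}$ gives $p = 1/(1-(1-q)^n)$, hence the clean formula
\[
r_n(q) \;=\; \frac{q}{1-(1-q)^n} \;=\; \frac{1}{g(q)}, \qquad \text{where } g(q) \eqdef \sum_{k=0}^{n-1}(1-q)^k,
\]
using the identity $(1-u^n)/(1-u) = 1+u+\cdots+u^{n-1}$ with $u=1-q$. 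This representation shows $r_n$ extends smoothly to $q=0$ with $r_n(0)=1/n$, so it is legitimate to evaluate its derivatives there.

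Next I would use the general identity $r_n'' = (2(g')^2 - g g'')/g^3$ for $r_n = 1/g$. Since $g(q) > 0$, the sign of $r_n''$ is the sign of $2(g')^2 - gg''$, so $r_n$ fails to be concave at any point where this quantity is positive. The plan is to evaluate at $q=0$, using the closed forms
\[
g(0)=n, \qquad g'(0) = -\sum_{k=1}^{n-1} k = -\tfrac{n(n-1)}{2}, \qquad g''(0) = \sum_{k=2}^{n-1} k(k-1) = \tfrac{n(n-1)(n-2)}{3}.
\]
Substituting and simplifying,
\[
2(g'(0))^2 - g(0)\,g''(0) \;=\; \tfrac{n^2(n-1)^2}{2} - \tfrac{n^2(n-1)(n-2)}{3} \;=\; \tfrac{n^2(n-1)(n+1)}{6},
\]
which is strictly positive for every $n\geq 2$.

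Therefore $r_n''(0) > 0$, so $r_n$ is strictly convex in a neighborhood of $q=0$ and in particular is not concave on $(0,1)$; by the revenue-quantile characterization of regularity, $F_n$ is irregular for every $n\geq 2$, which is what we needed. The proof is essentially a direct computation once the right parameterization is chosen; the only mild subtlety is the smooth extension of $r_n$ to $q=0$, which is handled transparently by the $1/g$ representation, so I do not expect any real obstacle.
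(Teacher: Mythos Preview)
Your proof is correct. You and the paper take the same setup: both reduce to the revenue--quantile curve $r_n(q)=1/g(q)$ with $g(q)=\sum_{k=0}^{n-1}(1-q)^k$ and both use the identity $r_n''=(2(g')^2-gg'')/g^3$. The paper then argues that $r_n''\geq 0$ on all of $(0,1)$ via a Cauchy--Schwarz style bound, i.e.\ it proves global convexity of $r_n$; you instead evaluate $2(g')^2-gg''$ at the single point $q=0$, obtain the explicit value $n^2(n-1)(n+1)/6>0$, and invoke continuity (which is immediate since $g$ is a polynomial with $g(0)=n\neq 0$) to get strict convexity on a right neighborhood of $0$. Your route is more elementary and sidesteps the inequality manipulation entirely; the paper's route yields the stronger statement that $r_n$ is convex on the whole interval, though that extra strength is not needed for irregularity.

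One minor slip in your write-up: you wrote ``$q=1-F_n(p)=(1-1/p)^{1/n}$'', but in fact $1-F_n(p)=1-(1-1/p)^{1/n}$. This is only a typo, since your subsequent inversion $p=1/(1-(1-q)^n)$ and the formula $r_n(q)=q/(1-(1-q)^n)$ are the correct ones (they come from $F_n(p)=1-q$, not $1-F_n(p)=1-q$), and everything downstream is unaffected.
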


\begin{proof}
As mentioned, a distribution is regular iff its revenue-quantile curve is a concave function. Hence, it suffices to prove that the revenue-quantile curve $r_n(q) \eqdef \frac{q}{1 - (1 - q)^n} = \frac{1}{\sum_{k = 0}^{n - 1} (1 - q)^k}$ of distribution $F_n$ is a convex function on interval $q \in (0, 1)$. In this range:
\begin{align*}
r_n'(q)
& = \frac{\sum_{k = 0}^{n - 1} k \cdot (1 - q)^{k - 1}}{\big[\sum_{k = 0}^{n - 1} (1 - q)^k\big]^2}; \\
r_n''(q)
& = \frac{2 \cdot \big[\sum_{k = 0}^{n - 1} k \cdot (1 - q)^{k - 1}\big]^2}{\big[\sum_{k = 0}^{n - 1} (1 - q)^k\big]^3} - \frac{\big[\sum_{k = 0}^{n - 1} k \cdot (k - 1) \cdot (1 - q)^{k - 1}\big]}{\big[\sum_{k = 0}^{n - 1} (1 - q)^k\big]^2} \\
& \geq \frac{\big[\sum_{k = 0}^{n - 1} k \cdot (1 - q)^{k - 1}\big]^2}{\big[\sum_{k = 0}^{n - 1} (1 - q)^k\big]^3} - \frac{\big[\sum_{k = 0}^{n - 1} k^2 \cdot (1 - q)^{k - 2}\big]}{\big[\sum_{k = 0}^{n - 1} (1 - q)^k\big]^2} \\
& \geq 0. \quad\quad\quad\quad \mbox{\tt (by the Cauchy-Schwarz inequality)}
\end{align*}
This completes the proof of \Cref{lem:F_n_irregular}.
\end{proof}

\section{Proof of Lemma~\ref{lem:ar_rev_monotone}}
\label{app:opt_ar}
{\bf \Cref{lem:ar_rev_monotone}.}
{\em Given any triangular instance $\{\tri(v_i, q_i)\}_{i = 1}^n$ that $v_1 \geq v_2 \cdots \geq v_n > v_{n + 1} \eqdef 0$, the best $\ar$ revenue is
achieved by reserve price $p = v_i$, for some $i \in [n]$.}

\begin{proof}
Recall that a triangular distribution $\tri(v_i, q_i)$ has a CDF of $F_i(p) = \frac{(1 - q_i) \cdot p}{(1 - q_i) \cdot p + v_i q_i}$ when $p \in [0, v_i)$, and $F_i(p) = 1$ when $p \in [v_i, \infty)$. Hence, the {\em virtual value function} $\varphi_i$ of triangular distribution $F_i$ maps any value $p \in (0, v_i)$ to a {\em negative constant} virtual value of
\begin{equation}
\label{eq:lem:ar_rev_monotone1}
\varphi_i(p) \eqdef p - \frac{1 - F_i(p)}{f_i(p)} = -\frac{v_i q_i}{1 - q_i}.
\end{equation}
To see the lemma, it suffices to show that $\ar(p)$ is a non-decreasing function on each interval $p \in (v_{i + 1}, v_i)$. Assume w.l.o.g.\ that this interval is non-empty, then
\begin{align*}
& \ar(p) = \mbox{$p \cdot \big(1 - \prod_{j = 1}^i F_j(p)\big) + \Int{p}{\infty} \Big\{1 - \prod_{j: v_j \geq x} F_j(x) \cdot \big[1 + \sum_{j: v_j \geq x} \big(\frac{1}{F_j(x)} - 1\big)\big]\Big\} \cdot \dd x$}; \\
\Rightarrow \quad& \ar'(p) = \prod_{j = 1}^i F_j(p) \cdot \sum_{j = 1}^i \frac{f_j(p)}{F_j(p)} \cdot \big(-\varphi_j(p)\big) \overset{\eqref{eq:lem:ar_rev_monotone1}}{\geq} 0.
\end{align*}
This completes the proof of \Cref{lem:ar_rev_monotone}.
\end{proof}

\bibliographystyle{apalike}
\bibliography{main}

\begin{thebibliography}{}

\bibitem[Abolhassani et~al., 2017]{AEEHK17}
Abolhassani, M., Ehsani, S., Esfandiari, H., Hajiaghayi, M., Kleinberg, R.~D.,
  and Lucier, B. (2017).
\newblock Beating 1-1/e for ordered prophets.
\newblock In {\em Proceedings of the 49th Annual {ACM} {SIGACT} Symposium on
  Theory of Computing, {STOC} 2017, Montreal, QC, Canada, June 19-23, 2017},
  pages 61--71.

\bibitem[Alaei, 2014]{A14}
Alaei, S. (2014).
\newblock Bayesian combinatorial auctions: Expanding single buyer mechanisms to
  many buyers.
\newblock {\em {SIAM} J. Comput.}, 43(2):930--972.

\bibitem[Alaei et~al., 2019]{AHNPY15}
Alaei, S., Hartline, J.~D., Niazadeh, R., Pountourakis, E., and Yuan, Y.
  (2019).
\newblock Optimal auctions vs. anonymous pricing.
\newblock {\em Games and Economic Behavior}, 118:494--510.

\bibitem[Anari et~al., 2019]{ec/AnariNSS19}
Anari, N., Niazadeh, R., Saberi, A., and Shameli, A. (2019).
\newblock Nearly optimal pricing algorithms for production constrained and
  laminar bayesian selection.
\newblock In {\em Proceedings of the 2019 {ACM} Conference on Economics and
  Computation, {EC} 2019, Phoenix, AZ, USA, June 24-28, 2019}, pages 91--92.

\bibitem[Anshelevich and Sekar, 2017]{AS17}
Anshelevich, E. and Sekar, S. (2017).
\newblock Price doubling and item halving: Robust revenue guarantees for item
  pricing.
\newblock In {\em Proceedings of the 2017 {ACM} Conference on Economics and
  Computation, {EC} '17, Cambridge, MA, USA, June 26-30, 2017}, pages 325--342.

\bibitem[Azar et~al., 2018]{ACK18}
Azar, Y., Chiplunkar, A., and Kaplan, H. (2018).
\newblock Prophet secretary: Surpassing the 1-1/e barrier.
\newblock In {\em Proceedings of the 2018 {ACM} Conference on Economics and
  Computation, Ithaca, NY, USA, June 18-22, 2018}, pages 303--318.

\bibitem[Babaioff et~al., 2007]{BIK07}
Babaioff, M., Immorlica, N., and Kleinberg, R. (2007).
\newblock Matroids, secretary problems, and online mechanisms.
\newblock In {\em Proceedings of the Eighteenth Annual {ACM-SIAM} Symposium on
  Discrete Algorithms, {SODA} 2007, New Orleans, Louisiana, USA, January 7-9,
  2007}, pages 434--443.

\bibitem[Babaioff et~al., 2014]{BILW14}
Babaioff, M., Immorlica, N., Lucier, B., and Weinberg, S.~M. (2014).
\newblock A simple and approximately optimal mechanism for an additive buyer.
\newblock In {\em 55th {IEEE} Annual Symposium on Foundations of Computer
  Science, {FOCS} 2014, Philadelphia, PA, USA, October 18-21, 2014}, pages
  21--30.

\bibitem[Bar{-}Yossef et~al., 2002]{BHW02}
Bar{-}Yossef, Z., Hildrum, K., and Wu, F. (2002).
\newblock Incentive-compatible online auctions for digital goods.
\newblock In {\em Proceedings of the Thirteenth Annual {ACM-SIAM} Symposium on
  Discrete Algorithms, January 6-8, 2002, San Francisco, CA, {USA.}}, pages
  964--970.

\bibitem[Beyhaghi et~al., 2018]{BGPPS18}
Beyhaghi, H., Golrezaei, N., Leme, R.~P., Pal, M., and Sivan, B. (2018).
\newblock Improved approximations for free-order prophets and second-price
  auctions.
\newblock {\em CoRR}, abs/1807.03435.

\bibitem[Birmpas et~al., 2017]{BMTT17}
Birmpas, G., Markakis, E., Telelis, O., and Tsikiridis, A. (2017).
\newblock Tight welfare guarantees for pure nash equilibria of the uniform
  price auction.
\newblock In {\em Algorithmic Game Theory - 10th International Symposium,
  {SAGT} 2017, L'Aquila, Italy, September 12-14, 2017, Proceedings}, pages
  16--28.

\bibitem[Bulow and Klemperer, 1994]{BK94}
Bulow, J. and Klemperer, P. (1994).
\newblock Auctions vs. negotiations.
\newblock Technical report, National Bureau of Economic Research.

\bibitem[Cai and Daskalakis, 2015]{CD15}
Cai, Y. and Daskalakis, C. (2015).
\newblock Extreme value theorems for optimal multidimensional pricing.
\newblock {\em Games and Economic Behavior}, 92:266--305.

\bibitem[Cai et~al., 2016]{CDW16}
Cai, Y., Devanur, N.~R., and Weinberg, S.~M. (2016).
\newblock A duality based unified approach to bayesian mechanism design.
\newblock In {\em Proceedings of the 48th Annual {ACM} {SIGACT} Symposium on
  Theory of Computing, {STOC} 2016, Cambridge, MA, USA, June 18-21, 2016},
  pages 926--939.

\bibitem[Cai and Zhao, 2017]{CZ17}
Cai, Y. and Zhao, M. (2017).
\newblock Simple mechanisms for subadditive buyers via duality.
\newblock In {\em Proceedings of the 49th Annual {ACM} {SIGACT} Symposium on
  Theory of Computing, {STOC} 2017, Montreal, QC, Canada, June 19-23, 2017},
  pages 170--183.

\bibitem[Cesa{-}Bianchi et~al., 2015]{CGM15}
Cesa{-}Bianchi, N., Gentile, C., and Mansour, Y. (2015).
\newblock Regret minimization for reserve prices in second-price auctions.
\newblock {\em {IEEE} Trans. Information Theory}, 61(1):549--564.

\bibitem[Chawla et~al., 2007]{CHK07}
Chawla, S., Hartline, J.~D., and Kleinberg, R.~D. (2007).
\newblock Algorithmic pricing via virtual valuations.
\newblock In {\em Proceedings 8th {ACM} Conference on Electronic Commerce
  (EC-2007), San Diego, California, USA, June 11-15, 2007}, pages 243--251.

\bibitem[Chawla et~al., 2010]{CHMS10}
Chawla, S., Hartline, J.~D., Malec, D.~L., and Sivan, B. (2010).
\newblock Multi-parameter mechanism design and sequential posted pricing.
\newblock In {\em Proceedings of the 42nd {ACM} Symposium on Theory of
  Computing, {STOC} 2010, Cambridge, Massachusetts, USA, 5-8 June 2010}, pages
  311--320.

\bibitem[Chawla et~al., 2015]{CMS15}
Chawla, S., Malec, D.~L., and Sivan, B. (2015).
\newblock The power of randomness in bayesian optimal mechanism design.
\newblock {\em Games and Economic Behavior}, 91:297--317.

\bibitem[Chawla and Miller, 2016]{CM16}
Chawla, S. and Miller, J.~B. (2016).
\newblock Mechanism design for subadditive agents via an ex ante relaxation.
\newblock In {\em Proceedings of the 2016 {ACM} Conference on Economics and
  Computation, {EC} '16, Maastricht, The Netherlands, July 24-28, 2016}, pages
  579--596.

\bibitem[Chen et~al., 2014]{CGL14}
Chen, N., Gravin, N., and Lu, P. (2014).
\newblock Optimal competitive auctions.
\newblock In {\em Symposium on Theory of Computing, {STOC} 2014, New York, NY,
  USA, May 31 - June 03, 2014}, pages 253--262.

\bibitem[Chen et~al., 2015]{CGL15}
Chen, N., Gravin, N., and Lu, P. (2015).
\newblock Competitive analysis via benchmark decomposition.
\newblock In {\em Proceedings of the Sixteenth {ACM} Conference on Economics
  and Computation, {EC} '15, Portland, OR, USA, June 15-19, 2015}, pages
  363--376.

\bibitem[Chen et~al., 2011]{CHLW11}
Chen, X., Hu, G., Lu, P., and Wang, L. (2011).
\newblock On the approximation ratio of k-lookahead auction.
\newblock In {\em WINE}, pages 61--71. Springer.

\bibitem[Correa et~al., 2019a]{CSZ19}
Correa, J., Saona, R., and Ziliotto, B. (2019a).
\newblock Prophet secretary through blind strategies.
\newblock In {\em Proceedings of the Thirtieth Annual {ACM-SIAM} Symposium on
  Discrete Algorithms, {SODA} 2019, San Diego, California, USA, January 6-9,
  2019}, pages 1946--1961.

\bibitem[Correa et~al., 2019b]{ec/CorreaDFS19}
Correa, J.~R., D{\"{u}}tting, P., Fischer, F.~A., and Schewior, K. (2019b).
\newblock Prophet inequalities for {I.I.D.} random variables from an unknown
  distribution.
\newblock In {\em Proceedings of the 2019 {ACM} Conference on Economics and
  Computation, {EC} 2019, Phoenix, AZ, USA, June 24-28, 2019}, pages 3--17.

\bibitem[Correa et~al., 2017]{CFHOV17}
Correa, J.~R., Foncea, P., Hoeksma, R., Oosterwijk, T., and Vredeveld, T.
  (2017).
\newblock Posted price mechanisms for a random stream of customers.
\newblock In {\em Proceedings of the 2017 {ACM} Conference on Economics and
  Computation, {EC} '17, Cambridge, MA, USA, June 26-30, 2017}, pages 169--186.

\bibitem[Cremer and McLean, 1988]{cremer1988full}
Cremer, J. and McLean, R.~P. (1988).
\newblock Full extraction of the surplus in bayesian and dominant strategy
  auctions.
\newblock {\em Econometrica: Journal of the Econometric Society}, pages
  1247--1257.

\bibitem[Dobzinski et~al., 2015]{DFK15}
Dobzinski, S., Fu, H., and Kleinberg, R. (2015).
\newblock Approximately optimal auctions for correlated bidders.
\newblock {\em Games and Economic Behavior}, 92:349--369.

\bibitem[Duetting et~al., 2017]{focs/DuettingFKL17}
Duetting, P., Feldman, M., Kesselheim, T., and Lucier, B. (2017).
\newblock Prophet inequalities made easy: Stochastic optimization by pricing
  non-stochastic inputs.
\newblock In {\em 58th {IEEE} Annual Symposium on Foundations of Computer
  Science, {FOCS} 2017, Berkeley, CA, USA, October 15-17, 2017}, pages
  540--551.

\bibitem[D{\"{u}}tting et~al., 2016]{DFK16}
D{\"{u}}tting, P., Fischer, F.~A., and Klimm, M. (2016).
\newblock Revenue gaps for static and dynamic posted pricing of homogeneous
  goods.
\newblock {\em CoRR}, abs/1607.07105.

\bibitem[D{\"{u}}tting and Kesselheim, 2019]{ec/DuttingK19}
D{\"{u}}tting, P. and Kesselheim, T. (2019).
\newblock Posted pricing and prophet inequalities with inaccurate priors.
\newblock In {\em Proceedings of the 2019 {ACM} Conference on Economics and
  Computation, {EC} 2019, Phoenix, AZ, USA, June 24-28, 2019}, pages 111--129.

\bibitem[Eden et~al., 2017]{EFFTW17:b}
Eden, A., Feldman, M., Friedler, O., Talgam{-}Cohen, I., and Weinberg, S.~M.
  (2017).
\newblock The competition complexity of auctions: {A} bulow-klemperer result
  for multi-dimensional bidders.
\newblock In {\em Proceedings of the 2017 {ACM} Conference on Economics and
  Computation, {EC} '17, Cambridge, MA, USA, June 26-30, 2017}, page 343.

\bibitem[Ehsani et~al., 2018]{EHKS18}
Ehsani, S., Hajiaghayi, M., Kesselheim, T., and Singla, S. (2018).
\newblock Prophet secretary for combinatorial auctions and matroids.
\newblock In {\em Proceedings of the Twenty-Ninth Annual {ACM-SIAM} Symposium
  on Discrete Algorithms, {SODA} 2018, New Orleans, LA, USA, January 7-10,
  2018}, pages 700--714.

\bibitem[Esfandiari et~al., 2017]{EHLM17}
Esfandiari, H., Hajiaghayi, M., Liaghat, V., and Monemizadeh, M. (2017).
\newblock Prophet secretary.
\newblock {\em {SIAM} J. Discrete Math.}, 31(3):1685--1701.

\bibitem[Feldman et~al., 2015]{FGL15}
Feldman, M., Gravin, N., and Lucier, B. (2015).
\newblock Combinatorial auctions via posted prices.
\newblock In {\em Proceedings of the Twenty-Sixth Annual {ACM-SIAM} Symposium
  on Discrete Algorithms, {SODA} 2015, San Diego, CA, USA, January 4-6, 2015},
  pages 123--135.

\bibitem[Fu et~al., 2015]{FILS15}
Fu, H., Immorlica, N., Lucier, B., and Strack, P. (2015).
\newblock Randomization beats second price as a prior-independent auction.
\newblock In {\em Proceedings of the Sixteenth {ACM} Conference on Economics
  and Computation, {EC} '15, Portland, OR, USA, June 15-19, 2015}, page 323.

\bibitem[Goldberg et~al., 2001]{GHW01}
Goldberg, A.~V., Hartline, J.~D., and Wright, A. (2001).
\newblock Competitive auctions and digital goods.
\newblock In {\em Proceedings of the Twelfth Annual Symposium on Discrete
  Algorithms, January 7-9, 2001, Washington, DC, {USA.}}, pages 735--744.

\bibitem[Guruswami et~al., 2005]{GHKKKM05}
Guruswami, V., Hartline, J.~D., Karlin, A.~R., Kempe, D., Kenyon, C., and
  McSherry, F. (2005).
\newblock On profit-maximizing envy-free pricing.
\newblock In {\em Proceedings of the Sixteenth Annual {ACM-SIAM} Symposium on
  Discrete Algorithms, {SODA} 2005, Vancouver, British Columbia, Canada,
  January 23-25, 2005}, pages 1164--1173.

\bibitem[Hajiaghayi et~al., 2007]{HKS07}
Hajiaghayi, M.~T., Kleinberg, R.~D., and Sandholm, T. (2007).
\newblock Automated online mechanism design and prophet inequalities.
\newblock In {\em Proceedings of the Twenty-Second {AAAI} Conference on
  Artificial Intelligence, July 22-26, 2007, Vancouver, British Columbia,
  Canada}, pages 58--65.

\bibitem[Hart and Nisan, 2012]{HN12}
Hart, S. and Nisan, N. (2012).
\newblock Approximate revenue maximization with multiple items.
\newblock In {\em {ACM} Conference on Electronic Commerce, {EC} '12, Valencia,
  Spain, June 4-8, 2012}, page 656.

\bibitem[Hartline, 2013]{H13}
Hartline, J.~D. (2013).
\newblock Mechanism design and approximation.
\newblock {\em Book draft. October}, 122.

\bibitem[Hartline and Roughgarden, 2009]{HR09}
Hartline, J.~D. and Roughgarden, T. (2009).
\newblock Simple versus optimal mechanisms.
\newblock In {\em Proceedings 10th {ACM} Conference on Electronic Commerce
  (EC-2009), Stanford, California, USA, July 6--10, 2009}, pages 225--234.

\bibitem[Hill et~al., 1982]{HK82}
Hill, T.~P., Kertz, R.~P., et~al. (1982).
\newblock Comparisons of stop rule and supremum expectations of iid random
  variables.
\newblock {\em The Annals of Probability}, 10(2):336--345.

\bibitem[Jin et~al., 2019a]{JLQTX2019}
Jin, Y., Lu, P., Qi, Q., Tang, Z.~G., and Xiao, T. (2019a).
\newblock Tight approximation ratio of anonymous pricing.
\newblock In {\em Proceedings of the 51st Annual {ACM} {SIGACT} Symposium on
  Theory of Computing, {STOC} 2019, Phoenix, AZ, USA, June 23-26, 2019}, pages
  674--685.

\bibitem[Jin et~al., 2019b]{JLTX19}
Jin, Y., Lu, P., Tang, Z.~G., and Xiao, T. (2019b).
\newblock Tight revenue gaps among simple mechanisms.
\newblock In {\em Proceedings of the Thirtieth Annual {ACM-SIAM} Symposium on
  Discrete Algorithms, {SODA} 2019, San Diego, California, USA, January 6-9,
  2019}, pages 209--228.

\bibitem[Kertz, 1986]{K86}
Kertz, R.~P. (1986).
\newblock Stop rule and supremum expectations of iid random variables: a
  complete comparison by conjugate duality.
\newblock {\em Journal of Multivariate Analysis}, 19(1):88--112.

\bibitem[Kleinberg and Weinberg, 2012]{KW12}
Kleinberg, R. and Weinberg, S.~M. (2012).
\newblock Matroid prophet inequalities.
\newblock In {\em Proceedings of the 44th Symposium on Theory of Computing
  Conference, {STOC} 2012, New York, NY, USA, May 19 - 22, 2012}, pages
  123--136.

\bibitem[Koutsoupias and Pierrakos, 2013]{KP13}
Koutsoupias, E. and Pierrakos, G. (2013).
\newblock On the competitive ratio of online sampling auctions.
\newblock {\em {ACM} Trans. Economics and Comput.}, 1(2):10:1--10:10.

\bibitem[Krengel and Sucheston, 1977]{KS77}
Krengel, U. and Sucheston, L. (1977).
\newblock Semiamarts and finite values.
\newblock {\em Bulletin of the American Mathematical Society}, 83(4):745--747.

\bibitem[Krengel and Sucheston, 1978]{KS78}
Krengel, U. and Sucheston, L. (1978).
\newblock On semiamarts, amarts, and processes with finite value.
\newblock {\em Advances in Prob}, 4:197--266.

\bibitem[Li and Yao, 2013]{LY13}
Li, X. and Yao, A. C.-C. (2013).
\newblock On revenue maximization for selling multiple independently
  distributed items.
\newblock {\em Proceedings of the National Academy of Sciences},
  110(28):11232--11237.

\bibitem[Lucier, 2017]{L17}
Lucier, B. (2017).
\newblock An economic view of prophet inequalities.
\newblock {\em SIGecom Exchanges}, 16(1):24--47.

\bibitem[Myerson, 1981]{M81}
Myerson, R.~B. (1981).
\newblock Optimal auction design.
\newblock {\em Math. Oper. Res.}, 6(1):58--73.

\bibitem[Papadimitriou and Pierrakos, 2015]{PP15}
Papadimitriou, C.~H. and Pierrakos, G. (2015).
\newblock Optimal deterministic auctions with correlated priors.
\newblock {\em Games and Economic Behavior}, 92:430--454.

\bibitem[Ronen, 2001]{R01}
Ronen, A. (2001).
\newblock On approximating optimal auctions.
\newblock In {\em Proceedings of the 3rd ACM conference on Electronic
  Commerce}, pages 11--17. ACM.

\bibitem[Rubinstein, 2016]{R16}
Rubinstein, A. (2016).
\newblock Beyond matroids: secretary problem and prophet inequality with
  general constraints.
\newblock In {\em Proceedings of the 48th Annual {ACM} {SIGACT} Symposium on
  Theory of Computing, {STOC} 2016, Cambridge, MA, USA, June 18-21, 2016},
  pages 324--332.

\bibitem[Rubinstein and Singla, 2017]{RS17}
Rubinstein, A. and Singla, S. (2017).
\newblock Combinatorial prophet inequalities.
\newblock In {\em Proceedings of the Twenty-Eighth Annual {ACM-SIAM} Symposium
  on Discrete Algorithms, {SODA} 2017, Barcelona, Spain, Hotel Porta Fira,
  January 16-19}, pages 1671--1687.

\bibitem[Rubinstein and Weinberg, 2015]{RW15}
Rubinstein, A. and Weinberg, S.~M. (2015).
\newblock Simple mechanisms for a subadditive buyer and applications to revenue
  monotonicity.
\newblock In {\em Proceedings of the Sixteenth {ACM} Conference on Economics
  and Computation, {EC} '15, Portland, OR, USA, June 15-19, 2015}, pages
  377--394.

\bibitem[Yan, 2011]{Y11}
Yan, Q. (2011).
\newblock Mechanism design via correlation gap.
\newblock In {\em Proceedings of the Twenty-Second Annual {ACM-SIAM} Symposium
  on Discrete Algorithms, {SODA} 2011, San Francisco, California, USA, January
  23-25, 2011}, pages 710--719.

\bibitem[Yao, 2015]{Y15}
Yao, A.~C. (2015).
\newblock An \emph{n}-to-1 bidder reduction for multi-item auctions and its
  applications.
\newblock In {\em Proceedings of the Twenty-Sixth Annual {ACM-SIAM} Symposium
  on Discrete Algorithms, {SODA} 2015, San Diego, CA, USA, January 4-6, 2015},
  pages 92--109.

\end{thebibliography}

\end{document}